\newcommand{\MHELastName}{Escard\'{o}}
\newcommand{\TDJLastName}{de~Jong}
\newcommand{\MHEName}{%
  \texorpdfstring{Mart\'in H.\ Escard\'o}{Martin\ H.\ Escardo}%
}
\newcommand{\IALastName}{Arrieta}
\newcommand{\IAName}{Igor\ \IALastName{}}
\newcommand{\ATLastName}{Tosun}
\newcommand{\ATName}{Ayberk\ \ATLastName{}}
\newcommand{\define}[1]{\emph{#1}}
\newcommand{\paren}[1]{\left(#1\right)}
\newcommand{\UU}{\mathcal{U}}
\newcommand{\VV}{\mathcal{V}}
\newcommand{\WW}{\mathcal{W}}
\newcommand{\USucc}[1]{{#1}^+}
\newcommand{\TOmit}[1]{}
\newcommand{\Spec}{\mathbf{Spec}}
\newcommand{\Stone}{\mathbf{Stone}}
\newcommand{\Patch}{\mathsf{Patch}}
\newcommand{\PatchTEXorPDF}{\texorpdfstring{$\Patch$}{\Patch}}
\newcommand{\is}{\vcentcolon\equiv}
\newcommand{\EmptyTy}{\mathbf{0}}
\newcommand{\UnitTy}{\mathbf{1}}
\newcommand{\NatTy}{\mathbb{N}}
\newcommand{\ListTy}[1]{\mathsf{List}(#1)}
\newcommand{\IsPropNm}{\mathsf{isProp}}
\newcommand{\IsProp}[1]{\IsPropNm\left(#1\right)}
\newcommand{\HProp}[1]{\Omega_{#1}}
\newcommand{\TruncTy}[1]{\left\| #1 \right\|}
\newcommand{\TruncTm}[1]{\left| #1 \right|}
\newcommand\ScaleExists[1]{\vcenter{\hbox{\scalefont{#1}$\exists$}}}
\DeclareMathOperator*\bigexists{%
  \vphantom\sum
  \mathchoice{\ScaleExists{1.95}}{\ScaleExists{1.55}}{\ScaleExists{1}}{\ScaleExists{0.9}}}
\newcommand{\FamNm}{\hyperref[defn:family]{{\color{black}\mathsf{Fam}}}}
\newcommand{\Fam}[2]{\FamNm_{#1}\left(#2\right)}
\newcommand{\Pair}[2]{(#1, #2)}
\newcommand{\EmbFam}[2]{\FamNm^{\hookrightarrow}_{#1}\left(#2\right)}
\newcommand{\Frm}{\mathbf{Frm}}
\newcommand{\Loc}{\mathbf{Loc}}
\newcommand{\opposite}[1]{{#1}^{\mathsf{op}}}
\newcommand{\opens}[1]{\mathcal{O}(#1)}
\DeclareMathOperator{\WayBelow}{\hyperref[defn:way-below]{{\color{black}\ll}}}
\DeclareMathOperator{\WellInside}{\prec}
\newcommand{\lub}{\sqcup}
\newcommand{\IdTy}[2]{#1 = #2}
\newcommand{\PiTySym}{\prod}
\newcommand{\PiTy}[3]{%
  \PiTySym_{#1 : #2}{#3}
}
\newcommand{\SigmaTypeSym}{\sum}
\newcommand{\projI}{\pi_1}
\newcommand{\projII}{\pi_2}
\newcommand{\image}[2]{\mathsf{image}(#1)}
\newcommand{\SigmaType}[3]{%
  \SigmaTypeSym_{#1 : #2}{#3}
}
\newcommand{\OneSym}{\mathbf{1}}
\newcommand{\One}[1]{\OneSym_{#1}}
\newcommand{\ZeroSym}{\mathbf{0}}
\newcommand{\Zero}[1]{\ZeroSym_{#1}}
\newcommand{\EquivSym}{\simeq}
\newcommand{\Equiv}[2]{#1 \simeq #2}
\newcommand{\FormalTopologyCitations}{%
  \citep{%
    int-formal-spaces, cssv-inductively-generated, coq-tosun-book-chapter%
  }%
}
\newcommand{\IsCompact}[1]{#1\ \text{\rm is compact}}
\newcommand{\closednucl}[1]{\mathbf{c}(#1)}
\newcommand{\opennucl}[1]{\mathbf{o}(#1)}
\newcommand{\VDirected}{\hyperref[defn:directed]{{\color{black}directed}}}
\newcommand{\VCompact}{{\color{black}compact}}
\newcommand{\VCompactOpen}{{\color{black}compact open}}
\newcommand{\VCompactLocale}{%
  \hyperref[defn:compact-locale]{{\color{black}compact locale}}
}
\newcommand{\VClopen}{clopen}
\newcommand{\VSpectralLocale}{%
  \hyperref[defn:spectral-locale]{{\color{black}spectral locale}}%
}
\newcommand{\CompactOpens}[1]{%
  {\color{black}\mathsf{K}}(#1)%
}
\newcommand{\Clopens}[1]{%
  \mathsf{C}(#1)
}
\newcommand{\VResizing}[2]{%
  $({#1}, {#2})${\color{black}-resizing}%
}
\newcommand{\VSmall}{\hyperref[defn:vsmall]{{\color{black}small}}}
\newcommand{\VVSmall}[1]{$#1$\hyperref[defn:vsmall]{{\color{black}-small}}}
\newcommand{\VHoTTUF}{\textsf{HoTT/UF}}
\newcommand{\RTwo}{{\MakeUppercase {\romannumeral 2}}}
\newcommand{\RSeven}{{\MakeUppercase {\romannumeral 7}}}
\newcommand{\VAgda}{\textsc{Agda}}
\newcommand{\VEmbedding}{\hyperref[defn:embedding]{{\color{black}embedding}}}
\newcommand{\VTypeTopology}{\textsc{TypeTopology}}
\newcommand{\VScottContinuous}{Scott continuous}
\newcommand{\VNucleus}{nucleus}
\newcommand{\VNuclei}{nuclei}
\newcommand{\VSCNucleus}{\VScottContinuous{} \VNucleus{}}
\newcommand{\VSCNuclei}{\VScottContinuous{} \VNuclei{}}
\newcommand{\VSplitSupport}{split support}
\newcommand{\VSpectral}{\hyperref[defn:spectral-locale]{{\color{black}spectral}}}
\newcommand{\VStone}{Stone}
\newcommand{\VWellInside}{well inside}
\newcommand{\FamEnum}[3]{(#3)_{#1 : #2}}
\newcommand{\emptyl}{\varepsilon}
\newcommand{\cons}{}
\newcommand{\append}{}
\newcommand{\ddnm}{k}
\newcommand{\dd}[1]{\ddnm_{#1}}
\theoremstyle{plain}
\newtheorem{theorem}{Theorem}[section]
\newtheorem{lemma}[theorem]{Lemma}
\newtheorem{corollary}[theorem]{Lemma}
\theoremstyle{definition}
\newtheorem{definition}[theorem]{Definition}
\newtheorem{remark}[theorem]{Remark}
\newtheorem{convention}[theorem]{Convention}
\newtheorem{convention}[therm]{Convention}
\mathchardef\mhyphen="2D
\newcommand{\blank}{\mhyphen}
\begin{document}

\ifthenelse{\boolean{arxiv-version}}{

\title[The Patch Topology in Univalent Foundations]{The Patch Topology\\ in Univalent Foundations}

}{%

\lefttitle{The Patch Topology in Univalent Foundations}
\righttitle{Mathematical Structures in Computer Science}
\papertitle{Article}
\jnlPage{1}{00}
\jnlDoiYr{2024}
\doival{10.1017/xxxxx}
\title{The Patch Topology in Univalent Foundations}

}

\begin{abstract}
  Stone locales together with continuous maps form a coreflective subcategory of
  spectral locales and perfect maps. A proof in the internal language of an
  elementary topos was previously given by the second-named author. This proof can
  be easily translated to univalent type theory using \emph{resizing axioms}. In
  this work, we show how to achieve such a translation \emph{without} resizing
  axioms, by working with large and locally small frames with small bases. This
  requires predicative reformulations of several fundamental concepts of locale
  theory in predicative \VHoTTUF{}, which we investigate systematically.
\end{abstract}

\ifthenelse{\boolean{arxiv-version}}{
\keywords{
  Univalent Foundations,
  Constructive Mathematics,
  Pointfree Topology,
  Locale Theory,
  Patch Topology,
  Spectral Locale
}

\author[\IALastName{}]{\IAName{}}
\author[\MHELastName{}]{\MHEName{}}
\author[\ATLastName{}]{\ATName{}}
}{%
\begin{authgrp}
\author{\IAName{}}
\affiliation{%
  University of Birmingham\\
  \email{i.arrietatorres@bham.ac.uk}%
}
\author{\MHEName{}}
\affiliation{%
  University of Birmingham\\
  \email{m.escardo@bham.ac.uk}
}%
\author{~\ATName{}}
\affiliation{%
  University of Birmingham\\
  \email{a.tosun@pgr.bham.ac.uk}%
}%
\end{authgrp}
\history{(Received xx xxx xxx; revised xx xxx xxx; accepted xx xxx xxx)}
}

\ifthenelse{\boolean{arxiv-version}}{}{%
\begin{keywords}
  univalent foundations;
  constructive mathematics;
  pointfree topology;
  locale theory;
  patch topology.
\end{keywords}
}

\maketitle

\section{Introduction}\label{sec:intro}

The category $\Stone$ of Stone locales together with continuous maps forms a
coreflective subcategory of the category $\Spec$ of spectral locales and
\emph{perfect} maps. A proof in the internal
language of an elementary topos was previously constructed by
\cite{mhe-patch-short, mhe-patch-full}, defining the patch frame as the frame of
Scott continuous nuclei on a given frame.

In the present work, we show that the same construction can be carried out in
predicative and constructive univalent foundations. In the presence of
Voevodsky's~\citeyearpar{voevodsky-resizing} resizing axioms, it is
straightforward to translate this proof to univalent type theory. At the time of
writing, however, there is no known computational interpretation of the resizing
axioms. Therefore, the question of whether resizing can be avoided in the
construction of the patch locale is of interest.
In such a predicative situation, the usual approach to locale
theory is to work with presentations of locales, known as \emph{formal
topologies}~\FormalTopologyCitations{}. We show in our work, however, that it is
possible to work with locales directly. This requires a number of
modifications to the proofs and constructions of~\cite{mhe-patch-short,
mhe-patch-full}:
\begin{enumerate}
  \item The major modification is that we work with large and locally small
    frames with small bases. The reason for this is that, in the absence of
    propositional resizing, there are no non-trivial examples of
    frames~\citep{tdj-mhe-aspects}. See Section~\ref{subsec:definition-of-locale}
    for further discussion.
  \item The patch frame is defined as the frame of \VScottContinuous{} nuclei by
    \cite{mhe-patch-short, mhe-patch-full}. In order to prove that this is indeed a
    frame, one starts with the frame of all nuclei, and then exhibits the
    \VScottContinuous{} nuclei as a subframe. This procedure, however, does not seem
    to be possible in the context of our work, as it is not clear whether all nuclei
    can be shown to form a frame predicatively; so we construct the frame of
    \VScottContinuous{} nuclei \emph{directly}, which requires predicative
    reformulations of all proofs about
    it inherited from the frame of all nuclei.
  \item\label{item:aft} In the impredicative setting, any frame has all Heyting
    implications, which are needed to construct open nuclei. Again, this does not
    seem to be the case in the predicative setting. We show, however, that it is
    possible to construct Heyting implications in large and locally small frames
    with small bases, by an application of the posetal instance of the Adjoint
    Functor Theorem.
  \item Similar to (\ref{item:aft}), we use the posetal Adjoint Functor Theorem
    to define the right adjoint of a frame homomorphism, which we then use to
    define the notion of a \emph{perfect map}, namely, a map whose defining frame
    homomorphism's right adjoint is \VScottContinuous{}. This notion is used by
    \MHELastName{} in his impredicative proof
    \citeyearpar{mhe-patch-short, mhe-patch-full}.
 % \item In our proof of the universal property, we make use of the fact that the
 %   patch locale is locally small, which depends on the existence of a small basis
 %   on the spectral locale that we start with.
\end{enumerate}

For the purposes of our work, a \emph{spectral locale} is a locale in which any
open can be expressed as the join of a family of compact opens. A continuous map
of spectral locales is \emph{spectral} if its defining frame homomorphism
preserves compact opens. A \emph{Stone locale} is one that is compact and in
which any open is expressible as the join of a family of clopens. Every Stone
locale is spectral since the clopens coincide with the compact opens in Stone
locales. The patch frame construction is the right adjoint to the inclusion
$\Stone \hookrightarrow \Spec$. The main contribution of our work is the construction of this
right adjoint in the predicative context of univalent type theory. We have also
formalized~\citep{type-topology-locale-theory} the development of this paper in
the \VAgda{} proof assistant~\citep{norell-agda} as part of the \VTypeTopology{}
library~\citep{type-topology}. Our presentation here, however, is self-contained
and can be followed independently of the formalization.

The organization of this paper is as follows. In Section~\ref{sec:foundations},
we present the type-theoretical context in which we work. In
Section~\ref{subsec:definition-of-locale}, we introduce our notion of locale
discussed above. In Section~\ref{sec:spec-and-stone}, we present our definitions
of spectral and Stone locales that provide a suitable basis for a predicative
development. In Section~\ref{sec:aft}, we present the posetal instance of the
Adjoint Functor Theorem for the simplified context of locales that is central to
our development. In Section~\ref{sec:meet-semilattice}, we define the
meet-semilattice of perfect nuclei as preparation for the complete lattice of
perfect nuclei, which we then construct in Section~\ref{sec:joins}. Finally in
Section~\ref{sec:coreflection}, we prove the desired universal property, namely,
that the patch locale exhibits the category $\Stone$ as a coreflective
subcategory of $\Spec$, where we restrict ourselves to locales with small bases.

Finally, we note that a preliminary version of the work that we present here
previously appeared in \citep{patch-mfps}. Our work here extends
\emph{loc.~cit.}\ in several directions. The presentation that we provide
through Theorem~\ref{thm:answer} and Theorem~\ref{thm:answer-stone} here
involves new results. Furthermore, we provide a complete proof of the universal
property of $\Patch$ through a new approach, as given in
Theorem~\ref{thm:patch-is-stone} and Theorem~\ref{thm:main}.

\section{Foundations}
\label{sec:foundations}

In this section, we introduce the type-theoretical setting in which we work and
then present the type-theoretical formulations of some of the preliminary
notions that form the basis of our work. Our type-theoretical conventions follow
those of de~Jong and \MHELastName{}~\citeyearpar{tdj-mhe-aspects, tdj-thesis}
and \cite{ufp-hottbook-2013}.

We work in Martin-L\"of Type Theory with binary sums $(\blank) + (\blank)$,
dependent products $\PiTySym$, dependent sums $\SigmaTypeSym$, the identity type
$(\blank) = (\blank)$, and inductive types including the empty type $\EmptyTy$,
the unit type $\UnitTy$, and the type $\ListTy{A}$ of lists, or \emph{words},
over any type $A$. We denote by $\projI$ and $\projII$ the first and the second
projections of a $\SigmaTypeSym$ type. We adhere to the convention of the HoTT
Book~\citeyearpar{ufp-hottbook-2013} of using $(\blank) \equiv (\blank)$ for
judgemental equality and $(\blank) = (\blank)$ for the identity type.

We work explicitly with universes, for which we adopt the convention of using
the variables $\UU, \VV, \WW, \ldots$ The ground universe is denoted $\UU_0$ and the
successor of a given universe $\UU$ is denoted $\USucc{\UU}$. The least upper
bound of two universes is given by the operator $(\blank) \sqcup (\blank)$ which is
assumed to be associative, commutative, and idempotent. We do not assume that
the universes are, or are not, cumulative. Furthermore, $\USucc{(\blank)}$ is
assumed to distribute over $(\blank) \sqcup (\blank)$. Universes are computed for the
given type formers as follows:

\begin{itemize}
  \item Given types $X : \UU$ and $Y : \VV$, the type $X + Y$ inhabits universe
    $\UU \sqcup \VV$.
  \item Given a type $X : \UU$ and an $X$-indexed family, $Y : X \to \VV$, both
    $\sum_{x : X}Y(x)$ and $\prod_{x : X}Y(x)$ inhabit the universe $\UU \sqcup \VV$.
  \item Given a type $X : \UU$ and inhabitants $x, y : X$, the identity type $x
    = y$ inhabits universe $\UU$.
  \item The type $\NatTy$ of natural numbers inhabits $\UU_0$.
  \item The empty type $\EmptyTy$ and the unit type $\UnitTy$ have copies in
    every universe $\UU$, which we occasionally make explicit using the
    notations $\EmptyTy_{\UU}$ and $\UnitTy_{\UU}$.
  \item Given a type $A : \UU$, the type $\ListTy{A}$ inhabits $\UU$.
\end{itemize}

We assume the univalence axiom and therefore function extensionality and
propositional extensionality. We maintain a careful distinction between
\emph{structure} and \emph{property}, and reserve logical connectives for
\define{propositional types} i.e.\ types $A$ satisfying $\IsProp{A} \is \PiTy{x,
y}{A}{x = y}$. We denote by $\HProp{\UU}$ the type of propositional types in
universe $\UU$ i.e.\ $\HProp{\UU} \is \SigmaType{A}{\UU}{\IsProp{A}}$. A type
$A$ is called a \define{set} if its identity type is always a proposition
i.e.\ $\IdTy{x}{y}$ is a propositional type, for every $x, y : A$.

We assume the existence of \define{propositional truncation}, given by a type
former $\TruncTy{\blank} : \UU \to \UU$ and a unit operation $\TruncTm{\blank} : A
\to \TruncTy{A}$. The existential quantification operator is defined using
propositional truncation as
$\exists x : A, B(x) \is \TruncTy{\Sigma x : A, B(x)}$.

When presenting proofs informally, we adopt the following conventions for
avoiding ambiguity between propositional and non-propositional types.
\begin{itemize}
  \item For the anonymous inhabitation $\TruncTy{A}$ of a type, we say that $A$
    is inhabited.
  \item For truncated $\SigmaTypeSym$ types, we use the terminologies
    \emph{there is} and \emph{there exists}.
  \item We say \emph{specified inhabitant} of type $A$ to contrast it with the
    anonymous inhabitation $\TruncTy{A}$. Similarly, we say there is a
    \emph{specified} or \emph{chosen} element to emphasize that we are using
    $\SigmaTypeSym$ instead of $\bigexists$.
  \item When we use the phrase \emph{has a}, we take it to mean a
    \emph{specified} inhabitant. In contrast, we say \emph{has some} when talking
    about an unspecified inhabitant. If we want to completely avoid ambiguity,
    we prefer to use the more explicit terminology of
    \emph{has specified} or \emph{has unspecified}.
\end{itemize}

If a given type $A$ is a proposition, it is clear that it is logically
equivalent to its own propositional truncation. The converse, however, is not
always true: there are types that are logically equivalent to their own
propositional truncations, despite not being propositions themselves. Types that
satisfy this more general condition of being logically equivalent to their own
truncations are said to have \emph{split support}, and have previously been
investigated by~\cite{keca-anonymous}.

%\subsection{Notions of size}\label{subsec:size}

%We start by defining the notion of $\VV$-smallness~\citep{tdj-mhe-aspects}.

\begin{definition}[$\VV$-small type]\label{defn:vsmall}
  A type $A : \UU$ is called \define{$\VV$-small} if it has a copy in universe
  $\VV$. That is to say, $\SigmaType{B}{\VV}{\Equiv{A}{B}}$.
\end{definition}

\begin{lemma}\label{lem:being-small-is-prop} The following are equivalent.
  \begin{enumerate}
    \item For every type $A : \UU$ and universe $\VV$, the property of being
      \VVSmall{\VV}\ is a proposition.
    \item Univalence holds.
  \end{enumerate}
\end{lemma}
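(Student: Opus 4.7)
The plan is to prove each direction separately, using as the crucial bridge the well-known reformulation of univalence as: for every universe $\VV$ and every $A : \VV$, the type $\SigmaType{B}{\VV}{\Equiv{A}{B}}$ of types equivalent to $A$ is contractible (``equivalence singletons are contractible'').

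For the direction (2)$\Rightarrow$(1), I would reason as follows. Let $A : \UU$ and $\VV$ be a universe. I want to show that $\SigmaType{B}{\VV}{\Equiv{A}{B}}$ is a proposition. Since a type is a proposition whenever it is contractible under the assumption that it is inhabited, it suffices to assume a witness $(B_0, e_0) : \SigmaType{B}{\VV}{\Equiv{A}{B}}$ and derive contractibility. Given such $(B_0, e_0)$, post-composition with $e_0^{-1}$ yields an equivalence
\[
  \SigmaType{B}{\VV}{\Equiv{A}{B}} \;\EquivSym\; \SigmaType{B}{\VV}{\Equiv{B_0}{B}},
\]
acting as $(B, e) \mapsto (B, e \circ e_0^{-1})$. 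Since $B_0 : \VV$, univalence for $\VV$ ensures that the right-hand side is contractible, and hence so is the left-hand side.

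For the direction (1)$\Rightarrow$(2), I would instantiate the hypothesis with $\UU \is \VV$. Then for every $A : \VV$, the type $\SigmaType{B}{\VV}{\Equiv{A}{B}}$ of \VVSmall{\VV}\ witnesses for $A$ is a proposition by assumption. But it carries the canonical inhabitant $(A, \mathrm{id}_A)$, so it is contractible. This is precisely the statement that equivalence singletons are contractible in $\VV$, which is one of the standard formulations of univalence for $\VV$. Since $\VV$ is arbitrary, univalence follows.

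The main obstacle is not really technical, but definitional: I must be explicit about which formulation of univalence I invoke, and I would cite the standard equivalence between ``$(A = B) \to (\Equiv{A}{B})$ is an equivalence for all $A, B$'' and ``$\SigmaType{B}{}{\Equiv{A}{B}}$ is contractible for every $A$'', referring to \cite{ufp-hottbook-2013}. Once that reformulation is invoked, both implications are short; the asymmetry is that (2)$\Rightarrow$(1) must handle the case $A : \UU$ with possibly $\UU \ne \VV$, whereas (1)$\Rightarrow$(2) only needs the diagonal case $\UU \is \VV$ to unlock the equivalence-singleton formulation of univalence.
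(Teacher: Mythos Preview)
The paper states this lemma without proof, presumably treating it as a known fact from the literature (it appears, for instance, in the work of de~Jong and \MHELastName{} cited in the paper). There is therefore no proof in the paper to compare against.

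Your argument is correct and is the standard one. Both directions hinge on the reformulation of univalence as contractibility of equivalence singletons, which you cite appropriately. The only comment worth making is that in the $(2)\Rightarrow(1)$ direction you should be explicit that the map $(B,e)\mapsto(B,e\circ e_0^{-1})$ is a fibrewise equivalence over $\VV$ (each fibre map is pre-composition with an equivalence, hence an equivalence), so that the total map is an equivalence of $\Sigma$-types; this is routine but worth a sentence. Otherwise the proof is complete.
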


We also have the notion of \emph{local smallness} which refers to the equality
type always being small with respect to some universe.

\begin{definition}[Local $\VV$-smallness]\label{defn:local-smallness}
  A type $X : \UU$ is called \define{locally $\VV$-small} if the identity
  type $\IdTy{x}{y}$ is \VVSmall{\VV} for every $x, y : X$.
\end{definition}

We have mentioned the \emph{propositional resizing} axiom in
Section~\ref{sec:intro}. Having formally defined the notion of $\VV$-smallness,
we can now give the precise definition of this axiom.

\begin{definition}
  The \define{propositional $(\UU, \VV)$-resizing axiom} says that any
  proposition $P : \HProp{\UU}$ is \VVSmall{\VV}.
  The \define{global propositional resizing} axiom says that, for any two
  universes $\UU$ and $\VV$, the propositional
  \VResizing{\UU}{\VV} axiom holds.
\end{definition}

\begin{definition}[The set replacement axiom]\label{defn:sr}
  Let $f : X \to Y$ be a function from a type $X : \UU$ into a set $Y : \VV$.
  The \define{set replacement axiom} says that if $X$ is \VVSmall{\UU'} and the
  type $Y$ is locally $\WW$-small then $\image{f}{X}$ is \VVSmall{(\UU' \lub
  \WW)}, where
  \begin{equation*}
    \image{f}{X} \is \SigmaType{y}{Y}{\bigexists_{x : X} \IdTy{f(x)}{y}}.
  \end{equation*}
\end{definition}

\begin{remark}\label{rmk:sr-special-case}
  Notice that we obtain the following as a special case of the set replacement
  axiom: given a locally $\UU$-small set $X' : \UU^+$, the image of any function
  $f : X \to X'$ is $\UU$-small whenever the type $X$ is $\UU$-small.
\end{remark}

%\subsection{Directed families}

We now proceed to define, in the presented type-theoretical setting, some
preliminary notions that are fundamental to our development of locale theory.

\begin{definition}[Family]\label{defn:family}
  A \define{$\UU$-family on a type} $A$ is a pair $\Pair{I}{\alpha}$
  where $I : \UU$ and $\alpha : I \to A$.
  We denote the type of $\UU$-families on type $A$ by $\Fam{\UU}{A}$
  i.e.\ $\Fam{\UU}{A} \is \SigmaType{I}{\UU}{I \to A}$.
\end{definition}

\begin{convention}
  We often use the shorthand $\FamEnum{i}{I}{x}$ for families. In other words,
  instead of writing $(I, x)$ for a family, we write $\FamEnum{i}{I}{x_i}$ where
  $x_i$ denotes the application $x(i)$. Given a family $(I, \alpha)$, a subfamily
  of it is a family of the form $(J, \alpha \circ \beta)$ where $(J, \beta)$ is a family on the
  index type $I$. When talking about a subfamily of some family
  $\FamEnum{i}{I}{x_i}$, we use the notation $\FamEnum{j}{J}{x_{i_j}}$ to denote
  the subfamily given by a family $\FamEnum{j}{J}{i_j}$
\end{convention}

We will also be talking about families $(I, \alpha)$ where the function $\alpha$ is an
\emph{embedding}. We refer to such families as \emph{embedding families}.

\begin{definition}[Embedding]\label{defn:embedding}
  A function $f : X \to Y$ is called an \define{embedding} if, for every $y :
  Y$, the type $\SigmaType{x}{X}{\IdTy{f(x)}{y}}$ is a proposition.
\end{definition}

\begin{definition}\label{defn:emb-family}
  A $\UU$-family $(I, \alpha)$ on a type $A$ is said to be an \define{embedding
  family} if the function $\alpha : I \to A$ is an \VEmbedding{}. We denote the type of
  \define{embedding families} by $\EmbFam{\UU}{A}$.
\end{definition}

\section{Locales} \label{subsec:definition-of-locale}

A \emph{locale} is a notion of space characterized solely by its lattice of
opens. The lattice-theoretic notion abstracting the behaviour of a lattice of
open subsets is a \emph{frame}: a lattice with finite meets and arbitrary joins
in which the binary meets distribute over arbitrary joins.

Our type-theoretic definition of a frame is parameterized by three universes:
(1) for the carrier set, (2) for the order, and (3) for the join-completeness
i.e.\ the index types of families on which the join operation is defined. We
adopt the convention of using the universe variables $\UU$, $\VV$, and $\WW$ for
these respectively.

\begin{definition}[Frame]\label{defn:frame}
  A \define{$(\UU, \VV, \WW)$-frame} $L$ consists of:
  \begin{itemize}
    \item a set $| L | : \UU$,
    \item a partial order $(\blank) \le (\blank) : | L | \to | L | \to \Omega_\VV$,
    \item a top element $\OneSym : | L |$,
    \item an operation $(\blank) \wedge (\blank) : | L | \to | L | \to | L |$ giving the
      greatest lower bound $U \wedge V$ of any two $U, V : | L |$,
    \item an operation $\bigvee(\blank) : \Fam{\WW}{| L |} \to | L |$ giving the least upper
      bound $\bigvee (I, \alpha)$ of any $\WW$-family $(I, \alpha)$,
  \end{itemize}
  such that binary meets distribute over arbitrary joins, i.e.
  $U \wedge \bigvee_{i : I} x_{i} = \bigvee_{i : I} U \wedge x_{i}$
  for every $x : | L |$ and $\WW$-family $\FamEnum{i}{I}{x_i}$ in $| L |$.
\end{definition}
As we have done in the definition above, we adopt the shorthand notation
$\bigvee_{i : I} x_i$ for the join of a family $\FamEnum{i}{I}{x_i}$.
We also adopt the usual abuse of notation and write $L$ instead of $| L |$.

The reader might have noticed that we have not imposed a sethood condition on
the underlying type of a frame in this definition. The reason for this is that
it follows automatically from the antisymmetry condition for partial orders that
the underlying type of a frame is a set.

% Our use of the term \emph{locally small} here is ostensibly different from the
% one that we gave in Definition~\ref{defn:local-smallness}. Provided that the
% order in consideration is antisymmetric, which it \emph{is} in the case of a
% frame, the two notions in fact coincide and can hence be used interchangeably.

\begin{lemma}\label{prop:local-smallness-equiv}
  Let $L$ be a $(\UU^+, \VV, \UU)$-frame and let $x, y : L$. We have that $x \le
  y$ is \VVSmall{\UU} if and only if the carrier of $L$ is a locally $\UU$-small type.
\end{lemma}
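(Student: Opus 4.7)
The plan is to use antisymmetry for the forward direction and the standard characterization of the order via binary meets for the reverse direction. Throughout, one should remember that $x \le y$ is a proposition in $\HProp{\VV}$, that $x = y$ is a proposition because $L$ is a set (which, as the paper already notes, follows from antisymmetry), and that for propositions $P$ and $Q$ a logical equivalence $P \leftrightarrow Q$ upgrades to a type-theoretic equivalence $P \simeq Q$, so smallness transfers along such logical equivalences.

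For the forward direction, I will assume that $x \le y$ is \VVSmall{\UU} for all $x, y : L$. Given $x, y : L$, antisymmetry of the partial order gives the logical equivalence $(x = y) \leftrightarrow (x \le y) \times (y \le x)$. Since both sides are propositions, this upgrades to a type equivalence. The right-hand side is a product of two \VVSmall{\UU} types and is therefore \VVSmall{\UU} (using that a $\UU$-small proposition's copy may again be taken in $\UU$, so the product lives in $\UU$). Hence $x = y$ is \VVSmall{\UU}, which is exactly local $\UU$-smallness of the carrier.

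For the reverse direction, I will assume $L$ is locally \VVSmall{\UU} and show that $x \le y$ is \VVSmall{\UU}. Here I use the standard meet-semilattice characterization of the order: $x \le y$ if and only if $x \wedge y = x$. One direction is immediate from $x \wedge y \le x$, and the other follows from monotonicity of meet together with $x \le y$ and $x \le x$. Thus $x \le y$ is logically equivalent to the equality $x \wedge y = x$ in $L$. Both are propositions, so this upgrades to a type equivalence, and the right-hand side is \VVSmall{\UU} by local $\UU$-smallness. Hence $x \le y$ is \VVSmall{\UU}.

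The only genuine subtlety I anticipate is the bookkeeping of universes: one needs to verify carefully that the chosen $\UU$-small copy really delivers a proposition living in $\UU$ rather than just one equivalent to something in $\UU$. Since the paper assumes univalence, smallness is a proposition (Lemma~\ref{lem:being-small-is-prop}), and closure of smallness under logical equivalence of propositions and under products of two small types is standard, so this is routine rather than a real obstacle.
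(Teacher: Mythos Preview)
Your proposal is correct and follows essentially the same approach as the paper: the forward direction uses antisymmetry to express $x = y$ as the conjunction $(x \le y) \wedge (y \le x)$, and the backward direction uses the lattice-theoretic characterization $x \le y \leftrightarrow x \wedge y = x$ to transfer smallness from the identity type. The paper's proof is slightly terser but the ideas are identical.
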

\begin{proof}
  Let $x, y : L$. It is a standard fact of lattice theory that $x \le y \leftrightarrow \IdTy{x
  \wedge y}{x}$. ($\Rightarrow$) If the frame is locally small in the sense that $x \le y$ is
  \VVSmall{\UU} for every $x, y : L$, then the identity type $\IdTy{x}{y}$ must
  also be \VVSmall{\UU} since $\IdTy{x}{y}$ is equivalent to the conjunction $(x \le
  y) \wedge (y \le x)$. ($\Leftarrow$) Conversely, if the carrier of $L$ is a locally small
  type, then $x \le y$ must be \VVSmall{\UU} since it is equivalent to the identity
  type $\IdTy{x \wedge y}{x}$ which is \VVSmall{\UU} by the local smallness of the
  carrier.
\end{proof}

We also note that, in the work of \TDJLastName{} and
\MHELastName{}~\citeyearpar{tdj-mhe-aspects, tdj-mhe-dt, tdj-thesis}, the term
\emph{$\VV$}-dcpo is used for a directed-complete partially ordered set whose
directed joins are over $\VV$-families. This terminology leaves the carrier and
the order implicit as in most cases, the completeness universe is the only one
relevant to the discussion.

We gave a highly general definition of the notion of frame in
Definition~\ref{defn:frame}: all three universes involved in the definition are
permitted to live in separate universes. This generality, however, is never
needed in our work. In fact, we know that locales of a certain form cannot be
constructed predicatively; it follows from a result due to
\cite{tdj-mhe-aspects} that the existence of a nontrivial small locale is
equivalent to a form of propositional resizing that is known to be independent
of type theory --- see~\cite[Theorem 35]{tdj-mhe-aspects} for details. In short,
the carrier set is naturally forced to be large when taking a predicative
approach to locale theory.

In light of this, we restrict attention to $(\UU^+, \UU, \UU)$-frames, which we
refer to as \emph{large and locally small} frames.

\begin{convention}
  From now on, we fix a \emph{base} universe $\UU$, and refer to types that have
  isomorphic copies in $\UU$ as \emph{small types}. In contrast, we refer to types
  in $\USucc{\UU}$ (with no isomorphic copies in $\UU$) as \emph{large types}.
  Accordingly, we hereafter take frame/locale to mean one whose underlying type is
  \emph{large and locally small} i.e.\ a $(\USucc{\UU}, \UU, \UU)$-frame/locale.
\end{convention}

% We now proceed to define the notion of frame homomorphism.

\begin{definition}[Frame homomorphism]\label{defn:frame-homomorphism}
  Let $K$ and $L$ be two locales.
  A function $h : |K| \to |L|$ is called a \define{frame homomorphism} if it
  preserves the top element, binary meets, and joins of small families. We denote
  by $\Frm$ the category of frames and their homomorphisms (over our base universe
  $\UU$).
\end{definition}

We adopt the notational conventions of \cite{sheaves}. A \define{locale} is a
frame considered in the opposite category denoted $\Loc \is \opposite{\Frm}$. To
highlight this, we adopt the standard conventions of (1) using the letters $X,
Y, Z, \ldots$ (or sometimes $A, B, C, \ldots$) for locales and (2) denoting the frame
corresponding to a locale $X$ by $\opens{X}$. For variables that range over the
frame of opens of a locale $X$, we use the letters $U, V, W, \ldots$ We use the
letters $f$ and $g$ for continuous maps $X \to Y$ of locales. A continuous map $f
: X \to Y$ is given by a frame homomorphism $f^* : \opens{Y} \rightarrow \opens{X}$.

\begin{definition}[Nucleus]\label{defn:nucleus}
  A \define{nucleus on a locale} $X$ is an endofunction $j : \opens{X} \to
  \opens{X}$ that is inflationary, idempotent, and preserves binary meets.
\end{definition}

In Section~\ref{sec:joins}, we will work with inflationary and
binary-meet-preserving functions that are not necessarily idempotent. Such
functions are called \define{prenuclei}. We also note that, to show a prenucleus
$j$ to be idempotent, it suffices to show $j(j(U)) \le j(U)$ as the other
direction follows from inflationarity. In fact, the notion of a nucleus could be
defined as a prenucleus satisfying the inequality $j(j(U)) \le j(U)$, but we
define it as in Definition~\ref{defn:nucleus} for the sake of simplicity and
make implicit use of this fact in our proofs of idempotency.

% \subsection{Bases of locales}

In point-set topology, a \emph{basis} for a space $X$ is a collection $\mathcal{B}$ of
subsets of $X$ such that every open $U \in \opens{X}$ can be decomposed into a
union of subsets contained in $\mathcal{B}$. These subsets are called the \emph{basic
opens}. In the context of pointfree topology, the notion of a basis is captured
by the lattice-theoretic notion of a \emph{generating lattice}. In
\citep[p.~548]{elephant-vol-1}, for example, a basis for a locale $\opens{X}$ is
defined as a subset $\mathcal{B} \subseteq \opens{X}$ such that every element of $\opens{X}$ is
expressible as a join of members of $\mathcal{B}$. We now give the formal definition of
this notion in our type-theoretical setting.

\begin{definition}[Directed family]\label{defn:directed}
  Let $\FamEnum{i}{I}{x_i}$ be a family on some type $A$ that is equipped with a
  preorder $(\blank) \le (\blank)$. This family is called \define{directed} if
  \begin{enumerate}
    \item $I$ is inhabited, and
    \item for every $i, j : I$, there exists some $k : I$ such that $x_k$ is an
      upper bound of $\{ x_i, x_j \}$.
  \end{enumerate}
\end{definition}

\begin{definition}[Basis of a frame]\label{defn:ext-basis}\label{defn:int-basis}
  Let $X$ be a locale. A \define{basis} for $X$ is an embedding family
  $\FamEnum{i}{I}{B_i}$
  such that for every $U : \opens{X}$, there is a specified directed small
  family $\FamEnum{j}{J}{i_j}$ on the basis $I$ satisfying
  $U = \bigvee_{j : J} B_{i_j}$.
  We often drop the embedding requirement, in which case we speak of
  \emph{intensional bases}. When we want to emphasize the distinction, we speak
  of \emph{extensional bases} when we require the embedding condition.
\end{definition}

Notice that the embedding requirement says that the family in consideration does
not have repetitions. However, it will often be convenient to work with families
which might have repetitions. Such repetitions can be removed (constructively)
if necessary, by factoring the family through its image. In the case of a
\VSpectralLocale{}, for example, there can be different intensional bases but
the extensional basis is unique.

Even though we do not require a basis to be given by a small family, we always
work with small bases in our work.

\begin{remark}\label{rmk:directification}
  In the definition above, we required covering families to be directed. This
  requirement is not essential since a basis can always be \emph{directified} by
  closing it under finite joins, that is, given a basis $\FamEnum{i}{I}{B_i}$,
  we can define $B^{\uparrow} : \ListTy{I} \to \opens{X}$ as
  \begin{equation*}
    B^{\uparrow}(i_0, \ldots, i_{n-1}) \quad\is\quad B_{i_0} \vee \cdots \vee B_{i_{n-1}},
  \end{equation*}
  which yields an equivalent basis. We nevertheless prefer to include the
  directedness requirement in the definition for technical convenience. Finally,
  we also note that the directification of a small basis gives a small basis.
\end{remark}

\begin{lemma}\label{prop:img-small}
  Given any locale $X$ with some small intensional basis $\FamEnum{i}{I}{B_i}$,
  the image of $B : I \to \opens{X}$ is small.
\end{lemma}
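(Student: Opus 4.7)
The plan is to apply the set replacement axiom, in the special form recorded in Remark~\ref{rmk:sr-special-case}, directly to the basis function $B : I \to \opens{X}$. By our standing convention, $\opens{X}$ is a $(\USucc{\UU}, \UU, \UU)$-frame, so its carrier is a set living in $\USucc{\UU}$ and is locally $\UU$-small; the latter follows from Lemma~\ref{prop:local-smallness-equiv}, since the order relation on $\opens{X}$ takes values in $\HProp{\UU}$. The hypothesis that $I$ is $\UU$-small is immediate from the assumption that the basis is small. Thus all hypotheses of Remark~\ref{rmk:sr-special-case} are met, yielding that $\image{B}{I}$ is $\UU$-small.

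More concretely, I would proceed in the following steps. First, observe that $\opens{X} : \USucc{\UU}$ and that $\opens{X}$ is a set (this follows automatically from antisymmetry, as noted after Definition~\ref{defn:frame}). Next, verify local $\UU$-smallness of $\opens{X}$ by invoking Lemma~\ref{prop:local-smallness-equiv}: because the order of a $(\USucc{\UU}, \UU, \UU)$-frame is $\UU$-valued, the carrier is locally $\UU$-small. Finally, apply Remark~\ref{rmk:sr-special-case} with $X \is I$, $X' \is \opens{X}$, and $f \is B$ to conclude $\UU$-smallness of $\image{B}{I} \is \SigmaType{U}{\opens{X}}{\bigexists_{i : I} \IdTy{B_i}{U}}$.

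There is essentially no obstacle here beyond carefully matching up the assumptions of the set replacement axiom with our conventions on locales; the content of the lemma is that the smallness of the index type together with the local smallness of the frame is enough, via set replacement, to push the (a priori large) image down to a small type. This observation will be used repeatedly in the sequel whenever we need to speak of the \emph{type of basic opens} as a small entity rather than just the index type that enumerates them.
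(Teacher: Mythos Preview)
Your proposal is correct and takes essentially the same approach as the paper: apply set replacement (Remark~\ref{rmk:sr-special-case}) to $B : I \to \opens{X}$, using that the carrier of $\opens{X}$ is locally $\UU$-small by Lemma~\ref{prop:local-smallness-equiv}. The only difference is that the paper first observes that the conclusion is a proposition (via Lemma~\ref{lem:being-small-is-prop}) so as to untruncate the word ``some'' in the hypothesis before proceeding; since the statement already names the basis explicitly, your reading that it is given as specified data is reasonable, and the remainder of the argument coincides with the paper's.
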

\begin{proof}
  Since being small is a propositional type (by
  Lemma~\ref{lem:being-small-is-prop}), we can appeal to the induction
  principle of propositional truncation and assume we are given a \emph{specified}
  intensional small basis $\FamEnum{i}{I}{B_i}$.
  By appealing to the set replacement axiom (Definition~\ref{defn:sr}) as
  explained in Remark~\ref{rmk:sr-special-case}, it just remains to show that the
  carrier of $\opens{X}$ is locally small, which is always the case in a locally
  small locale thanks to Lemma~\ref{prop:local-smallness-equiv}.
\end{proof}

\section{Spectral and Stone Locales}
\label{sec:spec-and-stone}

The standard impredicative definition of a spectral locale is as one in which
the compact opens form a basis closed under finite meets (see \RTwo{}.3.2 from
\citep{ptj-ss}). To talk about compactness, we define the \emph{way-below}
relation.

\begin{definition}[Way-below relation]\label{defn:way-below}
  We say that an open $U$ of a locale $X$ is \define{way~below} an open $V$,
  written $U \WayBelow V$,
  if for every directed family $(W_i)_{i : I}$ with $V \le \bigvee_{i : I} W_{i}$
  there is some $i : I$ with $U \le W_i$.
\end{definition}

\begin{lemma}\label{prop:way-below-prop}
  Given any two opens $U, V : \opens{X}$, the type $U \WayBelow V$ is a
  proposition.
\end{lemma}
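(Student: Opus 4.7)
The plan is to unfold the definition of $U \WayBelow V$ and observe that it is built out of type formers that preserve propositionality. Spelling it out, $U \WayBelow V$ is the type
\begin{equation*}
  \PiTy{(I, W)}{\mathsf{DirFam}_\UU(\opens{X})}{\paren{V \le \bigvee\nolimits_{i : I} W_i} \to \bigexists_{i : I} U \le W_i},
\end{equation*}
where the outer $\Pi$ ranges over small directed families in $\opens{X}$, and where, per the conventions set out in Section~\ref{sec:foundations}, the phrase \emph{there is some} $i : I$ in Definition~\ref{defn:way-below} stands for the propositional-truncation existential $\bigexists$ rather than an untruncated $\SigmaTypeSym$.

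To conclude that the whole display is a proposition, I would invoke the standard closure principles in the order (innermost first): (i) $\bigexists_{i : I} U \le W_i$ is a proposition by construction, as it is a propositional truncation; (ii) the hypothesis $V \le \bigvee_{i : I} W_i$ is a proposition because the order of a frame (and hence of a locale) takes values in $\HProp{\UU}$ by Definition~\ref{defn:frame}, so the implication from it into the existential is a proposition as a function type into a proposition; and (iii) the outer $\Pi$-type is then a proposition by the usual fact that $\PiTySym$-types valued in propositions are again propositions, which follows from function extensionality (a consequence of univalence, which we have assumed).

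There is no real obstacle in this proof; the work is almost entirely definitional. The only subtlety worth flagging is the interpretation of the quantifier in Definition~\ref{defn:way-below}: if one had used an untruncated $\SigmaTypeSym$ rather than $\bigexists$, the conclusion would fail in general, since nothing prevents two distinct indices $i : I$ from both witnessing $U \le W_i$. The convention of Section~\ref{sec:foundations} that \emph{there is some} denotes the truncated existential is precisely what makes the way-below relation a proposition, and therefore a well-behaved relation to quantify over in subsequent definitions (such as that of a compact open).
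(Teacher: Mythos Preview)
Your argument is correct and is exactly the routine unfolding one would expect; the paper in fact states this lemma without proof, so your proposal simply spells out the closure-under-$\Pi$-and-truncation reasoning the authors left implicit. Your remark about the need for the truncated existential is apt and matches the paper's conventions.
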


The statement $U \WayBelow V$ is thought of as expressing that open $U$ is
compact \emph{relative to} open $V$. An open is said to be compact if it is
compact relative to itself:

\begin{definition}[Compact open of a locale]
  An open $U : \opens{X}$ is called \define{compact} if $U \WayBelow U$.
\end{definition}

\subsection{Spectral locales}\label{sec:spectral-locales}

\begin{definition}[Compact locale]\label{defn:compact-locale}
  A locale $X$ is called \define{compact} if the top open $\One{X}$ is
  \VCompact{}.
\end{definition}

We denote by $\CompactOpens{X}$ the type of compact opens of a locale $X$. In
other words, we define
\begin{equation*}\label{defn:type-of-compact-opens}
  \CompactOpens{X} \quad\is\quad \SigmaType{U}{\opens{X}}{\IsCompact{U}}.
\end{equation*}
Since our locales are large and locally small, we have that the type of compact
opens lives in $\UU^+$ i.e.\ is not \emph{a priori} small. A spectral locale,
also sometimes called \emph{coherent} in the
literature~\citep[\RTwo{}.3.2]{ptj-ss}, is defined in impredicative locale
theory as one in which the compact opens (1) are closed under finite meets, and
(2) form a basis. In our formulation of this notion, we capture the same idea
but impose the additional requirement that the type $\CompactOpens{X}$ of
compact opens forms a small type.

\begin{definition}[Spectral locale]\label{defn:spectral-locale}
  A locale $X$ is called \define{spectral} if it satisfies the following
  conditions.
  \begin{enumerate}[label={$\left({\mathsf{SP{\arabic*}}}\right)$}, leftmargin=4em]
    \item It is a \VCompactLocale{} (i.e.\ the empty meet is compact).
      \label{item:sl-compact}
    \item The meet of two \VCompactOpen{}s is a \VCompactOpen.
      \label{item:sl-coherent}
    \item For any $U : \opens{X}$, there exists a small \VDirected{} family
      $\FamEnum{i}{I}{K_i}$, with each $K_i$ \VCompact, such that
      $U = \bigvee_{i : I} K_i$.
      \label{item:sl-covering}
    \item The type $\CompactOpens{X}$ is \VSmall{}.
    \label{item:sl-smallness}
  \end{enumerate}
\end{definition}

The last two conditions are together equivalent to saying that the compact opens
form a small basis, although this is not immediate (Theorem~\ref{thm:answer}).
Notice that in the impredicative notion of \VSpectralLocale{}, everything is
small. One way of understanding the above definition is that we only make the
\emph{carrier} large, but everything else, including the basis, remains small.
This is important for a variety of reasons. Without the smallness condition, it
does not seem possible to (1)~define Heyting implication, and hence open nuclei,
(2)~show that frame homomorphisms have right adjoints, and so define the notion
of perfect frame homomorphism. Additionally, a natural Stone-type duality to
consider is that between small distributive lattices and spectral locales, for
which it becomes absolutely necessary that we work with small bases as in the
above definition. Notice that the same kind of phenomenon occurs when we work
with locally presentable categories, which are large, locally small categories
with a small set of objects that suitably generate all objects by taking
filtered colimits (\cite{adamek_rosicky_1994,elephant-vol-1}).

\begin{lemma}\label{prop:spectral-is-prop}
  For any locale $X$, the statement that $X$ is spectral is a proposition.
\end{lemma}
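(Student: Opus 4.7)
The plan is to verify that each of the four defining clauses \ref{item:sl-compact}--\ref{item:sl-smallness} is a proposition, and then conclude by the standard fact that a $\Sigma$-type (or product) of propositions is a proposition.

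First, I would observe that clause \ref{item:sl-compact} unfolds to $\One{X} \WayBelow \One{X}$, which is a proposition directly by Lemma~\ref{prop:way-below-prop}. For clause \ref{item:sl-coherent}, the statement is a $\Pi$-type quantifying over pairs $U, V : \CompactOpens{X}$ whose codomain asserts that $U \wedge V$ is compact; since compactness $U \WayBelow U$ is itself a proposition by Lemma~\ref{prop:way-below-prop}, and $\Pi$-types preserve being a proposition, clause \ref{item:sl-coherent} is a proposition. Clause \ref{item:sl-covering} is also a $\Pi$-type, quantifying over $U : \opens{X}$, whose codomain is an existential statement $\bigexists_{I,K} \ldots$, and existentials are propositions by construction via propositional truncation.

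For clause \ref{item:sl-smallness}, I appeal to Lemma~\ref{lem:being-small-is-prop}: since we have assumed univalence, the property of being $\UU$-small is a proposition for any type, and in particular for $\CompactOpens{X}$.

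Finally, since each of the four clauses is a proposition, their conjunction (formed as a $\Sigma$-type) is a proposition, completing the proof. I do not expect any obstacles here: the argument is an entirely routine propositionality check, and the only non-trivial ingredient is the smallness-is-a-proposition lemma, which is already established and relies on univalence.
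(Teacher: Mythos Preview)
Your proposal is correct and follows essentially the same approach as the paper: verify that each clause of Definition~\ref{defn:spectral-locale} is a proposition and conclude that their conjunction is a proposition. The paper's proof is terser, citing only Lemma~\ref{prop:way-below-prop} and closure of propositions under $\Pi$; you are in fact more careful in explicitly invoking Lemma~\ref{lem:being-small-is-prop} (hence univalence) for clause~\ref{item:sl-smallness}, which the paper's one-line proof glosses over.
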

\begin{proof}
  Immediate from Lemma~\ref{prop:way-below-prop} and the fact that
  the $\PiTySym$ type over a family of propositions is a proposition.
\end{proof}

The natural notion of morphism between spectral locales is that of a
\emph{spectral map}.

\begin{definition}[Spectral map]\label{defn:spectral-map}
  A continuous map $f : X \to Y$ of spectral locales $X$ and $Y$ is called
  \define{spectral} if it reflects compact opens, that is, the open $f^*(V) :
  \opens{X}$ is compact of $X$ whenever the open $V$ is compact.
\end{definition}

We denote by $\Spec$ the category of spectral locales with spectral maps as the
morphisms. We recall the following useful fact (see e.g.\ \cite{mhe-patch-short}).
%
% We also note that the following well-known fact about spectral locales
% is valid under the definition of spectrality
% that we gave in Definition~\ref{defn:spectral-locale}.

\begin{lemma}\label{prop:spectral-yoneda}
  Let $X$ be a spectral locale and let $U, V : \opens{X}$ be two
  opens.
  \begin{center}
    $U \le V$ \quad if and only if \quad $\PiTy{K}{\opens{X}}{\IsCompact{K} \to K \le U \to K \le V}$.
  \end{center}
\end{lemma}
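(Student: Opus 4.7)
The plan is to prove the two directions separately, with the forward direction being immediate from transitivity and the backward direction being a straightforward application of condition \ref{item:sl-covering} of spectrality.

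For the $(\Rightarrow)$ direction, suppose $U \le V$ and let $K : \opens{X}$ be compact with $K \le U$. Then $K \le V$ follows from transitivity of the order on $\opens{X}$.

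For the $(\Leftarrow)$ direction, suppose that $K \le V$ for every compact $K \le U$. By condition \ref{item:sl-covering} of \VSpectral{}ity, there \emph{exists} a small directed family $\FamEnum{i}{I}{K_i}$ of compact opens with $U = \bigvee_{i : I} K_{i}$. Since the goal $U \le V$ is a proposition (as the order is valued in $\HProp{\UU}$), we may eliminate the propositional truncation and work with a specified such family. For each $i : I$, we have $K_i \le \bigvee_{i : I} K_i = U$, and $K_i$ is compact, so the hypothesis gives $K_i \le V$. Since $V$ is thus an upper bound of $\FamEnum{i}{I}{K_i}$ and the join is the least upper bound, we conclude $U = \bigvee_{i : I} K_i \le V$.

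The only mildly subtle point in the argument is the use of propositional truncation elimination in the backward direction, which is justified precisely because the target $U \le V$ is a proposition in our setting. No obstacle is expected; this lemma is really just the observation that the compact opens form a \emph{join-dense} sub-poset of $\opens{X}$ in a spectral locale, which is the content of \ref{item:sl-covering}.
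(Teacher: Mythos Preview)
Your proof is correct. The paper does not actually give a proof of this lemma; it merely states it as a known fact with a reference to \cite{mhe-patch-short}, so there is nothing to compare against beyond noting that your argument is the standard one.
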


In the context of the definition of spectrality that we gave in
Definition~\ref{defn:spectral-locale}, where we have conditions ensuring that
the compact opens form a small basis, it is natural to ask whether the
basis in consideration unique. This is indeed the case, but it is
subtler than one might expect, and is proved in Theorem~\ref{thm:answer} below. We
first need some preparation.

Given that the conditions in Definition~\ref{defn:spectral-locale} capture the
idea of the compact opens behaving like a small basis closed under finite meets,
one might wonder if the same notion can be formulated by starting with a small
basis closed under finite meets and requiring it to consist of compact opens.

% We
% will call this notion \emph{spectral basis} and its intensional form
% \emph{intensional spectral basis}, following the terminology of
% Definition~\ref{defn:int-basis}.

\begin{definition}[Intensional spectral basis]\label{defn:int-spec-basis}
  An \define{intensional spectral basis} for a locale $X$ is a small
  intensional basis $\FamEnum{i}{I}{B_i}$ for the frame $\opens{X}$
  satisfying the following three conditions.
  \begin{enumerate}
    \item For every $i : I$, the open $B_i$ is \VCompact{}.
    \item There is an index $t : B$ such that $B_t = \One{X}$.
    \item For any two $i, j : I$, there exists some $k : I$ such that
      $B_k = B_i \wedge B_j$.
  \end{enumerate}
  We say that the basis is \emph{extensional} if
  the map $B : I \to \opens{X}$ is an \VEmbedding{}.
  When we say \emph{spectral basis} without any qualification,
  we mean an extensional spectral basis.
\end{definition}

In Remark~\ref{rmk:directification}, we mentioned that a basis can always be
directified by closing it under finite joins. The same applies to spectral bases
as well since the join of a finite family of compact opens is compact.

\begin{lemma}\label{lem:cmp-bsc}
  Given any locale $X$ with an intensional basis $\FamEnum{i}{I}{B_i}$, every
  compact open of $X$ falls in the basis, that is, for every
  \VCompactOpen{} $K$, there exists an index $i : I$ such that $B_i = K$.
\end{lemma}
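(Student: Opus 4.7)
The plan is to use compactness of $K$ together with the basis expansion of $K$ itself: compactness forces $K$ to already be dominated by, and hence equal to, one of the basis elements covering it.

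First I would unfold the hypotheses. Since $\FamEnum{i}{I}{B_i}$ is an intensional basis and $K : \opens{X}$, Definition~\ref{defn:ext-basis} gives a \emph{specified} directed small family $\FamEnum{j}{J}{i_j}$ with indices in $I$ such that
\begin{equation*}
  K = \bigvee_{j : J} B_{i_j}.
\end{equation*}
Next, since $K$ is compact we have $K \WayBelow K$, and the right-hand side above is a directed family whose join is $K$, so Definition~\ref{defn:way-below} yields (as a truncated existence) some $j : J$ with $K \le B_{i_j}$.

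The conclusion we are after is the truncated statement $\exists i : I, B_i = K$, which is a proposition. I can therefore use the induction principle of propositional truncation to pick such a $j$. For this chosen $j$, one direction $K \le B_{i_j}$ is given, while the reverse inequality $B_{i_j} \le K$ holds because $B_{i_j}$ is an element of the family whose join is $K$. By antisymmetry $B_{i_j} = K$, and $i_j : I$ is the desired witness.

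There is no real obstacle here; the only thing to be mindful of is that the existence produced by the way-below relation is propositionally truncated, but since our goal is itself propositional, this poses no problem. No use of directedness of the covering family beyond what is already built into the definition of $\WayBelow$ is required, and no appeal to the embedding condition is needed (so the lemma indeed holds for any intensional basis, as stated).
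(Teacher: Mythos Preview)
Your proposal is correct and follows essentially the same argument as the paper: expand $K$ as a directed join of basic opens, use compactness to obtain some $j$ with $K \le B_{i_j}$, and conclude $B_{i_j} = K$ by antisymmetry. Your explicit remark that the goal is a proposition, so the truncated existence from the way-below relation can be eliminated, is a welcome clarification that the paper leaves implicit.
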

\begin{proof}
  Let $K : \opens{X}$ be a compact open. As $\FamEnum{i}{I}{B_i}$ is an
  intensional basis, there must be a specified directed family
  $\FamEnum{j}{J}{i_j}$ on $I$ such that
  $K = \bigvee_{j : J} B_{i_j}$. By the compactness of $K$, there must be some
  $k : J$ such that $K \le B_{i_k}$.
  Clearly, $B_{i_k} \le K$ is also the case, since $K$ is an upper bound of the
  family $\FamEnum{j}{J}{B_{i_j}}$, which means we have that
  $\IdTy{K}{B_{i_k}}$.
\end{proof}

\begin{corollary}\label{prop:basis-img-equiv}
  Given any locale $X$ with an intensional basis $\FamEnum{i}{I}{B_i}$
  consisting of compact opens, we have an equivalence of types
  $\Equiv{\image{B}{I}}{\CompactOpens{X}}$.
\end{corollary}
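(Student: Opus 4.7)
The plan is to construct the equivalence $\Equiv{\image{B}{I}}{\CompactOpens{X}}$ by comparing second components pointwise. Both types are $\SigmaTypeSym$-types whose first component ranges over $\opens{X}$; the second components are $\bigexists_{i : I} \IdTy{B_i}{U}$ (a proposition by definition of propositional truncation) and $\IsCompact{U}$ (a proposition by Lemma~\ref{prop:way-below-prop}). Hence it suffices to produce, for each $U : \opens{X}$, a logical equivalence between these two predicates, and then lift it to the total $\SigmaTypeSym$-types.

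For the forward direction, suppose $U$ is in $\image{B}{I}$, i.e.\ we have $\TruncTy{\SigmaTypeSym_{i : I}\IdTy{B_i}{U}}$. Since compactness of $U$ is a proposition, the recursion principle of propositional truncation lets us assume a specified pair $(i, e)$ with $\IdTy{B_i}{U}$. By hypothesis, the basis consists of compact opens, so $B_i$ is compact; transporting along $e$ gives compactness of $U$.

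For the backward direction, suppose $U$ is compact. Lemma~\ref{lem:cmp-bsc} applied to the intensional basis $\FamEnum{i}{I}{B_i}$ and the compact open $U$ yields some $i : I$ with $\IdTy{B_i}{U}$, which is precisely a witness that $U$ lies in $\image{B}{I}$.

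Since both predicates are propositions, logical equivalence upgrades to a type-theoretic equivalence of the fibres, and this lifts fibrewise to the desired equivalence $\Equiv{\image{B}{I}}{\CompactOpens{X}}$. I do not anticipate any substantive obstacle here: all the content is already packaged in Lemma~\ref{lem:cmp-bsc}, and the rest is routine manipulation of $\SigmaTypeSym$-types of propositions.
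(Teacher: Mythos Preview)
Your proposal is correct and is precisely the intended argument: the paper states this as a corollary of Lemma~\ref{lem:cmp-bsc} without proof, and your fibrewise logical-equivalence argument is the natural way to spell out that corollary.
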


\begin{lemma}\label{lem:basis-to-spec}
  If a locale $X$ has an unspecified, intensional, spectral, and small basis,
  then $X$ is \VSpectral{}.
\end{lemma}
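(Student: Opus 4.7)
The plan is to strip the propositional truncation on the unspecified basis and then verify the four conditions~\ref{item:sl-compact}--\ref{item:sl-smallness} of Definition~\ref{defn:spectral-locale} one by one. Since being spectral is a proposition by Lemma~\ref{prop:spectral-is-prop}, we may invoke the induction principle of propositional truncation and work with a \emph{specified} intensional spectral small basis $\FamEnum{i}{I}{B_i}$ for the rest of the argument.

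For~\ref{item:sl-compact}, the basis data supplies some $t : I$ with $B_t = \One{X}$, and every basis element is compact, so $\One{X}$ is compact. For~\ref{item:sl-coherent}, let $K_1, K_2 : \opens{X}$ be two compact opens. Since compactness is a proposition (Lemma~\ref{prop:way-below-prop}), we may apply Lemma~\ref{lem:cmp-bsc} twice to obtain indices $i_1, i_2 : I$ with $B_{i_1} = K_1$ and $B_{i_2} = K_2$, and then use the meet-closure clause of Definition~\ref{defn:int-spec-basis} to find some $k : I$ with $B_k = B_{i_1} \wedge B_{i_2} = K_1 \wedge K_2$; as $B_k$ is compact by hypothesis, so is $K_1 \wedge K_2$. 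For~\ref{item:sl-covering}, given any $U : \opens{X}$, the defining property of an intensional basis already furnishes a specified directed small family $\FamEnum{j}{J}{B_{i_j}}$ with $U = \bigvee_{j : J} B_{i_j}$, and each $B_{i_j}$ is compact, as required.

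The only condition requiring a separate smallness argument is~\ref{item:sl-smallness}. For this I would combine Corollary~\ref{prop:basis-img-equiv}, which gives an equivalence $\Equiv{\image{B}{I}}{\CompactOpens{X}}$, with Lemma~\ref{prop:img-small}, which asserts that $\image{B}{I}$ is small whenever $X$ has a small intensional basis. The composition of these two facts shows that $\CompactOpens{X}$ is small, completing the verification of~\ref{item:sl-smallness}.

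The only mildly delicate point is the bookkeeping with propositional truncation: we must justify the initial reduction to a specified basis by appealing to Lemma~\ref{prop:spectral-is-prop}, and within the proof of~\ref{item:sl-coherent} we must note that the conclusion $\IsCompact{K_1 \wedge K_2}$ is itself propositional so that we may eliminate the truncated existential supplied by Lemma~\ref{lem:cmp-bsc}. Once these two observations are in place, the argument is entirely routine, and I do not anticipate any further obstacle.
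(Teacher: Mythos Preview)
Your proposal is correct and follows essentially the same route as the paper: strip the truncation via Lemma~\ref{prop:spectral-is-prop}, verify \ref{item:sl-compact}--\ref{item:sl-covering} directly from the spectral-basis data together with Lemma~\ref{lem:cmp-bsc}, and obtain \ref{item:sl-smallness} from Corollary~\ref{prop:basis-img-equiv} combined with Lemma~\ref{prop:img-small}. You are in fact slightly more careful than the paper in explicitly noting the truncation elimination needed inside the proof of \ref{item:sl-coherent}.
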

\begin{proof}
  First, notice that the conclusion is a proposition since being spectral is a
  proposition (by Lemma~\ref{prop:spectral-is-prop}). This means that we
  may appeal to the induction principle of propositional truncation and assume we
  have a specified intensional spectral basis $\FamEnum{i}{I}{B_i}$. We know that
  $\Equiv{\image{B}{I}}{\CompactOpens{X}}$ by
  Lemma~\ref{prop:basis-img-equiv}, which is to say that the type
  $\CompactOpens{X}$ is small, by the smallness of $\image{B}{I}$
  (given by Lemma~\ref{prop:img-small}) meaning \ref{item:sl-smallness}
  is satisfied.

  The top element is basic and hence compact so \ref{item:sl-compact} holds.
  Given two compact opens $K_1$ and $K_2$, they must be basic meaning there exist
  $k_1, k_2$ such that $K_1 = B_{k_1}$ and $K_2 = B_{k_2}$. Because the basis is
  closed under binary meets there must be some $k_3$ with $B_{k_3} = K_1 \wedge K_2$
  which means condition \ref{item:sl-coherent} also holds.

  For \ref{item:sl-covering}, consider an open $U : \opens{X}$. We know that
  there is a specified a small family $\FamEnum{j}{J}{i_j}$ of indices such that
  $U = \bigvee_{j : J} B_{i_j}$.
  The subfamily $\FamEnum{j}{J}{B_{i_j}}$ is then clearly a small directed
  family with each $B_{i_j}$ \VCompact{} which is what we needed.
\end{proof}

\begin{lemma}\label{lem:spec-gives-basis}
  If a locale $X$ is \VSpectral{}, then it has a specified, extensional, and
  spectral small basis.
\end{lemma}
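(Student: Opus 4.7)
The plan is to construct the desired basis as the inclusion of the (small copy of the) type of compact opens into $\opens{X}$. Concretely, condition \ref{item:sl-smallness} gives a small type $I : \UU$ together with an equivalence $e : I \EquivSym \CompactOpens{X}$, and I would set
\begin{equation*}
  B \quad\is\quad I \xrightarrow{e} \CompactOpens{X} \xrightarrow{\projI} \opens{X},
\end{equation*}
so that $B_i \is \projI(e(i))$ is a compact open for each $i : I$.

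Next I would verify that $B : I \to \opens{X}$ is an embedding. By Lemma~\ref{prop:way-below-prop} the predicate $\IsCompact{U}$ is a proposition, so the first projection $\projI : \CompactOpens{X} \to \opens{X}$ is an embedding; post-composing with the equivalence $e$ preserves this, giving that $B$ is an embedding as required. This is what makes the basis \emph{extensional}.

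For the remaining three defining conditions of an extensional spectral basis: the top element $\One{X}$ is compact by~\ref{item:sl-compact}, so via $e^{-1}$ we obtain a specified index $t : I$ with $B_t = \One{X}$; and for any $i, j : I$ the open $B_i \wedge B_j$ is compact by~\ref{item:sl-coherent}, so $e^{-1}(B_i \wedge B_j, \text{proof})$ supplies a specified $k : I$ with $B_k = B_i \wedge B_j$. It remains to check that $\FamEnum{i}{I}{B_i}$ is indeed a basis in the sense of Definition~\ref{defn:ext-basis}. Given $U : \opens{X}$, condition~\ref{item:sl-covering} yields a specified small directed family $\FamEnum{j}{J}{K_j}$ of compact opens with $U = \bigvee_{j : J} K_j$. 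Transporting each $K_j$ along $e^{-1}$ gives a specified small directed family $\FamEnum{j}{J}{i_j}$ on $I$ with $B_{i_j} = K_j$, whence $U = \bigvee_{j : J} B_{i_j}$.

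I do not expect any serious obstacle: once~\ref{item:sl-smallness} has been invoked to replace $\CompactOpens{X}$ by a small equivalent copy, all four clauses of Definition~\ref{defn:int-spec-basis} and Definition~\ref{defn:ext-basis} follow mechanically from~\ref{item:sl-compact}--\ref{item:sl-covering} by transport along $e$. The only mildly delicate point is checking embeddingness; but this reduces, via the standard fact that $\projI : \SigmaType{U}{\opens{X}}{P(U)} \to \opens{X}$ is an embedding whenever $P$ is pointwise propositional, to Lemma~\ref{prop:way-below-prop}.
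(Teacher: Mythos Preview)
Your proposal has a genuine gap at the covering-family step. You claim that condition \ref{item:sl-covering} ``yields a specified small directed family $\FamEnum{j}{J}{K_j}$ of compact opens with $U = \bigvee_{j : J} K_j$'', but under the paper's conventions the phrase ``there exists'' in \ref{item:sl-covering} denotes \emph{truncated} existence (see the conventions in Section~\ref{sec:foundations}). So \ref{item:sl-covering} only supplies an \emph{unspecified} covering family, and you cannot simply transport it along $e^{-1}$ to obtain the specified family that Definition~\ref{defn:ext-basis} demands. Since the conclusion of the lemma is not a proposition (having a specified basis is genuine structure, as remarked after Theorem~\ref{thm:answer}), truncation elimination does not apply directly here.

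The paper avoids this by constructing the covering family for $U$ \emph{explicitly}: its index type is the subtype $\SigmaType{K}{\CompactOpens{X}}{K \le U}$ of compact opens below $U$, with the first projection as the family map. This is specified by fiat, small by \ref{item:sl-smallness}, and directed because $\Zero{X}$ is compact and binary joins of compact opens are compact. One must then check that $U$ is actually the join of this family: it is clearly an upper bound, and for the least-upper-bound property the paper invokes Lemma~\ref{prop:spectral-yoneda}. It is only inside that argument that \ref{item:sl-covering} is effectively used, and there the goal is a mere inequality---a proposition---so the truncation is harmless.

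The remainder of your argument (embeddingness of $B$ via Lemma~\ref{prop:way-below-prop}, and closure under $\One{X}$ and binary meets via \ref{item:sl-compact} and \ref{item:sl-coherent}) is correct and in line with the paper.
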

\begin{proof}
  Let $X$ be a \VSpectral{} locale. We claim that the inclusion $\CompactOpens{X} \hookrightarrow \opens{X}$, which is formally given by the first projection, is an extensional and spectral small basis. The fact that it is small
  is given by \ref{item:sl-smallness}. By \ref{item:sl-compact} and
  \ref{item:sl-coherent}, we know that this basis contains $\One{X}$ and is closed
  under binary meets. It remains to show that it forms a basis.
  To define a covering family for an open $U : \opens{X}$, we let the index type be the subtype of compact opens below $U$, which is again small by \ref{item:sl-smallness}, and the family is again given by the first projection.
  It is clear that $U$ is an upper bound of this family so it remains to show
  that it is its least upper bound. Consider some $V$ that is an upper bound
  of this family. By Lemma~\ref{prop:spectral-yoneda}, it suffices to show
  $K \le U$ implies $ K \le V$, for every compact open $K$. Any such compact open
  $K \le U$ is below $U$ by construction, which implies $K \le V$ since it
  is an upper bound.
\end{proof}

\begin{definition}[Extensionalization of an intensional basis] \label{def:extensionalization}
  For an intensional basis $\FamEnum{i}{I}{B_i}$ on a locale $X$, its
  \define{extensionalization}, is defined by taking the index set to be the
  $\image{B}{I}$, and the family to be the corestriction of $B$ to its image,
  which is given by the first projection, and hence is an embedding.
\end{definition}

\begin{lemma}
 If $\FamEnum{i}{I}{B_i}$ is a spectral basis, then so is its extensionalization.
\end{lemma}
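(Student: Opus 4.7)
Following Definition~\ref{def:extensionalization}, the extensionalization has index type $J \is \image{B}{I}$ and family $B' \is \projI : J \to \opens{X}$. The plan is to verify, in turn, that $(J, B')$ is an extensional small family and that it satisfies the three algebraic conditions of Definition~\ref{defn:int-spec-basis}, essentially by transporting each such condition along the image factorization $e : I \to J$ that sends $i$ to the pair consisting of $B_i$ with its truncated membership witness in the image.

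First, the structural conditions. The type $J$ is small by Lemma~\ref{prop:img-small}, using local smallness of $\opens{X}$ together with set replacement; and $\projI : J \to \opens{X}$ is an embedding because its fibres are precisely the propositional truncations appearing in the defining $\Sigma$-type of the image. For the three algebraic conditions, each is a proposition --- compactness is propositional by Lemma~\ref{prop:way-below-prop}, and the other two are propositionally truncated existentials --- so the truncated witness attached to any $(V, p) : J$ may be eliminated to produce a concrete $i : I$ with $B_i = V$. Condition~(1) then follows from compactness of $B_i$; for~(2), $e(t) : J$ satisfies $B'(e(t)) = \One{X}$ where $t$ is the top index of $B$; for~(3), given $(V_1, p_1), (V_2, p_2) : J$, extracting $i_1, i_2 : I$ with $B_{i_\ell} = V_\ell$ and applying meet-closure of $B$ yields a $k : I$ with $B_k = V_1 \wedge V_2$, and then $e(k)$ is the required element of $J$.

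Finally, for the basis condition, given $U : \opens{X}$ the hypothesis on $B$ supplies a specified directed small family $\FamEnum{n}{N}{i_n}$ on $I$ with $U = \bigvee_{n : N} B_{i_n}$; composing with $e : I \to J$ yields a directed small family in $J$ whose composition with $B'$ is pointwise equal to $\FamEnum{n}{N}{B_{i_n}}$, so its join is again $U$ and directedness transports along this pointwise equality. The only delicate aspect is the systematic use of truncation elimination, which is harmless throughout because each conclusion to be established is itself a proposition; no serious obstacle is expected.
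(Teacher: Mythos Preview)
Your proposal is correct and is precisely the expected direct verification; the paper states this lemma without proof, treating it as routine. Your careful handling of truncation elimination (using that each target is a proposition) and of smallness via Lemma~\ref{prop:img-small} fills in exactly the details the paper suppresses.
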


\begin{lemma}\label{lem:split-support}
  From an unspecified intensional spectral basis on a locale $X$ we can obtain
  a specified extensional spectral basis.
\end{lemma}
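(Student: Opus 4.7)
The plan is to chain the two preceding lemmas, using the fact that being spectral is a proposition to bridge the gap from unspecified to specified data. Concretely, I would first invoke Lemma~\ref{lem:basis-to-spec} on the hypothesis: from an unspecified intensional spectral basis on $X$, we obtain that $X$ is \VSpectral{}. This step is legitimate precisely because the conclusion ``$X$ is \VSpectral{}'' is a proposition by Lemma~\ref{prop:spectral-is-prop}, so the induction principle of propositional truncation applies and lets us eliminate the unspecified hypothesis.

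Having now obtained a specified proof that $X$ is \VSpectral{}, I would directly appeal to Lemma~\ref{lem:spec-gives-basis} to extract a specified extensional spectral small basis, namely the first-projection inclusion $\CompactOpens{X} \hookrightarrow \opens{X}$. No further construction is needed: this family is canonical, its extensionality is immediate since first projections out of $\Sigma$-types over propositional fibres are embeddings, and its spectrality is exactly the content of Lemma~\ref{lem:spec-gives-basis}.

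The only conceptual subtlety worth flagging in the write-up is why this non-trivial ``untruncation'' is permitted at all: the point is that, although the type of intensional spectral bases is not itself propositional (different index types can present the same basis), the type of extensional spectral bases has \VSplitSupport{} because its canonical representative is determined by $X$ alone, being pinned down by the characterization of Corollary~\ref{prop:basis-img-equiv}. This is the same pattern as ``$\Equiv{\image{B}{I}}{\CompactOpens{X}}$'' used in Lemma~\ref{lem:basis-to-spec}, and there is no further obstacle — the routing through spectrality of $X$ is exactly what does the work.
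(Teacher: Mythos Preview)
Your proposal is correct and follows exactly the same route as the paper: apply Lemma~\ref{lem:basis-to-spec} to obtain spectrality (a proposition), then apply Lemma~\ref{lem:spec-gives-basis} to extract the specified extensional basis. The extra commentary you provide about why the untruncation is legitimate is accurate and more detailed than what the paper says, but the argument itself is identical.
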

\begin{proof}
  Let $X$ be a locale with an unspecified intensional spectral basis.
  By Lemma~\ref{lem:basis-to-spec}, we know that it is spectral and therefore
  that it has a specified extensional basis by Lemma~\ref{lem:spec-gives-basis}.
\end{proof}

Recall that a type $X$ is said to have \define{split support} if $\TruncTy{X} \to
X$~\citep{keca-anonymous}. With this terminology, the above lemma says that the
type of intensional spectral bases on a locale $X$ has \VSplitSupport{}.

\begin{theorem}\label{thm:answer}
The following are logically equivalent for any locale $X$.
\begin{enumerate}
  \item\label{it:I}%
    $X$ is \VSpectral.
  \item\label{it:II}%
    $X$ has an unspecified intensional small spectral basis.
  \item\label{it:IIPrime}%
    $X$ has an unspecified extensional small spectral basis.
 \item\label{it:V}%
    The inclusion $\CompactOpens{X} \hookrightarrow X$ is an extensional spectral basis,
    where to an open $U$ we assign the directed family of compact opens below it as in the construction of Lemma~\ref{lem:spec-gives-basis}.
  \item\label{it:III}%
    $X$ has a specified intensional small spectral basis.
  \item\label{it:IV}%
    $X$ has a specified extensional small spectral basis.
\end{enumerate}
\end{theorem}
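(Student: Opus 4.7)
The plan is to organize the implications into a cycle and then handle $(\ref{it:IIPrime})$ via a short side route. Most of the real work has been done in the preceding lemmas; the theorem is essentially a matter of arranging them.

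The trivial links are as follows. From $(\ref{it:V})$ to $(\ref{it:IV})$: the inclusion $\CompactOpens{X} \hookrightarrow X$ together with its canonical covering families is itself a specified extensional small spectral basis. From $(\ref{it:IV})$ to $(\ref{it:III})$: every embedding family is in particular a family. From $(\ref{it:III})$ to $(\ref{it:II})$: the unit $|{-}| : A \to \|A\|$ of propositional truncation. Symmetrically, $(\ref{it:IV}) \Rightarrow (\ref{it:IIPrime})$ is the truncation unit, and $(\ref{it:IIPrime}) \Rightarrow (\ref{it:II})$ is again ``extensional is intensional'' applied under truncation.

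The two substantive steps of the cycle have already been isolated: $(\ref{it:II}) \Rightarrow (\ref{it:I})$ is exactly Lemma~\ref{lem:basis-to-spec}, and $(\ref{it:I}) \Rightarrow (\ref{it:V})$ is what the proof of Lemma~\ref{lem:spec-gives-basis} constructs on the nose --- starting from spectrality, it produces the inclusion of compact opens equipped, for each open $U$, with the directed family of compact opens below $U$ as the covering family. Composing these yields the cycle $(\ref{it:V}) \Rightarrow (\ref{it:IV}) \Rightarrow (\ref{it:III}) \Rightarrow (\ref{it:II}) \Rightarrow (\ref{it:I}) \Rightarrow (\ref{it:V})$, and $(\ref{it:IIPrime})$ is plugged into it via $(\ref{it:IV}) \Rightarrow (\ref{it:IIPrime}) \Rightarrow (\ref{it:II})$.

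The main conceptual point is not the logic of the cycle but rather why one may pass from unspecified to specified bases, which is generally unavailable in constructive mathematics. Here it works because spectrality canonically produces a basis, namely $\CompactOpens{X} \hookrightarrow X$, with no further choices required. This is precisely the split support phenomenon already articulated in Lemma~\ref{lem:split-support}, and it is what allows the cycle above to close without invoking choice.
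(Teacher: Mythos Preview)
Your proof is correct and takes essentially the same approach as the paper: both rely on Lemma~\ref{lem:basis-to-spec} for $(\ref{it:II}) \Rightarrow (\ref{it:I})$ and on the construction in Lemma~\ref{lem:spec-gives-basis} to reach $(\ref{it:V})$, with the remaining implications being trivial. The only organizational difference is that the paper establishes $(\ref{it:IV}) \leftrightarrow (\ref{it:V})$ directly via Corollary~\ref{prop:basis-img-equiv} (any extensional spectral basis is equivalent to $\CompactOpens{X} \hookrightarrow X$), whereas you close the loop by going the long way around through $(\ref{it:I})$; both routes are valid.
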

Notice that the first four conditions are propositions, but the last two are not in general.
\begin{proof}
  We have already established one direction of the logical equivalence $(\ref{it:II}) \leftrightarrow (\ref{it:III})$ in
  Lemma~\ref{lem:split-support}, and the other is the unit
  of propositional truncation.
  Every extensional basis is intensional, and so $(\ref{it:IV}) \to (\ref{it:III})$ is clear.
  The implication $(\ref{it:I}) \rightarrow (\ref{it:IV})$ is Lemma~\ref{lem:spec-gives-basis}.
  By Corollary~\ref{prop:basis-img-equiv}, we know that
  $\Equiv{\image{B}{I}}{\CompactOpens{X}}$ for any intensional spectral basis
  $\FamEnum{i}{I}{B_i}$. This implies that any extensional spectral basis is equivalent to the
  basis $\CompactOpens{X} \hookrightarrow X$.
  Conversely, the inclusion $\CompactOpens{X} \hookrightarrow X$ is always a small
  extensional basis, so that we get
  $(\ref{it:IV}) \leftrightarrow (\ref{it:V})$. We also clearly have $(\ref{it:IV}) \rightarrow (\ref{it:IIPrime}) \rightarrow (\ref{it:II})$, which concludes our proof.
\end{proof}

\subsection{Stone locales}\label{subsec:stone}

Clopenness is central to the notion of Stone locale, similar to the fundamental
role played by the notion of a \VCompactOpen{} in the definition of a
\VSpectralLocale{}. To define the clopens, we first define the well-inside
relation.

\begin{definition}[The well-inside relation]\label{defn:well-inside}
  We say that an open $U$ of a locale $X$ is
  \define{well inside} an open $V$, written $U \WellInside V$, if there is an open $W$ with
   $U \wedge W = \Zero{X}$ and $V \vee W = \One{X}$.
\end{definition}

\begin{definition}[Clopen]\label{defn:clopen}
  An open $U$ is called a \define{clopen} if it is \VWellInside{} itself, which
  amounts to saying that it has a Boolean complement. We denote by $\Clopens{X}$ the subtype of $\opens{X}$ consisting of clopens.
\end{definition}

% We denote by $\Clopens{X}$ the type of clopens of $X$:
% \begin{equation*}
%   \Clopens{X} \quad\is\quad \SigmaType{U}{\opens{X}}{\IsClopen{U}}.
% \end{equation*}

Before we proceed to the definition of a Stone locale, we record the following
fact about the \VWellInside{} relation:

\begin{lemma}\label{prop:well-inside-upwards-downwards}
  Given opens $U, V, W : \opens{X}$ of a locale $X$,
  \begin{enumerate}
  \item if $U \WellInside V$ and $V \le W$ then $U \WellInside W$; and
  \item if $U \le V$ and $V \WellInside W$ then $U \WellInside W$.
  \end{enumerate}
\end{lemma}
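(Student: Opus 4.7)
The plan is to unfold the definition of $\WellInside$ in each case and check that the same witness $W'$ that makes the hypothesis work also makes the conclusion work, using only monotonicity of $\vee$ for part~(1) and monotonicity of $\wedge$ for part~(2).

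For part~(1), I would assume $U \WellInside V$ and $V \le W$. By Definition~\ref{defn:well-inside}, the first hypothesis gives a specified open $W' : \opens{X}$ with $U \wedge W' = \Zero{X}$ and $V \vee W' = \One{X}$. I claim this same $W'$ witnesses $U \WellInside W$. The equation $U \wedge W' = \Zero{X}$ is immediate. For the other, from $V \le W$ we get $\One{X} = V \vee W' \le W \vee W'$, and since $\One{X}$ is the top element, $W \vee W' = \One{X}$.

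For part~(2), I would assume $U \le V$ and $V \WellInside W$. The second hypothesis gives $W' : \opens{X}$ with $V \wedge W' = \Zero{X}$ and $W \vee W' = \One{X}$. Again the same $W'$ works: the equation $W \vee W' = \One{X}$ is unchanged, and from $U \le V$ we get $U \wedge W' \le V \wedge W' = \Zero{X}$, so $U \wedge W' = \Zero{X}$ by antisymmetry together with the fact that $\Zero{X}$ is the bottom element.

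There is no real obstacle here; the argument is a direct unpacking of Definition~\ref{defn:well-inside} using only the lattice structure. The only small care needed is invoking that $\One{X}$ is the top and $\Zero{X}$ is the bottom to turn the inequalities $\One{X} \le W \vee W'$ and $U \wedge W' \le \Zero{X}$ into equalities; this is a standard use of antisymmetry in a frame and requires no appeal to joins of arbitrary families or to the distributivity law.
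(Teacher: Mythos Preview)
Your proof is correct and entirely standard. The paper actually states this lemma without proof, treating it as a routine fact about the well-inside relation, so there is no proof in the paper to compare against; your argument by reusing the same witness $W'$ and applying monotonicity of $\vee$ and $\wedge$ is exactly the expected one. One small remark: since the paper's convention is that ``there is'' denotes a truncated $\Sigma$-type, the hypothesis $U \WellInside V$ only gives an \emph{unspecified} witness; but as the goal $U \WellInside W$ is itself a proposition, the induction principle for truncation lets you assume a specified $W'$ as you did, so this is harmless.
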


\begin{definition}[Stone locale]\label{defn:stone-locale}
  A locale $X$ is called \define{Stone} if it satisfies the following conditions:
  \begin{enumerate}[label={$\left({\mathsf{ST{\arabic*}}}\right)$}, leftmargin=4em]
    \item\label{item:stl-compact}%
      It is compact. % a \VCompactLocale{} (i.e.\ the empty meet is compact).
    \item For any $U : \opens{X}$, there exists a small \VDirected{} family
      $\FamEnum{i}{I}{C_i}$, with each $C_i$ \VClopen{}, such that
      $U = \bigvee_{i : I} C_i$.
    \label{item:stl-covering}
  \item The type $\Clopens{X}$ is small.
    \label{item:stl-smallness}
  \end{enumerate}
\end{definition}

\begin{lemma}\label{lem:stone-prop}
  For any locale $X$, being Stone is a proposition.
\end{lemma}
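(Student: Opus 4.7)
The plan is to show that each of the three defining conditions \ref{item:stl-compact}, \ref{item:stl-covering}, and \ref{item:stl-smallness} is a proposition, and then conclude by the standard fact that a $\SigmaTypeSym$ (or product) of propositions is a proposition.

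For \ref{item:stl-compact}, compactness of $X$ unfolds to $\One{X} \WayBelow \One{X}$, which is a proposition by Lemma~\ref{prop:way-below-prop}. For \ref{item:stl-covering}, the statement is a $\PiTySym$-type over $U : \opens{X}$ of an existential $\exists$, and existentials are propositional by the very definition $\exists x : A, B(x) \is \TruncTy{\Sigma x : A, B(x)}$; a $\PiTySym$ of propositions is a proposition by function extensionality (which follows from univalence, assumed in Section~\ref{sec:foundations}). For \ref{item:stl-smallness}, the smallness of $\Clopens{X}$ is propositional by Lemma~\ref{lem:being-small-is-prop}.

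With each of the three conjuncts being a proposition, the overall definition of Stone, being a $\SigmaTypeSym$ (or iterated product) of these, is itself a proposition, since a dependent sum of propositions is a proposition. There is no real obstacle here: the argument is a routine application of the general closure properties of propositions together with the two previously established propositionality lemmas, and the fact that $\exists$ is built from $\TruncTy{\blank}$ and is therefore automatically propositional.
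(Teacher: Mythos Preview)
Your proof is correct and follows essentially the same approach as the paper: you verify that each of the three conditions \ref{item:stl-compact}, \ref{item:stl-covering}, \ref{item:stl-smallness} is a proposition (using Lemma~\ref{prop:way-below-prop}, closure of propositions under $\PiTySym$, and Lemma~\ref{lem:being-small-is-prop} respectively) and conclude that their conjunction is a proposition. The paper's proof is the same, only slightly terser.
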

\begin{proof}
  Being compact is a proposition (Lemma~\ref{prop:way-below-prop}) and being
  small is a proposition by Lemma~\ref{lem:being-small-is-prop} (assuming
  univalence). The condition \ref{item:stl-covering} is a proposition since
  the $\PiTySym$-type over a family of propositions is a proposition.
\end{proof}

We denote by $\Stone$ the category of Stone locales with continuous maps as the
morphisms. The defining frame homomorphism of any continuous map of Stone
locales automatically preserves clopens.

The following two lemmas are needed to prove that the \VCompactOpen{}s and the
\VClopen{}s coincide in Stone locales, which we will need later. The proofs are
standard \cite[Lemma~\RSeven{}.3.5]{ptj-ss}. We provide the proof of
Lemma~\ref{prop:way-below-implies-well-inside} for the sake of self-containment,
since it uses our reformulation of the notion of a Stone locale.

\begin{lemma}\label{prop:well-inside-implies-way-below}
  In any \VCompactLocale{}, $U \WellInside V$ implies $U \WayBelow V$ for any
  two opens $U, V$.
\end{lemma}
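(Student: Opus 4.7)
The plan is to unfold both relations and then exploit compactness of the top element to turn a directed cover of $V$ into one that forces $U$ below a single member. Suppose $U \WellInside V$, witnessed by an open $W$ with $U \wedge W = \Zero{X}$ and $V \vee W = \One{X}$, and let $(W_i)_{i : I}$ be a directed family with $V \le \bigvee_{i : I} W_i$. The goal is to produce a specified index $i : I$ with $U \le W_i$.

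First, I would pass from the given family to the family $(W_i \vee W)_{i : I}$. This family is again directed in the sense of Definition~\ref{defn:directed}: $I$ is inhabited by hypothesis, and if $W_k$ is an upper bound of $\{W_i, W_j\}$ then $W_k \vee W$ is an upper bound of $\{W_i \vee W, W_j \vee W\}$. Moreover, the universal property of joins yields
\begin{equation*}
  \One{X} \;=\; V \vee W \;\le\; \paren{\bigvee_{i : I} W_i} \vee W \;=\; \bigvee_{i : I} (W_i \vee W).
\end{equation*}
By compactness of $X$, i.e.\ $\One{X} \WayBelow \One{X}$, applied to the directed family $(W_i \vee W)_{i : I}$, there is a specified $i : I$ with $\One{X} \le W_i \vee W$.

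Meeting both sides with $U$ and using the infinite distributive law together with $U \wedge W = \Zero{X}$, I obtain
\begin{equation*}
  U \;=\; U \wedge \One{X} \;\le\; (U \wedge W_i) \vee (U \wedge W) \;=\; U \wedge W_i \;\le\; W_i,
\end{equation*}
which is exactly the inequality demanded by Definition~\ref{defn:way-below}.

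I do not anticipate any real obstacle: the argument is a purely algebraic manipulation valid in any frame once compactness is invoked. The only bookkeeping step worth pausing on is verifying directedness of $(W_i \vee W)_{i : I}$, where it matters that directedness in our setting includes inhabitation of the index type; the rest is standard distributivity.
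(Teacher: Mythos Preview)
Your argument is correct and is precisely the standard proof the paper defers to \cite[Lemma~\RSeven{}.3.5]{ptj-ss}; the paper does not spell it out. One minor terminological quibble: compactness only yields an \emph{unspecified} index $i$ with $\One{X} \le W_i \vee W$, but since the target $U \WayBelow V$ is a proposition (Lemma~\ref{prop:way-below-prop}) you may eliminate the truncation and proceed exactly as you do.
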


\begin{lemma}\label{prop:way-below-implies-well-inside}
  In any Stone locale, $U \WayBelow V$ implies $U \WellInside V$ for any
  two opens $U, V$.
\end{lemma}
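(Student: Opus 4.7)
The plan is to use the defining covering property of a Stone locale together with the monotonicity lemma for the well-inside relation. Fix opens $U, V$ of a Stone locale $X$ and suppose $U \WayBelow V$.

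By condition \ref{item:stl-covering} in Definition~\ref{defn:stone-locale}, there exists a small directed family $\FamEnum{i}{I}{C_i}$ of clopens with $V = \bigvee_{i : I} C_i$. Applying the hypothesis $U \WayBelow V$ to this directed cover produces some index $i : I$ with $U \le C_i$. Since $C_i$ appears in a family whose join is $V$, we also have $C_i \le V$.

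Now $C_i$ is clopen, which by Definition~\ref{defn:clopen} means $C_i \WellInside C_i$. From $C_i \WellInside C_i$ and $C_i \le V$, part~(1) of Lemma~\ref{prop:well-inside-upwards-downwards} yields $C_i \WellInside V$. Combining this with $U \le C_i$, part~(2) of the same lemma yields $U \WellInside V$, as required.

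The only mild subtlety is that \ref{item:stl-covering} is formulated with an anonymous existence (\emph{there exists a small directed family}); however, the conclusion $U \WellInside V$ is a proposition, so we may eliminate the propositional truncation and work with a chosen such family. Likewise, the index $i$ produced by the definition of $\WayBelow$ is obtained up to propositional truncation, but again this is harmless since the goal is propositional. No further obstacles arise; the argument is essentially the impredicative one, and it goes through unchanged because our reformulation of Stone still demands covering by clopens and compactness, which are exactly the two ingredients used.
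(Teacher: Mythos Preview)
Your proof is correct and follows essentially the same approach as the paper: cover $V$ by clopens using \ref{item:stl-covering}, use $U \WayBelow V$ to land below some $C_i$, and then push $C_i \WellInside C_i$ up to $V$ and down to $U$ via Lemma~\ref{prop:well-inside-upwards-downwards}. Your version is simply more explicit about the two separate applications of that lemma and about discharging the propositional truncations, which the paper leaves implicit.
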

\begin{proof}
  Let $U, V : \opens{X}$ with $U \WayBelow V$. We know that
  $\IdTy{V}{\bigvee_{i : I} C_i}$ for a family $\FamEnum{i}{I}{C_i}$ consisting of
  \VClopen{}s.
  Since $V \le \bigvee_{i : I} C_i$, it must be the case that there is some
  $k : I$ with $U \le C_k$ as we know $U \WayBelow V$.
  We then have $U \le C_k \WellInside V$ which implies $U \WellInside V$
  by Lemma~\ref{prop:well-inside-upwards-downwards}.
\end{proof}

\begin{corollary}\label{cor:compact-equiv-clopen}
  In any Stone locale, the type $\Equiv{\CompactOpens{X}}{\Clopens{X}}$.
\end{corollary}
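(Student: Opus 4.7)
The plan is to observe that both $\CompactOpens{X}$ and $\Clopens{X}$ are $\Sigma$-types of the form $\SigmaType{U}{\opens{X}}{P(U)}$ with $P$ a family of propositions: compactness of $U$ is the proposition $U \WayBelow U$ (a proposition by Lemma~\ref{prop:way-below-prop}), and clopenness of $U$ is $U \WellInside U$, which is also a proposition (the existence of a Boolean complement, with the complement uniquely determined). An equivalence between two such $\Sigma$-types over the same base reduces to a fibrewise logical equivalence, so it suffices to show that, for every open $U$ of a Stone locale $X$, we have $U \WayBelow U \leftrightarrow U \WellInside U$.

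For the implication $U \WellInside U \Rightarrow U \WayBelow U$, I would invoke Lemma~\ref{prop:well-inside-implies-way-below}, applicable because every Stone locale is compact by condition~\ref{item:stl-compact} of Definition~\ref{defn:stone-locale}. For the converse $U \WayBelow U \Rightarrow U \WellInside U$, I would appeal directly to Lemma~\ref{prop:way-below-implies-well-inside}, specialized to $V \is U$. Combining these two implications fibrewise yields a propositional equivalence $\IsCompact{U} \leftrightarrow \IsClopen{U}$.

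To promote this to a type equivalence of the $\Sigma$-types, I would use the general fact that if $P, Q : A \to \HProp{\UU}$ satisfy $P(a) \leftrightarrow Q(a)$ for every $a : A$, then by propositional extensionality (a consequence of univalence) $P = Q$ pointwise, hence $\SigmaType{a}{A}{P(a)} \simeq \SigmaType{a}{A}{Q(a)}$; equivalently, one exhibits the equivalence directly by sending $(U, p)$ to $(U, \varphi_U(p))$ where $\varphi_U$ is the logical equivalence, whose inverse is evident and whose round-trip identities hold by proposition-ness of the fibres. Either route is routine given the fibrewise equivalence.

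No step is really an obstacle here: both directions of the key fibrewise equivalence are already isolated as the preceding lemmas, and the passage from a fibrewise logical equivalence of propositions to an equivalence of $\Sigma$-types is standard in \VHoTTUF{}. The only small care is to note that the two lemmas combine correctly, namely that Stone locales are in particular compact, so that both directions are available in this setting.
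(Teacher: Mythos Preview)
Your proposal is correct and matches the paper's intended argument: the corollary is stated without proof immediately after Lemmas~\ref{prop:well-inside-implies-way-below} and~\ref{prop:way-below-implies-well-inside}, and your fibrewise-logical-equivalence-of-propositions argument is precisely the standard way to cash this out as a type equivalence. The only minor remark is that clopenness is a proposition already by the paper's convention that ``there is'' in Definition~\ref{defn:well-inside} denotes a truncated $\Sigma$-type, though your observation that Boolean complements are unique gives an alternative justification.
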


\begin{corollary}\label{cor:stone-implies-spectral}
  Every Stone locale $X$ is spectral.%
\end{corollary}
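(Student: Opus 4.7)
The plan is to verify each of the four conditions \ref{item:sl-compact}--\ref{item:sl-smallness} in Definition~\ref{defn:spectral-locale} directly from the three Stone conditions \ref{item:stl-compact}--\ref{item:stl-smallness}, leveraging Corollary~\ref{cor:compact-equiv-clopen}, which tells us that in a Stone locale the compact opens and clopens coincide (as subtypes of $\opens{X}$). Compactness \ref{item:sl-compact} is immediate from \ref{item:stl-compact}. The smallness condition \ref{item:sl-smallness} follows from \ref{item:stl-smallness} together with the equivalence $\Equiv{\CompactOpens{X}}{\Clopens{X}}$ of Corollary~\ref{cor:compact-equiv-clopen}, since smallness transports along equivalences.

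For the covering condition \ref{item:sl-covering}, given any $U : \opens{X}$, \ref{item:stl-covering} yields a small directed family $\FamEnum{i}{I}{C_i}$ of clopens with $U = \bigvee_{i : I} C_i$. Each $C_i$ satisfies $C_i \WellInside C_i$ by definition of clopen, and hence $C_i \WayBelow C_i$ by Lemma~\ref{prop:well-inside-implies-way-below} applied to the compact locale $X$; that is, each $C_i$ is compact, and we obtain the required directed covering by compact opens.

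The one step that needs a short verification is coherence \ref{item:sl-coherent}, i.e.\ that the meet of two compact opens is compact. By Corollary~\ref{cor:compact-equiv-clopen} it suffices to show that clopens are closed under binary meets. If $U$ has Boolean complement $U'$ and $V$ has Boolean complement $V'$, then I would check by a short distributive calculation (essentially de Morgan in a frame) that $U' \vee V'$ is a Boolean complement of $U \wedge V$: the equality $(U \wedge V) \wedge (U' \vee V') = (U \wedge V \wedge U') \vee (U \wedge V \wedge V') = \Zero{X}$ is immediate, and $(U \wedge V) \vee U' \vee V' = \One{X}$ follows by first distributing $(U \wedge V) \vee U' = (U \vee U') \wedge (V \vee U') = V \vee U'$ and then joining with $V'$. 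Thus $U \wedge V$ is clopen, hence compact.

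I do not expect any serious obstacle; the work has been absorbed into the preceding lemmas, so the corollary is essentially a bookkeeping exercise that combines Corollary~\ref{cor:compact-equiv-clopen}, Lemma~\ref{prop:well-inside-implies-way-below}, and the closure of clopens under binary meets.
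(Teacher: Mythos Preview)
Your proposal is correct and matches the paper's intended (implicit) argument: the corollary is stated without proof, and the natural reading is precisely your direct verification of \ref{item:sl-compact}--\ref{item:sl-smallness} from \ref{item:stl-compact}--\ref{item:stl-smallness} via Corollary~\ref{cor:compact-equiv-clopen} and Lemma~\ref{prop:well-inside-implies-way-below}. The one detail you supply beyond what the paper makes explicit---closure of clopens under binary meets via the de~Morgan complement $U' \vee V'$---is standard and exactly what is needed for \ref{item:sl-coherent}.
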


A consequence of Corollary~\ref{cor:compact-equiv-clopen} is that we immediately
get a characterization of Stone locales analogous to the notion of intensional
spectral basis that we gave in Definition~\ref{defn:int-spec-basis}.

\begin{definition}[Basis of clopens]
  An \define{intensional basis of clopens} for a locale $X$ is a small
  intensional basis $\FamEnum{i}{I}{B_i}$ for the frame $\opens{X}$ that
  consists of \VClopen{}s.
  We say that the basis is \emph{extensional} if the map
  $B : I \to \opens{X}$ is
  an \VEmbedding{}. When we say \emph{basis of clopens} without any
  qualification, we mean an extensional basis of clopens.
\end{definition}
Notice that, because clopens are closed under finite joins, the directification
of a basis of clopens is again a basis of clopens. For the next lemma, recall
the notion of extensionalization given in
Definition~\ref{def:extensionalization}.
\begin{lemma}
 If $\FamEnum{i}{I}{B_i}$ is a basis of clopens, then so is its extensionalization.
\end{lemma}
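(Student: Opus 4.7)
The plan is to verify each of the defining conditions of an extensional basis of clopens for the extensionalization $(\image{B}{I}, \projI)$, closely following the analogous argument for the spectral case mentioned just after Definition~\ref{def:extensionalization}. The four things to check are: (a) the new index type $\image{B}{I}$ is small; (b) the projection $\projI : \image{B}{I} \to \opens{X}$ is an embedding; (c) every element in the image is a clopen; and (d) the family still covers every open by a directed small subfamily.

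Condition~(b) is essentially by definition: the image $\image{B}{I} \is \SigmaType{U}{\opens{X}}{\bigexists_{i : I} \IdTy{B_i}{U}}$ is the total space of a family of propositions over $\opens{X}$, and hence its first projection is automatically an \VEmbedding{}. For condition~(a), I would invoke Lemma~\ref{prop:img-small} directly, which tells us that whenever $X$ carries a small intensional basis, the image $\image{B}{I}$ is small; this already applies verbatim here since a basis of clopens is, in particular, a small intensional basis.

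For condition~(c), given an element $(U, p) : \image{B}{I}$, we need to show $U$ is a clopen. Being a clopen ($U \WellInside U$) is a proposition, so I can eliminate the truncation in $p : \bigexists_{i : I} \IdTy{B_i}{U}$ and assume a specified $i : I$ with $B_i = U$; clopenness then transfers from $B_i$ along this equality. For condition~(d), given $U : \opens{X}$, the original basis supplies a specified directed small family $\FamEnum{j}{J}{i_j}$ on $I$ with $U = \bigvee_{j : J} B_{i_j}$. I would lift this to a family into $\image{B}{I}$ by sending each $j : J$ to the pair $(B_{i_j}, \TruncTm{(i_j, \mathsf{refl})})$. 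The join computed in $\image{B}{I}$ under $\projI$ is the same as $\bigvee_{j : J} B_{i_j} = U$, and directedness is inherited because the underlying opens, and hence the preorder used to state directedness, are the same.

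I do not expect a genuine obstacle here: the only mildly subtle point is condition~(a), which requires appealing to Lemma~\ref{prop:img-small} and thereby implicitly to local smallness of the frame and the set replacement axiom, but these are already built into our standing conventions. The remaining steps are direct transport arguments along the equality $B_i = U$ and a re-indexing of the directed cover.
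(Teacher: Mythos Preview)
Your proposal is correct and complete. The paper states this lemma without proof, treating it as a routine verification analogous to the spectral case; your detailed check of the four conditions---smallness of the image via Lemma~\ref{prop:img-small}, the embedding property of the first projection, transport of clopenness through the truncated witness, and lifting of the directed covering family to $\image{B}{I}$---is exactly the argument the paper leaves implicit.
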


\begin{lemma}\label{lem:basis-to-stone}
  If a locale $X$ is compact and has an unspecified, intensional, and small
  basis of clopens, then it is \VStone{}.
\end{lemma}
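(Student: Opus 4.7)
The plan is to mirror the structure of Lemma~\ref{lem:basis-to-spec}, but for the Stone conditions \ref{item:stl-compact}--\ref{item:stl-smallness}. Since being Stone is a proposition by Lemma~\ref{lem:stone-prop}, we may use the induction principle of propositional truncation to reduce to the case of a \emph{specified} intensional small basis of clopens $\FamEnum{i}{I}{B_i}$. Condition~\ref{item:stl-compact} is given by hypothesis.

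For \ref{item:stl-covering}, I would invoke the fact, recorded just before the lemma statement, that the directification of a basis of clopens is again a basis of clopens, so that any open $U : \opens{X}$ can be written as a directed small join of clopens drawn from the basis. Concretely, applying the basis gives a specified small directed family $\FamEnum{j}{J}{i_j}$ on $I$ with $U = \bigvee_{j : J} B_{i_j}$, and each $B_{i_j}$ is a clopen by assumption, which is exactly what \ref{item:stl-covering} demands.

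The main obstacle is condition \ref{item:stl-smallness}, smallness of $\Clopens{X}$. The key observation is that since $X$ is compact, every clopen $C$ satisfies $C \WellInside C$ and hence $C \WayBelow C$ by Lemma~\ref{prop:well-inside-implies-way-below}, so every clopen is compact. Then Lemma~\ref{lem:cmp-bsc}, applied to the intensional basis $\FamEnum{i}{I}{B_i}$, says that every compact open falls in the basis; combined with the fact that every basic open is itself a clopen, this yields an equivalence $\Equiv{\image{B}{I}}{\Clopens{X}}$, completely analogous to Corollary~\ref{prop:basis-img-equiv}.

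Finally, to conclude smallness, I would apply Remark~\ref{rmk:sr-special-case} (the special case of set replacement used in the proof of Lemma~\ref{prop:img-small}): the index $I$ is small by assumption, and $\opens{X}$ is locally small by our global convention on locales, so $\image{B}{I}$ is small and hence so is $\Clopens{X}$. This establishes \ref{item:stl-smallness} and completes the proof.
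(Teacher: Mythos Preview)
Your proposal is correct and follows essentially the same route as the paper: invoke Lemma~\ref{lem:stone-prop} to pass to a specified basis, get \ref{item:stl-covering} directly from the basis, and for \ref{item:stl-smallness} use compactness plus Lemma~\ref{prop:well-inside-implies-way-below} and Lemma~\ref{lem:cmp-bsc} to obtain $\Equiv{\image{B}{I}}{\Clopens{X}}$, concluding via set replacement (Remark~\ref{rmk:sr-special-case}). The only cosmetic difference is that the paper cites Remark~\ref{rmk:sr-special-case} directly rather than routing through Lemma~\ref{prop:img-small}, and does not bother mentioning directification since the basis definition already yields directed covering families.
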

\begin{proof}
  Since being Stone is a proposition (by Lemma~\ref{lem:stone-prop}), we can
  work with a specified basis $\FamEnum{i}{I}{B_i}$ with each $B_i$ clopen.
  Condition \ref{item:stl-covering} is immediate since any covering family given
  by the basis is a subfamily of $\Clopens{X}$. It remains to show
  that $\Clopens{X}$ is a small type \ref{item:stl-smallness}. Let
  $C : \Clopens{X}$. Since $X$ is compact, $C$ must be a compact open by
  Lemma~\ref{prop:well-inside-implies-way-below}, and hence must fall in
  the basis by Lemma~\ref{lem:cmp-bsc}. The other direction is direct by construction
  meaning $\Equiv{\Clopens{X}}{\image{B}{I}}$, which concludes that
  $\Clopens{X}$ is small by Remark~\ref{rmk:sr-special-case}.
\end{proof}

\begin{theorem}
  \label{thm:answer-stone}
  The following are logically equivalent for any locale $X$.
  \begin{enumerate}
    \item\label{item:stone-I} $X$ is \VStone.
    \item\label{item:stone-II} $X$ is compact and has an unspecified intensional small basis of clopens.
   \item\label{item:stone-IIPrime} $X$ is compact and has an unspecified extensional small basis of clopens.
      \item\label{item:stone-V} $X$ is compact and the inclusion $\Clopens{X} \hookrightarrow X$ is an extensional
      basis of clopens, where to an open~$U$ we assign the family of all clopens below it.
    \item\label{item:stone-III} $X$ is compact has a specified intensional small basis of clopens.
    \item\label{item:stone-IV} $X$ is compact has a specified extensional small basis of clopens.
  \end{enumerate}
\end{theorem}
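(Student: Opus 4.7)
The plan is to mirror the strategy of the proof of Theorem~\ref{thm:answer}, exploiting two key facts: being Stone is a proposition (Lemma~\ref{lem:stone-prop}), which lets us move freely between specified and unspecified hypotheses, and the coincidence of compact opens with clopens in a Stone locale (Corollary~\ref{cor:compact-equiv-clopen}), which bridges the Stone and spectral characterizations.

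First I would establish the equivalence (\ref{item:stone-II}) $\leftrightarrow$ (\ref{item:stone-III}). The implication (\ref{item:stone-III}) $\to$ (\ref{item:stone-II}) is the unit of propositional truncation. Conversely, assuming (\ref{item:stone-II}), Lemma~\ref{lem:basis-to-stone} gives that $X$ is \VStone{}; the inclusion $\Clopens{X} \hookrightarrow X$, which is a small family by \ref{item:stl-smallness}, can then be taken as the required specified basis of clopens---in fact yielding (\ref{item:stone-IV}) directly, and hence (\ref{item:stone-III}). The implications (\ref{item:stone-IV}) $\to$ (\ref{item:stone-IIPrime}) $\to$ (\ref{item:stone-II}) and (\ref{item:stone-IV}) $\to$ (\ref{item:stone-III}) are immediate by forgetting specifications and noting that every extensional basis is intensional.

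For the crucial implication (\ref{item:stone-I}) $\to$ (\ref{item:stone-IV}) I would again take the inclusion $\Clopens{X} \hookrightarrow X$ as the candidate basis. Smallness holds by \ref{item:stl-smallness}, and the family is an embedding by construction. To show it is a basis, I would invoke Corollary~\ref{cor:stone-implies-spectral} to obtain that $X$ is \VSpectral{}, then apply the characterization (\ref{it:V}) of Theorem~\ref{thm:answer}, which tells us that $\CompactOpens{X} \hookrightarrow X$ is an extensional basis; transporting along the equivalence $\CompactOpens{X} \simeq \Clopens{X}$ of Corollary~\ref{cor:compact-equiv-clopen} then delivers the desired extensional basis of clopens, with covering family for any $U$ given by the clopens below $U$. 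This construction simultaneously establishes (\ref{item:stone-V}), and the converse (\ref{item:stone-V}) $\to$ (\ref{item:stone-IV}) is trivial.

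The main point requiring care is checking that the family of clopens below a given open is a legitimate directed covering family: directedness follows because clopens are closed under finite joins and $\Zero{X}$ is clopen, while the join being $U$ is obtained by combining Corollary~\ref{cor:compact-equiv-clopen} with the corresponding fact for the inclusion $\CompactOpens{X} \hookrightarrow X$ established in Lemma~\ref{lem:spec-gives-basis}. Beyond this verification, the proof is essentially a diagrammatic reassembly of the Stone-analogues of the spectral lemmas, and no substantial new obstacle appears beyond those already addressed in the spectral case.
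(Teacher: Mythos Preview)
Your proposal is correct and follows essentially the same route as the paper: both arguments hinge on Lemma~\ref{lem:basis-to-stone} for the implication back to Stone-ness, and on combining Corollary~\ref{cor:stone-implies-spectral} with Corollary~\ref{cor:compact-equiv-clopen} to transport the spectral characterization of Theorem~\ref{thm:answer} to the clopen setting for the implication (\ref{item:stone-I}) $\to$ (\ref{item:stone-V}). Your additional remarks on directedness of the clopens-below family are a welcome elaboration of a point the paper leaves implicit.
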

Notice that the first four conditions are propositions, but the last two are not in general.
\begin{proof}
  We know by Corollary~\ref{cor:stone-implies-spectral} and
  Corollary~\ref{cor:compact-equiv-clopen} that every Stone
  locale $X$ is spectral and has $\CompactOpens{X} \EquivSym \Clopens{X}$.
  Therefore, by Theorem~\ref{thm:answer}, we know that $\Clopens{X}$ is an
  extensional basis of clopens, which gives the implication
  (\ref{item:stone-I}) $\rightarrow$ (\ref{item:stone-V}).
  The implications (\ref{item:stone-V}) $\rightarrow$ (\ref{item:stone-IV})
  $\rightarrow$ (\ref{item:stone-III}) $\rightarrow$ (\ref{item:stone-II}) are
  direct, and the implication (\ref{item:stone-II}) $\rightarrow$ (\ref{item:stone-I})
  is Lemma~\ref{lem:basis-to-stone}. We also clearly have $(\ref{item:stone-IV}) \rightarrow (\ref{item:stone-IIPrime}) \rightarrow (\ref{item:stone-II})$, which concludes our proof.
\end{proof}

\section{Predicative form of the posetal Adjoint Functor Theorem}
\label{sec:aft}

We start with the definition of the notion of an adjunction in the simplified
context of lattices.

\begin{definition}
  Let $(X, \le_X)$ and $(Y, \le_Y)$ be two preordered sets.
  An adjunction between $X$ and $Y$ consists of a pair of monotonic maps
  $f : X \to Y$ and $g : Y \to X$ satisfying
  \begin{equation*}
    f \dashv g \quad\is\quad \text{$f(x) \le y \leftrightarrow x \le g(y)$ for all $x, y : X$}.
  \end{equation*}
\end{definition}

In locale theory, it is standard convention to denote by $f_* : \opens{X} \to
\opens{Y}$ the right adjoint of a frame homomorphism $f^* : \opens{Y} \to
\opens{X}$ corresponding to a continuous map of locales $f : X \to Y$. The right
adjoint here is known to always exist thanks to a simple application of the
Adjoint Functor Theorem which amounts to the definition:
\begin{equation*}
  f_*(U) \quad\is\quad \bigvee \{ V : \opens{Y} \mid f^*(V) \le U \}.
\end{equation*}
In the predicative setting of type theory, however, it is not clear how the
right adjoint of a frame homomorphism would be defined as the family $\{ V :
\opens{Y} \mid f^*(V) \le U \}$ might be too big in general. This gives rise to the
problem that it is not \emph{a priori} clear that its join in $\opens{X}$
exists. To resolve this problem, we use the assumption of a small basis.%
  The use of a small basis for the posetal Adjoint Functor Theorem in a predicative setting was
  independently observed by Tom de~Jong (personal communication).

\begin{theorem}[Posetal Adjoint Functor Theorem]\label{thm:aft}
  Let $X$ and $Y$ be two locales and
  let $h : \opens{Y} \to \opens{X}$ be a monotone map of frames.
  Assume that $Y$ has some small basis.
  The map $h$ has a right adjoint if and only if it preserves the joins of small
  families.
\end{theorem}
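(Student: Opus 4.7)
The plan is to dispatch the ``only if'' direction by the usual posetal argument and to treat the ``if'' direction by using the small basis on $Y$ to carve out a small-indexed family whose join defines the right adjoint $h_*$. For ``only if'', suppose $h \dashv g$ for some monotone $g : \opens{X} \to \opens{Y}$; then for any small family $\FamEnum{i}{I}{V_i}$ in $\opens{Y}$ and any $U : \opens{X}$, one has
\begin{equation*}
  h\paren{\textstyle\bigvee_{i} V_i} \le U \;\leftrightarrow\; \textstyle\bigvee_i V_i \le g(U) \;\leftrightarrow\; (\forall i,\ V_i \le g(U)) \;\leftrightarrow\; (\forall i,\ h(V_i) \le U) \;\leftrightarrow\; \textstyle\bigvee_i h(V_i) \le U,
\end{equation*}
and antisymmetry yields $h(\bigvee_i V_i) = \bigvee_i h(V_i)$.

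For the converse direction, I fix a small basis $\FamEnum{i}{I}{B_i}$ for $Y$ and define
\begin{equation*}
  h_*(U) \quad\is\quad \bigvee \setof{\, B_i \mid h(B_i) \le U \,}.
\end{equation*}
The index of this join is the type $\SigmaType{i}{I}{h(B_i) \le U}$, which is small: smallness of $I$ comes from the small basis hypothesis, while smallness of each proposition $h(B_i) \le U$ holds because $\opens{X}$ is a $(\USucc{\UU}, \UU, \UU)$-frame, so its order is already $\UU$-valued (the version of local smallness provided by the convention, as in Lemma~\ref{prop:local-smallness-equiv}). Hence $h_*(U)$ is a legitimate join in $\opens{X}$, and $h_*$ is clearly monotone by upward-closure of the indexing condition in $U$.

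To verify $h \dashv h_*$ I would check both directions of the biconditional $h(V) \le U \leftrightarrow V \le h_*(U)$ at an arbitrary open $V : \opens{Y}$. For $(\Leftarrow)$, monotonicity of $h$ together with preservation of the small join defining $h_*(U)$ yields
\begin{equation*}
  h(h_*(U)) \;=\; \bigvee \setof{\, h(B_i) \mid h(B_i) \le U \,} \;\le\; U,
\end{equation*}
so $V \le h_*(U)$ implies $h(V) \le h(h_*(U)) \le U$. For $(\Rightarrow)$, I expand $V = \bigvee_{j : J} B_{i_j}$ using the basis; preservation of small joins gives $h(V) = \bigvee_{j} h(B_{i_j})$, so $h(V) \le U$ forces $h(B_{i_j}) \le U$ for each $j$, hence each $B_{i_j}$ appears in the family defining $h_*(U)$, and therefore $V = \bigvee_j B_{i_j} \le h_*(U)$.

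The main obstacle is the smallness accounting in the definition of $h_*$: the na\"ive impredicative formula $\bigvee\setof{V : \opens{Y} \mid h(V) \le U}$ is indexed by a large subtype of $\opens{Y}$ and so is not \emph{a priori} an admissible join in our frames. The two features of our convention, namely the small basis on $Y$ and the $\UU$-valued order on $\opens{X}$, are precisely what is needed to replace this large-indexed join by an equivalent small-indexed one.
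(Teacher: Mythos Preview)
Your argument is correct and matches the paper's proof essentially step for step: the same basis-indexed formula for the right adjoint, the same two verifications of the adjunction, and the same smallness bookkeeping (which you make usefully more explicit than the paper does). One slip: where you write ``Hence $h_*(U)$ is a legitimate join in $\opens{X}$'', you mean $\opens{Y}$, since $h_* : \opens{X} \to \opens{Y}$ and the $B_i$ live in $\opens{Y}$.
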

\begin{proof}
  Let $h : \opens{Y} \to \opens{X}$ be a monotone map of frames
  and assume that $Y$ has a small basis $\FamEnum{i}{I}{B_i}$.
  %% Forward.
  The forward direction is easy: suppose $h$ has a right adjoint $g : \opens{X}
  \to \opens{Y}$ and let $\FamEnum{i}{I}{U_i}$ be a family in $\opens{Y}$. By the
  uniqueness of joins, it is sufficient to show that $h(\bigvee_i U_i)$ is the join of
  the family $\FamEnum{i}{I}{h(U_i)}$. It is clearly an upper bound by the fact
  that $h$ is monotone. Given any other upper bound $V$ of
  $\FamEnum{i}{I}{h(U_i)}$,
  we have that $h(\bigvee_i U_i) \le V$ since $h(\bigvee_i U_i) \le V \leftrightarrow \left(\bigvee_i
  U_i\right) \le g(V)$ meaning it is sufficient to show $U_i \le g(V)$ for each $U_i$.
  Since $U_i \le g(V)$ if and only if $h(U_i) \le V$, we are done as the latter can be seen to
  hold directly from the fact that $V$ is an upper bound of the family in
  consideration.

  %% Backward.
  For the converse, suppose $\IdTy{h(\bigvee_i U_i)}{\bigvee_{i : I} h(U_i)}$ for every small
  family of opens $\FamEnum{i}{I}{U_i}$. We define the right adjoint of $h$ as
  \begin{equation*}
    g(U) \quad\is\quad \bigvee \left\{ \beta(b) \mid i : I, h(B_i) \le U \right\}.
  \end{equation*}
  We need to show that $g$ is the right adjoint of $h$ i.e.\ that
  \[h(V) \le U \leftrightarrow V \le g(U)\]
  for any two $V : \opens{Y}$, $U : \opens{X}$.

  For the $(\rightarrow)$ direction, assume $h(V) \le U$. It must be the case that
  $V = \bigvee_{j : J} B_{i_j}$ for
  some specified covering family $\FamEnum{j}{J}{i_j}$.
  This means that we just have to show $B_{i_j} \le g(U)$ for every $j : J$,
  which is the case since $h(B_{i_j}) \le h(V) \le U$.

  For the $(\leftarrow)$ direction, assume $V \le g(U)$. This means that we have:
  \begin{align*}
    h(V) &\quad\le\quad h(g(U))\\
         &\quad=\quad h\left(\bigvee \left\{ B_i \mid i : I, h(B_i) \le U \right\}\right)\\
         &\quad=\quad \bigvee \left\{ h(B_i) \mid i : I, h(B_i) \le U \right\}\\
         &\quad\le\quad U
  \end{align*}
  so that $h(V) \le U$, as required.
\end{proof}

Our primary use case for the Adjoint Functor Theorem is the construction of
Heyting implications in locally small frames with small bases.

\begin{definition}[Heyting implication]\label{defn:heyting-implication}
  Let $X$ be a locale with some small basis. Given any open $U :
  \opens{X}$, the map $(\blank) \wedge U : \opens{X} \to \opens{X}$ preserves joins by
  the frame distributivity law. This means, by Theorem \ref{thm:aft}, that it must
  have a right adjoint $U \Rightarrow (\blank) : \opens{X} \to \opens{X}$. The operation
  $(\blank) \Rightarrow (\blank)$ is known as \emph{Heyting implication}.
\end{definition}

\begin{definition}
  By Theorem~\ref{thm:aft}, for any continuous map $f : X \to Y$ of locales where
  $Y$ has an unspecified small basis, the frame homomorphism $f^* : \opens{Y} \to
  \opens{X}$ has a right adjoint, denoted by
  \begin{equation*}
    f_* : \opens{X} \to \opens{Y}.
  \end{equation*}
\end{definition}

\begin{definition}[Perfect frame homomorphism]\label{defn:perfect-map}
  Let $X$ and $Y$ be two locales and assume that $Y$ has an unspecified small
  basis. A continuous map $f : X \to Y$ is said to be \define{perfect} if $f_*$ is
  \VScottContinuous{}, that is, it preserves small directed suprema.
\end{definition}

Let us also record the following fact about perfect maps that we will need
later.

\begin{lemma}\label{prop:perfect-resp-way-below}
  Let $f : X \to Y$ be a perfect map where $Y$ is a locale with some small basis.
  The frame homomorphism $f^*$ respects the \emph{way below} relation, that is, $U
  \WayBelow V$ implies $f^*(U) \WayBelow f^*(V)$, for any two $U, V : \opens{Y}$.
\end{lemma}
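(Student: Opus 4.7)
The plan is to unfold the definition of way below, use the adjunction $f^* \dashv f_*$ to translate the situation from $\opens{X}$ to $\opens{Y}$, and then apply perfectness of $f$ to commute $f_*$ past a directed join.

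More concretely, suppose $U \WayBelow V$ and let $\FamEnum{i}{I}{W_i}$ be a small directed family in $\opens{X}$ with $f^*(V) \le \bigvee_{i : I} W_i$. My aim is to produce some $i : I$ with $f^*(U) \le W_i$. First I would apply the adjunction $f^* \dashv f_*$ (available since $Y$ has a small basis, by Theorem~\ref{thm:aft}) to rewrite $f^*(V) \le \bigvee_{i : I} W_i$ as $V \le f_*\bigl(\bigvee_{i : I} W_i\bigr)$. Since $f$ is perfect, $f_*$ is \VScottContinuous{}, and so $f_*\bigl(\bigvee_{i : I} W_i\bigr) = \bigvee_{i : I} f_*(W_i)$. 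Therefore $V \le \bigvee_{i : I} f_*(W_i)$.

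Next I would observe that the family $\FamEnum{i}{I}{f_*(W_i)}$ is again small and directed: it is small because the original family is, its index type is inhabited because $I$ is, and directedness is preserved by the monotone map $f_*$ (given $i, j : I$, pick $k$ with $W_i, W_j \le W_k$, so $f_*(W_i), f_*(W_j) \le f_*(W_k)$). Applying the hypothesis $U \WayBelow V$ to this directed cover yields some $i : I$ with $U \le f_*(W_i)$, which by the adjunction is equivalent to $f^*(U) \le W_i$. This is exactly what we needed.

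The only subtle point is ensuring that the side conditions for the adjunction and for Scott continuity are both in force, which they are: $Y$ is assumed to have a small basis (so $f_*$ exists by Theorem~\ref{thm:aft}), and perfectness of $f$ supplies preservation of directed joins of small families. I do not anticipate a serious obstacle; the argument is a direct transcription of the classical impredicative proof, with the small-basis hypothesis doing the work of ensuring that $f_*$ is defined at all.
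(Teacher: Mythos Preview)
Your argument is correct and is precisely the standard one the paper defers to in \cite{mhe-patch-short}; the paper does not spell out its own proof here but remarks that the impredicative argument goes through unchanged once the small-basis hypothesis guarantees the existence of $f_*$. Your handling of the side conditions (smallness and directedness of $\FamEnum{i}{I}{f_*(W_i)}$, availability of the adjunction via Theorem~\ref{thm:aft}) is exactly what is needed to make that transfer explicit.
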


A proof of this fact can be found in \citep{mhe-patch-short} and it works in our
predicative setting.

\begin{corollary}\label{cor:perfect-maps-are-spectral}
  Perfect maps are \VSpectral{} as they preserve the \VCompactOpen{}s.
\end{corollary}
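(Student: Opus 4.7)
The plan is to invoke Lemma~\ref{prop:perfect-resp-way-below} directly. Unfolding Definition~\ref{defn:spectral-map}, showing that a perfect continuous map $f : X \to Y$ between \VSpectralLocale{}s is \VSpectral{} amounts to verifying that $f^*$ sends \VCompactOpen{}s of $\opens{Y}$ to \VCompactOpen{}s of $\opens{X}$. Since an open $K$ is \VCompact{} precisely when $K \WayBelow K$ by Definition~\ref{defn:way-below}, the task reduces to showing that $K \WayBelow K$ in $\opens{Y}$ implies $f^*(K) \WayBelow f^*(K)$ in $\opens{X}$.

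This last implication is exactly the content of Lemma~\ref{prop:perfect-resp-way-below} specialized to $U \is V \is K$. So the proof will consist of just this specialization: given a perfect $f$ and a \VCompactOpen{} $K$ of $Y$, I apply Lemma~\ref{prop:perfect-resp-way-below} to conclude $f^*(K) \WayBelow f^*(K)$, which is compactness of $f^*(K)$ by definition. The side-condition that $Y$ has a small basis, needed both for Definition~\ref{defn:perfect-map} to make sense and for Lemma~\ref{prop:perfect-resp-way-below} to apply, is automatic under the standing assumption that $Y$ is \VSpectral{}: by Theorem~\ref{thm:answer} the compact opens $\CompactOpens{Y}$ form a small extensional basis. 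There is no substantive obstacle here, since all the real work was already done in Lemma~\ref{prop:perfect-resp-way-below}; the corollary is essentially a reformulation of it at the level of maps.
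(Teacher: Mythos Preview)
Your proposal is correct and matches the paper's intended argument: the corollary is stated without proof immediately after Lemma~\ref{prop:perfect-resp-way-below}, and the only sensible reading is exactly the specialization $U \is V \is K$ that you spell out. Your remark about the small-basis side condition is slightly over-cautious, since the very notion of perfect map (Definition~\ref{defn:perfect-map}) already presupposes that $Y$ has an unspecified small basis, so Lemma~\ref{prop:perfect-resp-way-below} applies without needing to invoke Theorem~\ref{thm:answer}; but this does no harm.
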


In fact, the converse is also true in the case of spectral locales. The proof
given in~\citep{mhe-patch-short} works, once we know that the required adjoints are available, as established above. We include it in order to illustrate this point.

\begin{lemma}\label{lem:perfect-iff-spectral}
  Let $X$ and $Y$ be locales and assume that $Y$ has some small basis. A
  continuous map $f : X \to Y$ is perfect if and only if it is spectral.
\end{lemma}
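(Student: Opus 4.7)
The plan is to treat the two directions separately: the forward direction is immediate from Corollary~\ref{cor:perfect-maps-are-spectral}, so the real content is in proving that a spectral map is perfect. Since the notion of spectral map is only defined when both locales are spectral, I will work under that assumption (so $Y$ in particular has small compact-open basis, consistent with the hypothesis of a small basis).

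For the substantive direction, fix a small directed family $\FamEnum{i}{I}{U_i}$ in $\opens{X}$. I need to show that $f_*\!\left(\bigvee_i U_i\right) = \bigvee_i f_*(U_i)$. The inequality $\bigvee_i f_*(U_i) \le f_*\!\left(\bigvee_i U_i\right)$ is routine: $f_*$ is monotone (being a right adjoint, by Theorem~\ref{thm:aft}), so each $f_*(U_i) \le f_*\!\left(\bigvee_i U_i\right)$ and hence the join is bounded by the right-hand side.

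The reverse inequality is the substantive step, and this is where spectrality is used. By Lemma~\ref{prop:spectral-yoneda} applied to the spectral locale $Y$, it suffices to test the inequality against compact opens of $Y$: I will show that for every $K : \opens{Y}$ with $\IsCompact{K}$ and $K \le f_*\!\left(\bigvee_i U_i\right)$, one has $K \le \bigvee_i f_*(U_i)$. By the adjunction $f^* \dashv f_*$, the hypothesis $K \le f_*\!\left(\bigvee_i U_i\right)$ is equivalent to $f^*(K) \le \bigvee_i U_i$. Now because $f$ is spectral, $f^*(K)$ is a compact open of $X$; since the family $\FamEnum{i}{I}{U_i}$ is directed with $f^*(K) \le \bigvee_i U_i$, compactness produces some $i_0 : I$ with $f^*(K) \le U_{i_0}$. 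Re-applying the adjunction yields $K \le f_*(U_{i_0}) \le \bigvee_i f_*(U_i)$, as required.

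I do not expect any serious obstacle here: the argument is the classical one, and each ingredient (existence of the right adjoint via Theorem~\ref{thm:aft}, the compact-open Yoneda principle in Lemma~\ref{prop:spectral-yoneda}, and compactness of $f^*(K)$ from the assumption that $f$ is spectral) has already been set up to work predicatively in the presence of a small basis. The one point to be careful about is invoking Lemma~\ref{prop:spectral-yoneda}, which requires $Y$ to be spectral; this is implicit in the statement since spectral maps are defined only between spectral locales.
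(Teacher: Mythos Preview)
Your proposal is correct and follows essentially the same argument as the paper: the forward direction via Corollary~\ref{cor:perfect-maps-are-spectral}, and the backward direction by testing the nontrivial inequality against compact opens of $Y$ via Lemma~\ref{prop:spectral-yoneda}, transporting through the adjunction $f^* \dashv f_*$, and using compactness of $f^*(K)$ against the directed family. Your remark that Lemma~\ref{prop:spectral-yoneda} requires $Y$ to be spectral (implicit since spectral maps are only defined between spectral locales) is also the relevant caveat.
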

\begin{proof}
  The forward direction is given by
  Corollary~\ref{cor:perfect-maps-are-spectral}. For the backward direction,
  assume $f : X \to Y$ to be a spectral map. We have to show that the right adjoint
  $f_* : \opens{X} \to \opens{Y}$ of its defining frame homomorphism is
  \VScottContinuous{}. Letting $\FamEnum{i}{I}{U_i}$ be a directed family in
  $\opens{X}$, we show $f_*(\bigvee_{i : I} U_i) = \bigvee_{i : I} f_*(U_i)$. The
  $\bigvee_{i : I} f_*(U_i) \le f_*(\bigvee_{i : I} U_i)$ direction is easy. For the
  $f_*(\bigvee_{i : I} U_i) \le \bigvee_{i : I} f_*(U_i)$ direction, we appeal
  to Lemma~\ref{prop:spectral-yoneda}. Let $K$ be a compact open
  with $K \le f_*(\bigvee_{i : I} U_i)$. By the adjunction $f^* \dashv f_*$, it must be the
  case that $f^*(K) \le \bigvee_{i : I} U_i$ and since $f^*(K)$ is compact, by the
  spectrality assumption of $f^*$, there must exist some $l : I$ such that $f^*(K)
  \le U_l$. Again by adjointness, $K \le f_*(U_l)$ which implies $K \le \bigvee_{i : I}
  f_*(U_i)$.
\end{proof}

\section{Meet-semilattice of \VScottContinuous{} nuclei}
\label{sec:meet-semilattice}

In this section, we take the first step towards constructing the defining frame
of the patch locale on a spectral locale i.e.\ the frame of \VScottContinuous{}
nuclei. We start by constructing the meet-semilattice of \emph{all} nuclei on a
frame.

\begin{lemma}\label{defn:nuclei-semilattice}
  The type of \VNuclei{} on a given frame $\opens{X}$ forms a meet-semilattice
  under the pointwise order.
\end{lemma}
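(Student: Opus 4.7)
The plan is to construct the meet-semilattice structure componentwise from that of $\opens{X}$. First I would observe that the type of nuclei is a set: the function type $\opens{X} \to \opens{X}$ is a set because $\opens{X}$ is, and the three properties making up the definition of a nucleus (inflationarity, idempotency, binary-meet preservation) are each propositions, so sethood is preserved under $\SigmaTypeSym$-closure. The pointwise order $j \le k \is \PiTy{U}{\opens{X}}{j(U) \le k(U)}$ inherits reflexivity and transitivity from $\opens{X}$, while antisymmetry follows from antisymmetry in $\opens{X}$ together with function extensionality, using that the propositional components of the nucleus structure agree automatically.

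Next I would take the top element to be the constant function $U \mapsto \One{X}$, which is trivially a nucleus ($V \le \One{X}$ for every $V$, and $\One{X} \wedge \One{X} = \One{X}$) and is pointwise the largest possible map, hence the top under our order.

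For binary meets, given two nuclei $j$ and $k$, I would define $(j \wedge k)(U) \is j(U) \wedge k(U)$ and verify the three nucleus conditions. Inflationarity is immediate from $U \le j(U)$ and $U \le k(U)$. Preservation of binary meets reduces to the same property for $j$ and $k$ together with the commutativity and associativity of $\wedge$ in $\opens{X}$. The step I expect to be the main obstacle is idempotency: setting $V \is j(U) \wedge k(U)$ and using that both $j$ and $k$ preserve binary meets, one computes
\[
  (j \wedge k)(V) \quad = \quad \paren{j(U) \wedge j(k(U))} \wedge \paren{k(j(U)) \wedge k(U)},
\]
which is $\le j(U) \wedge k(U) = (j \wedge k)(U)$ since the right-hand side occurs as a factor of the meet on the left; the reverse inequality is inflationarity of $j \wedge k$. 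Having verified that $j \wedge k$ is a nucleus, showing that it is the greatest lower bound of $j$ and $k$ under the pointwise order reduces directly to the universal property of $j(U) \wedge k(U)$ as the meet of $j(U)$ and $k(U)$ in $\opens{X}$.
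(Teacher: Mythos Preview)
Your proposal is correct and follows essentially the same approach as the paper: the top nucleus is the constant map at $\One{X}$, the meet is defined pointwise, and idempotency is established by expanding $(j \wedge k)((j \wedge k)(U))$ via meet preservation and then projecting out the $j(U)$ and $k(U)$ factors. You include some additional bookkeeping (sethood of the type of nuclei and the partial-order axioms) that the paper leaves implicit, and you apply the idempotency of $j$ and $k$ one step earlier in the idempotency calculation, but the argument is otherwise identical.
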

\begin{proof}
  The top nucleus is defined as the constant map with value $\One{X}$ and
  the meet of two nuclei as $j \wedge k \is U \mapsto j(U) \wedge k(U)$. It is easy to see that $j
  \wedge k$ is the greatest lower bound of $j$ and $k$ so it remains to show that $j \wedge
  k$ satisfies the nucleus laws.

  Inflationarity can be seen to be satisfied from the inflationarity
  of $j$ and $k$ combined with the fact that $j(U) \wedge k(U)$ is the
  greatest lower bound of $j(U)$ and $k(U)$. To see that meet preservation
  holds, let $U, V : \opens{X}$; we have:
  \begin{align*}
    (j \wedge k)(U \wedge V) &\quad=\quad j (U \wedge V) \wedge k (U \wedge V)         \\
                   &\quad=\quad j(U) \wedge j(V) \wedge k(U) \wedge k(V)     \\
                   &\quad=\quad (j(U) \wedge k(U)) \wedge (j(V) \wedge k(V)) \\
                   &\quad=\quad (j \wedge k)(U) \wedge (j \wedge k)(V)
  \end{align*}
  For idempotency, let $U : \opens{X}$. We have:
  \begin{align*}
    (j \wedge k)((j \wedge k)(U)) &\quad=\quad j (j(U) \wedge k(U)) \wedge k(j(U) \wedge k(U))      \\
                        &\quad=\quad j(j(U)) \wedge j(k(U)) \wedge k(j(U)) \wedge k(k(U)) \\
                        &\quad\le\quad j(j(U)) \wedge k(k(U))                     \\
                        &\quad=\quad j(U) \wedge k(U)                           \\
                        &\quad=\quad (j \wedge k)(U)
  \end{align*}
\end{proof}

We now show that this meet-semilattice can be \emph{refined} to the \VSCNuclei{}
(i.e.\ the \emph{perfect} nuclei).

\begin{lemma}\label{prop:sc-nuclei-semilattice}
  The \VSCNuclei{} on any locale form a meet-semilattice.
\end{lemma}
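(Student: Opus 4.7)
The plan is to reuse the meet-semilattice structure from Lemma~\ref{defn:nuclei-semilattice} and verify that it restricts to the subtype of \VScottContinuous{} nuclei. So it suffices to check two things: the top nucleus $U \mapsto \One{X}$ is \VScottContinuous{}, and the pointwise meet $(j \wedge k)(U) \is j(U) \wedge k(U)$ of two \VScottContinuous{} nuclei $j, k$ is again \VScottContinuous{}. Inflationarity, meet-preservation, and idempotency for $j \wedge k$ are already given by Lemma~\ref{defn:nuclei-semilattice}, so the only new content is preservation of directed joins.

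For the top nucleus, I would fix a \VDirected{} family $\FamEnum{i}{I}{U_i}$; since $I$ is inhabited, $\bigvee_{i : I} \One{X} = \One{X}$, which immediately gives \VScottContinuous{}ity of the constant map at $\One{X}$.

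For the binary meet, I would fix a \VDirectedFamily{} $\FamEnum{i}{I}{U_i}$ and compute
\begin{equation*}
  (j \wedge k)\paren{\bigvee_{i : I} U_i}
  \;=\; j\paren{\bigvee_{i : I} U_i} \wedge k\paren{\bigvee_{i : I} U_i}
  \;=\; \paren{\bigvee_{i : I} j(U_i)} \wedge \paren{\bigvee_{i' : I} k(U_{i'})},
\end{equation*}
using \VScottContinuous{}ity of $j$ and $k$, and then apply frame distributivity to rewrite the right-hand side as the double join $\bigvee_{i, i' : I} j(U_i) \wedge k(U_{i'})$. One inequality, $\bigvee_{i : I} (j \wedge k)(U_i) \le \bigvee_{i, i' : I} j(U_i) \wedge k(U_{i'})$, is immediate. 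For the other direction, I would use directedness: for each pair $i, i' : I$, there exists $l : I$ with $U_i, U_{i'} \le U_l$; monotonicity of $j$ and $k$ (which follows from meet-preservation) then gives $j(U_i) \wedge k(U_{i'}) \le j(U_l) \wedge k(U_l) = (j \wedge k)(U_l) \le \bigvee_{l : I} (j \wedge k)(U_l)$.

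I do not expect any real obstacle: the argument is the standard one showing that the pointwise meet of two \VScottContinuous{} functions into a frame is again \VScottContinuous{}. The only subtlety is that the collapse of the double join to a single one genuinely uses directedness, which is why the statement would fail for arbitrary (non-directed) joins; this is also why it was important in Definition~\ref{defn:directed} that directed families be required to be inhabited, so that the top nucleus case goes through.
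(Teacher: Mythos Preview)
Your proposal is correct and follows essentially the same argument as the paper: reuse the meet-semilattice structure on all nuclei, then verify that the top nucleus and the pointwise binary meet preserve directed joins, the latter via distributivity followed by collapsing the double join $\bigvee_{i,i'} j(U_i)\wedge k(U_{i'})$ to a single one using directedness and monotonicity. Your explicit remark that inhabitation of $I$ is what makes the top-nucleus case go through is a nice touch that the paper leaves implicit.
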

\begin{proof}
  Let $X$ be a locale. The construction is the same as the one from
  Lemma~\ref{defn:nuclei-semilattice}; the top element is constant map with
  value $\One{X}$, which is trivially \VScottContinuous{} so it remains to show
  that the meet of two Scott continuous nuclei is \VScottContinuous{}. Consider
  two Scott continuous nuclei $j$ and $k$ on $\opens{X}$ and a directed small
  family $\FamEnum{i}{I}{U_i}$. We then have:
  \begin{align*}
       (j \wedge k) \paren{\bigvee_{i : I} U_i}
  &\quad\equiv\quad j \paren{\bigvee_{i : I} U_i} \wedge k \paren{\bigvee_{j : I} U_j}
     && \\
  &\quad=\quad \paren{\bigvee_{i : I} j(U_i)} \wedge \paren{\bigvee_{j : I} k(U_j)}
     && [\text{Scott continuity of $j$ and $k$}]\\
  &\quad=\quad \bigvee_{(i, j) : I \times I} j(U_i) \wedge k(U_j)
     && [\text{distributivity}]\\
  &\quad=\quad \bigvee_{i : I} j(U_i) \wedge k(U_i)
     && [\text{\dag}] \\
  &\quad\equiv\quad \bigvee_{i : I} (j \wedge k)(U_i)
     && [\text{meet preservation}]
  \end{align*}
  where, for the (\dag)\ step, we use antisymmetry. The $(\ge)$ direction is immediate.
  For the $(\le)$ direction, we need to show that $\bigvee_{(i, j) : I \times I} j(U_i) \wedge
  k(U_j) \le \bigvee_{i : I} j(U_i) \wedge k(U_i)$, for which it suffices to show that $\bigvee_{i :
  I} j(U_i) \wedge k(U_i)$ is an upper bound of $\{j(U_i) \wedge k(U_j)\}_{(i, j) : I \times I}$.
  Let $m, n : I$ be two indices. As $\FamEnum{i}{I}{U_i}$ is directed, there must
  exist some $o$ such that $U_o$ is an upper bound of $\{U_m, U_n\}$. Using the
  monotonicity of $j$ and $k$, we get $j(U_m) \wedge k(U_n) \le j(U_o) \wedge k(U_o) \le \bigvee_{i :
  I} j(U_i) \wedge k(U_i)$ as desired.
\end{proof}

\section{Joins in the frame of \VScottContinuous{} nuclei}
\label{sec:joins}

The nontrivial component of the patch frame construction is the join of a family
$\FamEnum{i}{I}{k_i}$ of \VSCNuclei{}, as the pointwise join fails to be
idempotent in general, and not inflationary when the family in consideration is
empty. When the family in consideration is directed, however, the pointwise join
is idempotent, and so it is the join in the poset of Scott continuous nuclei.
\begin{lemma}\label{lem:directed-computed-pointwise}
  Given a directed family $\FamEnum{i}{I}{k_i}$ of \VSCNuclei{}, their join
  is computed pointwise, that is, $\left(\bigvee_{i : I} k_i\right)(U) = \bigvee_{i : I}
  k_i(U)$.
\end{lemma}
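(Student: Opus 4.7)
The plan is to define $j : \opens{X} \to \opens{X}$ by $j(U) \is \bigvee_{i : I} k_i(U)$, verify that it is a \VScottContinuous{} nucleus, and then check that it is the join of the family $\FamEnum{i}{I}{k_i}$ in the poset of \VSCNuclei{}. The first point is that the pointwise family $\FamEnum{i}{I}{k_i(U)}$ is itself a directed small family in $\opens{X}$, because the $k_i$ are monotone and the index family is directed; in particular $I$ is inhabited, so every join in sight is well-defined and $j(U)$ will automatically dominate some $k_i(U)$, giving inflationarity $U \le k_i(U) \le j(U)$.

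To check that $j$ preserves binary meets, I would repeat the calculation of Lemma~\ref{prop:sc-nuclei-semilattice}: expand $j(U \wedge V) = \bigvee_i k_i(U \wedge V) = \bigvee_i (k_i(U) \wedge k_i(V))$, use frame distributivity to rewrite $j(U) \wedge j(V)$ as $\bigvee_{(i,i')} k_i(U) \wedge k_{i'}(V)$, and collapse the double join to a single one via directedness of $\FamEnum{i}{I}{k_i}$, since for any $i, i'$ there exists $m$ with $k_i \le k_m$ and $k_{i'} \le k_m$. For \VScottContinuous{}ness of $j$, given a directed small family $\FamEnum{\ell}{L}{V_\ell}$, we compute $j(\bigvee_\ell V_\ell) = \bigvee_i k_i(\bigvee_\ell V_\ell) = \bigvee_{i,\ell} k_i(V_\ell)$ using the Scott continuity of each $k_i$, and this double join equals $\bigvee_\ell j(V_\ell)$ by directedness of the two index families.

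The main obstacle, as expected, will be idempotency. Here I would use Scott continuity of each $k_i$ in an essential way: $j(j(U)) = \bigvee_i k_i\paren{\bigvee_{i'} k_{i'}(U)} = \bigvee_{i,i'} k_i(k_{i'}(U))$, where the interchange of the outer join past $k_i$ requires that $\FamEnum{i'}{I}{k_{i'}(U)}$ be directed (which it is) and that $k_i$ preserves directed joins (which it does by assumption). Given any $i, i'$, directedness of $\FamEnum{i}{I}{k_i}$ provides some $m$ with $k_i, k_{i'} \le k_m$, whence $k_i(k_{i'}(U)) \le k_m(k_m(U)) = k_m(U) \le j(U)$ using idempotency of $k_m$. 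This bounds the double join by $j(U)$ and, combined with inflationarity, gives $j(j(U)) = j(U)$.

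Finally, for the universal property, $j$ is an upper bound of $\FamEnum{i}{I}{k_i}$ pointwise by construction. If $\ell$ is any \VSCNucleus{} with $k_i \le \ell$ for every $i$, then $k_i(U) \le \ell(U)$ pointwise, and taking the join over $i$ gives $j(U) \le \ell(U)$, so $j \le \ell$ in the pointwise order. Hence $j$ coincides with $\bigvee_{i : I} k_i$ as claimed.
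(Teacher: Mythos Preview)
Your proof is correct and follows exactly the approach the paper has in mind: the paper's own proof is just a citation to~\cite{mhe-properly-injective}, but the paragraph preceding the lemma already spells out the strategy you carry out, namely that for a \emph{directed} family the pointwise join is idempotent (and inflationary), hence is the join in the poset of \VSCNuclei{}. Your verification of idempotency via Scott continuity of each $k_i$ (to pass the directed join through) together with directedness of the family (to bound $k_i \circ k_{i'}$ by some $k_m$) is precisely the expected argument.
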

\begin{proof}
  The argument given in the paragraph preceding
  \citep[Lemma 3.1.8]{mhe-properly-injective} works in our setting.
\end{proof}
Regarding arbitrary joins, the situation is more complicated. A construction of
the join, due to \cite{mhe-properly-injective}, is based on the idea that, if we
start with a family $\FamEnum{i}{I}{k_i}$ of nuclei, we can consider the family
with index type $\ListTy{I}$ of words over $I$, defined by
\begin{equation*}
  (i_0i_1 \cdots i_{n-1}) \mapsto k_{i_{n-1}} \circ \cdots \circ k_1 \circ k_0.
\end{equation*}
This family is easily seen to be direced.

To talk about such families of finite compositions over a given family of
(pre)nuclei, we adopt the following notation.
\begin{itemize}
  \item $\emptyl$ denotes the empty word.
  \item Given words $s : \ListTy{I}$ and $t : \ListTy{I}$, the concatenation
    is denoted by $s \append t$.
\end{itemize}

To define the join operation, we will use the iterated composition function
$\ddnm$ that we define as follows:

\begin{definition}[Family of finite compositions]
  Given a small family $\FamEnum{i}{I}{k_i}$ of nuclei on a given locale $X$,
  its \define{family of finite compositions} is the family defined by
  \begin{equation*}
    k^*(i_0\cdots i_{n-1}) \quad\is\quad k_{i_{n-1}} \circ \cdots \circ k_{i_0}.
  \end{equation*}
  By an abuse of notation, we omit the superscript ``$*$'' when there is no possibility of confusion.
\end{definition}
The order of composition is actually not important.
A finite composition $k_s$ is, in general, not a nucleus. It is, however, always
a prenucleus which we prove below.

\begin{lemma}\label{lem:star-prenucleus}
  Given a family $\FamEnum{i}{I}{k_i}$ of nuclei on a locale,
  $k_{s}$ is a prenucleus for every $s : \ListTy{I}$.
\end{lemma}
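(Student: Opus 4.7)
My plan is to proceed by induction on the length of the word $s : \ListTy{I}$, after establishing the general fact that prenuclei are closed under composition and that the identity map is a prenucleus. Then $k_s$, being a finite composition of nuclei (which are in particular prenuclei), will be a prenucleus as well.

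First I would prove an auxiliary closure lemma: if $j, k : \opens{X} \to \opens{X}$ are prenuclei, then so is $j \circ k$. For inflationarity, given $U : \opens{X}$ we have $U \le k(U)$ by inflationarity of $k$, and applying $j$ (which is monotone, since it preserves binary meets and hence the order) together with inflationarity of $j$ gives $U \le k(U) \le j(k(U))$. For binary-meet preservation, a direct calculation suffices:
\begin{equation*}
  j(k(U \wedge V)) \;=\; j(k(U) \wedge k(V)) \;=\; j(k(U)) \wedge j(k(V)),
\end{equation*}
using the meet preservation of $k$ and then of $j$. Similarly, the identity map $\mathrm{id} : \opens{X} \to \opens{X}$ is trivially a prenucleus.

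Now I would induct on $s : \ListTy{I}$. For the base case $s \equiv \emptyl$, the empty composition $k_\emptyl$ is the identity, which is a prenucleus as noted. For the inductive step, write $s \equiv t \append [i]$ so that $k_s = k_i \circ k_t$. By the induction hypothesis $k_t$ is a prenucleus, and $k_i$ is a prenucleus since every nucleus is (by definition a nucleus is, in particular, inflationary and meet-preserving). The closure lemma then gives that $k_i \circ k_t = k_s$ is a prenucleus.

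I do not foresee a substantial obstacle here: the argument reduces entirely to the routine verifications that prenuclei compose and that identities are prenuclei, both of which are elementary given monotonicity (which itself follows from meet preservation via the standard fact $U \le V \leftrightarrow U \wedge V = U$). The only point requiring minor care is matching the direction of composition with the recursion on $\ListTy{I}$, i.e.\ deciding whether to induct by prepending or appending an index; either works, since the closure lemma is symmetric in its two arguments.
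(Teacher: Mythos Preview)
Your proposal is correct and follows essentially the same approach as the paper. The paper also proceeds by induction on $s$, with the identity as the base case and the composite $k_{s'} \circ k_i$ in the inductive step; the only cosmetic differences are that the paper decomposes $s$ by prepending ($s = i\,s'$) rather than appending, and performs the meet-preservation and inflationarity checks inline rather than factoring them out as a separate closure-under-composition lemma for prenuclei.
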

\begin{proof}
  If $s = \emptyl$, we are done as it is immediate that the identity function
  $\mathsf{id}$ is a prenucleus. If $s = i \cons s'$, we need to show that
  $\dd{s'} \circ k_i$ is a prenucleus. For meet preservation, let $U, V
  : \opens{X}$.
  \begin{align*}
    (\dd{s'} \circ k_i)(U \wedge V)
    &\quad=\quad \dd{s'}(k_i(U \wedge V))                     \\
    &\quad=\quad \dd{s'}(k_i(U) \wedge k_i(V))
    && [\text{$k_i$ is a nucleus}]          \\
    &\quad=\quad \dd{s'}(k_i(U)) \wedge \dd{s'}(k_i(V))
    && [\text{inductive hypothesis}]        \\
    &\quad=\quad (\dd{s'} \circ k_i)(U) \wedge (\dd{s'} \circ k_i)(V)
  \end{align*}
  For inflationarity, consider some $U : \opens{X}$. We have that $U \le
  k_i(U) \le \dd{s'}(k_i(U))$, by the inflationarity property of $k_i$ and the
  inductive hypothesis.
\end{proof}

\begin{lemma}\label{prop:star-ub}
  Given a nucleus $j$ and a family $\FamEnum{i}{I}{k_i}$ of \VNuclei{} on a
  locale, if $j$ is an upper bound then it is also an upper bound of the
  family of finite compositions.
\end{lemma}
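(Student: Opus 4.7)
The plan is to prove this by induction on the word $s : \ListTy{I}$, mirroring the structure of the proof of Lemma~\ref{lem:star-prenucleus}. Let $j$ be a nucleus with $k_i \le j$ pointwise for every $i : I$, and write $s = i_0 \cons \cdots \cons i_{n-1}$.

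In the base case $s = \emptyl$, the finite composition $\dd{\emptyl}$ is the identity function $\mathsf{id}$, so the desired inequality $\mathsf{id}(U) \le j(U)$ is exactly the inflationarity of $j$. For the inductive step, write $s = i \cons s'$, so that $\dd{s} = \dd{s'} \circ k_i$, and assume as induction hypothesis that $\dd{s'}(V) \le j(V)$ for every $V : \opens{X}$. Given $U : \opens{X}$, the key chain of inequalities is
\begin{equation*}
  \dd{s'}(k_i(U)) \quad\le\quad j(k_i(U)) \quad\le\quad j(j(U)) \quad\le\quad j(U),
\end{equation*}
where the first step is the induction hypothesis instantiated at $V \is k_i(U)$, the second step uses monotonicity of $j$ applied to the hypothesis $k_i(U) \le j(U)$, and the third step is idempotency of $j$.

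The only subtlety worth flagging is that nowhere does the definition of a nucleus in Definition~\ref{defn:nucleus} explicitly mention monotonicity, but this follows for free from binary meet preservation: if $A \le B$ then $A = A \wedge B$, so $j(A) = j(A) \wedge j(B) \le j(B)$. With that observation, no step in the induction is an obstacle — the proof is essentially a repeated application of the three defining properties of the upper-bounding nucleus $j$, with the iterated composition structure handled transparently by the induction.
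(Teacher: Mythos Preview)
Your proof is correct and follows essentially the same inductive argument as the paper. The only cosmetic difference is the order of the two middle steps in the inductive case: the paper first uses monotonicity of the prenucleus $\dd{s'}$ (invoking Lemma~\ref{lem:star-prenucleus}) to get $\dd{s'}(k_i(U)) \le \dd{s'}(j(U))$ and then applies the induction hypothesis at $j(U)$, whereas you apply the induction hypothesis at $k_i(U)$ first and then use monotonicity of $j$; your ordering is marginally cleaner since it only needs monotonicity of the single nucleus $j$ rather than of the iterated composite.
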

\begin{proof}
  Let $j$ and $\FamEnum{i}{I}{k_i}$ be, respectively, a \VNucleus{} and a
  family of \VNuclei{} on a locale. Let $s : \mathsf{List}(I)$. We proceed by
  induction on $s$. If $s = \emptyl$, we have that $\mathsf{id}(U) \equiv U \le j(U)$. If
  $s = i \cons s'$, we then have:
  \begin{align*}
    k_{s'}(k_i(U))
    \quad&\le\quad k_{s'}(j(U))
    && [\text{monotonicity of $k_{s'}$ (Lemma~\ref{lem:star-prenucleus} and monotonicity of prenuclei)}] \\
    \quad&\le\quad j(j(U))
    && [\text{IH}] \\
    \quad&\le\quad j(U)
    && [\text{idempotency of $j$}]
  \end{align*}
\end{proof}

%% Called `^*-scott-continuous` in the formalisation.
\begin{lemma}\label{prop:star-sc}
  Given a family $\FamEnum{i}{I}{k_i}$ of \VSCNuclei{} on a locale,
  the prenucleus $k_s$ is \VScottContinuous{}, for every $s : \ListTy{I}$
\end{lemma}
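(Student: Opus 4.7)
The plan is to proceed by straightforward induction on $s : \ListTy{I}$, using the fact that Scott continuity is preserved under composition.

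For the base case $s = \emptyl$, we have $k_\emptyl \equiv \mathsf{id}$, which trivially preserves all joins and is therefore \VScottContinuous{}. For the inductive step $s = i \cons s'$, the definition of the family of finite compositions gives $k_s = k_{s'} \circ k_i$. Given a directed small family $\FamEnum{j}{J}{U_j}$, I would compute
\begin{align*}
  k_s\paren{\bigvee_{j : J} U_j}
  &\quad=\quad k_{s'}\paren{k_i\paren{\bigvee_{j : J} U_j}} \\
  &\quad=\quad k_{s'}\paren{\bigvee_{j : J} k_i(U_j)}
     && [\text{$k_i$ is \VScottContinuous{}}]\\
  &\quad=\quad \bigvee_{j : J} k_{s'}(k_i(U_j))
     && [\text{inductive hypothesis}] \\
  &\quad=\quad \bigvee_{j : J} k_s(U_j).
\end{align*}

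The one subtlety to flag is that in order to apply the inductive hypothesis in the third step, one needs the family $\FamEnum{j}{J}{k_i(U_j)}$ to be directed. This is immediate from monotonicity of $k_i$: since $k_i$ is a \VNucleus{}, it is in particular a prenucleus, and prenuclei are monotone (they preserve binary meets). Monotone functions preserve directedness, so $\FamEnum{j}{J}{k_i(U_j)}$ is directed whenever $\FamEnum{j}{J}{U_j}$ is. I expect no real obstacle here; the argument is just the standard fact that a composite of \VScottContinuous{} maps is \VScottContinuous{}, packaged into an induction on words.
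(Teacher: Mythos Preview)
Your proposal is correct and takes essentially the same approach as the paper, which simply states that any composition of finitely many \VScottContinuous{} functions is \VScottContinuous{}. Your write-up spells out the induction and the directedness check in more detail, but the underlying argument is identical.
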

\begin{proof}
  Any composition of finitely many \VScottContinuous{} functions is
  \VScottContinuous{}.
\end{proof}

\begin{lemma}\label{lem:star-dir}
  Given a family $\FamEnum{i}{I}{k_i}$ of \VNuclei{} on a locale, the family of
  finite compositions is \VDirected{}.
\end{lemma}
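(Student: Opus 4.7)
The plan is to verify the two clauses in Definition~\ref{defn:directed} for the family $(k_s)_{s : \ListTy{I}}$ of finite compositions, under the pointwise order on prenuclei.

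For the first clause, the index type $\ListTy{I}$ is inhabited by the empty word $\emptyl$, and indeed $k_{\emptyl} \equiv \mathsf{id}$ is a prenucleus by Lemma~\ref{lem:star-prenucleus}.

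For the second clause, given two words $s, t : \ListTy{I}$, I would take their concatenation $s \append t$ as the witness of an upper bound. Unfolding the definition of $k^*$, one sees that $k_{s \append t} = k_t \circ k_s$: if $s = i_0 \cdots i_{n-1}$ and $t = j_0 \cdots j_{m-1}$, then
\begin{equation*}
k_{s \append t} \;\equiv\; k_{j_{m-1}} \circ \cdots \circ k_{j_0} \circ k_{i_{n-1}} \circ \cdots \circ k_{i_0} \;\equiv\; k_t \circ k_s.
\end{equation*}
It then remains to check the pointwise inequalities $k_s \le k_{s \append t}$ and $k_t \le k_{s \append t}$. For any $U : \opens{X}$, inflationarity of $k_t$ (which holds since $k_t$ is a prenucleus by Lemma~\ref{lem:star-prenucleus}) yields $k_s(U) \le k_t(k_s(U)) = k_{s \append t}(U)$, and monotonicity of $k_t$ applied to the inflationarity $U \le k_s(U)$ of $k_s$ yields $k_t(U) \le k_t(k_s(U)) = k_{s \append t}(U)$.

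There is no real obstacle here; the content is simply that concatenation of words corresponds to composition of the associated prenuclei, and that composition with an inflationary monotone map only increases values. Note that we do not need idempotency of the $k_i$ anywhere in this argument, only inflationarity and monotonicity, which is why the same proof will apply even when we later work with prenuclei rather than genuine nuclei.
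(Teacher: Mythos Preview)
Your proposal is correct and follows essentially the same approach as the paper: inhabitation by the empty word with $k_\emptyl = \mathsf{id}$, and the upper bound of $k_s$ and $k_t$ given by $k_{s \append t} = k_t \circ k_s$, verified via inflationarity and monotonicity of the prenuclei involved. Your write-up is simply more explicit than the paper's, which compresses the upper-bound verification into the single sentence ``follows from monotonicity and inflationarity.''
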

\begin{proof}
  The family $\FamEnum{s}{\ListTy{I}}{k_{s}}$ is indeed always inhabited by
  the identity nucleus given by $\dd{\epsilon}$. The upper bound of nuclei $\dd{s}$ and
  $\dd{t}$ is given by $\dd{s \append t}$, which is equal to $\dd{t} \circ \dd{s}$.
  The fact that this is an upper bound of $\{ \dd{s}, \dd{t} \}$ follows from
  monotonicity and inflationarity.
\end{proof}

\begin{lemma}\label{lem:delta-gamma}
  Let $j$ be a nucleus and $\FamEnum{s}{\ListTy{I}}{k_{s}}$ a family of
  \VNuclei{} on a locale $X$. Consider the family of finite compositions over
  the family $\FamEnum{i}{I}{j \wedge k_i}$. Each finite composition
  $(j \wedge k)^*_{s}$ is a lower bound of $\{ k_{s}, j \}$ for every
  $s : \ListTy{I}$.
\end{lemma}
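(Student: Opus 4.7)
The plan is to proceed by induction on the word $s : \ListTy{I}$, establishing the two inequalities $(j \wedge k)^*_s \le k_s$ and $(j \wedge k)^*_s \le j$ in the pointwise order simultaneously. For the base case $s = \emptyl$, both $(j \wedge k)^*_\emptyl$ and $k_\emptyl$ are the identity function, so the first inequality is immediate and the second reduces to the inflationarity of $j$.

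For the inductive step, I would write $s = i \cons s'$, so that $(j \wedge k)^*_s = (j \wedge k)^*_{s'} \circ (j \wedge k_i)$ and $k_s = k_{s'} \circ k_i$. To establish $(j \wedge k)^*_s(U) \le k_s(U)$, I would combine the inductive hypothesis $(j \wedge k)^*_{s'} \le k_{s'}$ with the monotonicity of the prenucleus $k_{s'}$ (which is available by Lemma~\ref{lem:star-prenucleus}) and the obvious bound $j(U) \wedge k_i(U) \le k_i(U)$, giving
\[
  (j \wedge k)^*_{s'}\bigl(j(U) \wedge k_i(U)\bigr) \;\le\; k_{s'}\bigl(j(U) \wedge k_i(U)\bigr) \;\le\; k_{s'}(k_i(U)) \;=\; k_s(U).
\]
To establish $(j \wedge k)^*_s(U) \le j(U)$, I would invoke the other half of the inductive hypothesis, $(j \wedge k)^*_{s'} \le j$, to step to $j(j(U) \wedge k_i(U))$, and then use that $j$ preserves binary meets and is idempotent to conclude
\[
  (j \wedge k)^*_{s'}\bigl(j(U) \wedge k_i(U)\bigr) \;\le\; j\bigl(j(U) \wedge k_i(U)\bigr) \;=\; j(j(U)) \wedge j(k_i(U)) \;=\; j(U) \wedge j(k_i(U)) \;\le\; j(U).
\]

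There is no real obstacle here: the argument is a routine simultaneous induction on the word $s$. The only mildly delicate point is the second half of the inductive step, where it is essential that $j$ is a genuine \emph{nucleus} rather than merely a prenucleus — both its meet-preservation and its idempotency are needed together in order to collapse $j(j(U) \wedge k_i(U))$ down to something bounded by $j(U)$. The corresponding hypothesis is not needed of the $k_i$'s, which explains why the lemma can be stated with $k_i$'s that are only later promoted.
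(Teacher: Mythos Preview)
Your argument is correct; the paper in fact states Lemma~\ref{lem:delta-gamma} without proof, and the simultaneous induction on $s$ that you give is exactly the routine verification one would expect. Your observation that idempotency (and meet-preservation) of $j$ is essential for the second inequality, while only monotonicity of the $k_{s'}$ is needed for the first, is accurate.
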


We are now ready to construct the join operation in the meet-semilattice of
\VSCNuclei{} hence defining the patch frame $\opens{\Patch(X)}$ of the frame of
a locale $X$.

\begin{theorem}[Join of \VSCNuclei{}]\label{defn:sc-join}
  Let $\FamEnum{i}{I}{k_i}$ be a family of \VSCNuclei{}. The join of $K$ can be
  calculated as
  \begin{equation*}
    \left(\bigvee_{i : I} k_i\right) (U) \is \bigvee_{s : \ListTy{I}} k_{s}(U).
  \end{equation*}
\end{theorem}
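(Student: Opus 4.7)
The plan is to show that the pointwise formula $j(U) \is \bigvee_{s : \ListTy{I}} k_s(U)$ defines a \VScottContinuous{} nucleus, and that it is the least upper bound of $\FamEnum{i}{I}{k_i}$ in the meet-semilattice from Lemma~\ref{prop:sc-nuclei-semilattice}. Note first that the defining join lives in $\opens{X}$ and is well defined: $\ListTy{I}$ is small since $I$ is small, and by Lemma~\ref{lem:star-dir} together with the monotonicity of each prenucleus $k_s$ (Lemma~\ref{lem:star-prenucleus}), the family $\FamEnum{s}{\ListTy{I}}{k_s(U)}$ is directed in $\opens{X}$ for every $U$.

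Next I would verify the four algebraic properties of $j$ in turn. Inflationarity is immediate from $k_\emptyl = \mathsf{id}$. Binary meet preservation decomposes as $j(U \wedge V) = \bigvee_s k_s(U \wedge V) = \bigvee_s k_s(U) \wedge k_s(V)$ using Lemma~\ref{lem:star-prenucleus}, and this equals $j(U) \wedge j(V) = \bigvee_{s,t} k_s(U) \wedge k_t(V)$ by frame distributivity, where the $({\le})$ direction follows by taking any upper bound $u$ of $s, t$ supplied by directedness and noting $k_s(U) \wedge k_t(V) \le k_u(U) \wedge k_u(V) = k_u(U \wedge V)$. For \VScottContinuity{} of $j$, given a small directed family $\FamEnum{m}{M}{U_m}$, we compute $j(\bigvee_m U_m) = \bigvee_s k_s(\bigvee_m U_m) = \bigvee_s \bigvee_m k_s(U_m) = \bigvee_m \bigvee_s k_s(U_m) = \bigvee_m j(U_m)$ using Lemma~\ref{prop:star-sc} and the fact that directed joins can be swapped.

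The main obstacle, and the step where Scott continuity is essential, is idempotency. Here I would use $j(j(U)) = \bigvee_t k_t(j(U)) = \bigvee_t k_t\paren{\bigvee_s k_s(U)} = \bigvee_t \bigvee_s k_t(k_s(U))$, invoking \VScottContinuity{} of each $k_t$ together with directedness of $\FamEnum{s}{\ListTy{I}}{k_s(U)}$ to push $k_t$ inside the join. By construction of the composition notation, $k_t \circ k_s = k_{s \append t}$, so the double join becomes $\bigvee_{s, t} k_{s \append t}(U)$, which is a join over a subfamily of $\FamEnum{u}{\ListTy{I}}{k_u(U)}$ and hence bounded above by $j(U)$; as noted after Definition~\ref{defn:nucleus}, this inequality suffices.

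Finally, for the universal property: $j$ is an upper bound of $\FamEnum{i}{I}{k_i}$ because, for a single-letter word, $k_i(U) \le \bigvee_s k_s(U) = j(U)$. Conversely, if $j'$ is any \VSCNucleus{} that is pointwise above every $k_i$, then by Lemma~\ref{prop:star-ub} it is also above every finite composition $k_s$, so $j(U) = \bigvee_s k_s(U) \le j'(U)$ pointwise. Combining this with Lemma~\ref{prop:sc-nuclei-semilattice} yields the claimed formula for joins in the meet-semilattice of \VSCNuclei{}.
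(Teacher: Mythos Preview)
Your proposal is correct and follows essentially the same route as the paper: verify inflationarity, meet preservation via the prenucleus property and directedness of $\FamEnum{s}{\ListTy{I}}{k_s}$, idempotency by pushing each $k_t$ through the inner directed join using Lemma~\ref{prop:star-sc} and then recognizing $k_t \circ k_s$ as a single finite composition, Scott continuity by commuting joins, and the least-upper-bound property via Lemma~\ref{prop:star-ub}. Your explicit remark that $\FamEnum{s}{\ListTy{I}}{k_s(U)}$ is directed (needed to apply Scott continuity of $k_t$ in the idempotency step) is a point the paper leaves implicit, and your citation of Lemma~\ref{lem:star-prenucleus} for the meet-preservation step is in fact more accurate than the paper's own reference there.
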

\begin{proof}
  It must be checked that this is (1) indeed the join, (2) is a \VSCNucleus{}
  i.e.\ it is inflationary, binary-meet-preserving, idempotent, and
  \VScottContinuous. The inflationarity property is direct. For meet preservation,
  consider some $U, V : \opens{X}$. We have:
  \begin{align*}
        \left(\bigvee_{i : I} k_i\right)(U \wedge V)
    &\quad=\quad \bigvee_{s : \ListTy{I}} \dd{s}(U \wedge V)   \\
    &\quad=\quad \bigvee_{s : \ListTy{I}} \dd{s}(U) \wedge \dd{s}(V)
        && [\text{Lemma~\ref{prop:star-sc}}]  \\
    &\quad=\quad \bigvee_{s,t : \ListTy{I}} \dd{s}(U) \wedge \dd{t}(V)
        && [\dag]                           \\
    &\quad=\quad \paren{\bigvee_{s : \ListTy{I}} \dd{s}(U)} \wedge \paren{\bigvee_{t : \ListTy{I}} \dd{t}(V)}
        && [\text{distributivity}]       \\
    &\quad=\quad \paren{\bigvee_{i : I} k_i}(U) \wedge \paren{\bigvee_{i : I} k_i}(V)
  \end{align*}
  where step ($\dag$) uses antisymmetry. The
  $(\le)$ direction is direct whereas for the $(\ge)$
  direction, we show that $\bigvee_{s : \ListTy{I}} \dd{s}(U) \wedge \dd{s}(V)$ is an upper
  bound of the family
  %\begin{equation*}
    $\FamEnum{s,t}{\ListTy{I}}{\dd{s}(U) \wedge \dd{t}(V)}.$
  %\end{equation*}
  Consider arbitrary $s, t : \ListTy{I}$. By the directedness of the family of
  finite compositions we know that there exists some $u : \ListTy{I}$ such that
  $\dd{u}$ is an upper bound of $\{\dd{s}, \dd{t}\}$.
  We then have
  \begin{equation*}
    \dd{s}(U) \wedge \dd{t}(V) \le \dd{u}(U) \wedge \dd{u}(V) \le \bigvee_{s : \ListTy{I}} \dd{s}(U) \wedge \dd{s}(V).
  \end{equation*}
  For idempotency, let $U : \opens{X}$. We have that:
  \begin{align*}
         \left(\bigvee_{i} k_i\right)\left(\left(\bigvee_{i} k_i\right)(U)\right)
    &\quad\equiv\quad \bigvee_{s : \ListTy{I}} \dd{s}\left( \bigvee_{t : \ListTy{I}} \dd{t}(U) \right) \\
    &\quad=\quad \bigvee_{s : \ListTy{I}} \bigvee_{t : \ListTy{I}} \dd{s}(\dd{t}(U)) & [\text{Lemma~\ref{prop:star-sc}}]\\
    &\quad\le\quad \bigvee_{s,t : \ListTy{I}} \dd{s}(\dd{t}(U)) & ~ \\
    &\quad\le\quad \bigvee_{s : \ListTy{I}} \dd{s}(U) & [\dag]\\
    &\quad\equiv\quad \paren{\bigvee_i k_i}(U),
  \end{align*}
  where for the step ($\dag$) it suffices to show that
  $\bigvee_{s : \ListTy{I}} \dd{s}(U)$ is an upper bound
  of the family
  $\FamEnum{s,t}{\ListTy{I}}{\dd{s}(\dd{t}(U))}.$
  %\[\setof{ \dd{s}(\dd{t}(U)) \mid \Pair{s}{t} : \ListTy{I} \times \ListTy{I} }.\]
  %
  The prenucleus $\dd{s \append t}$ is an upper bound of $\dd{s}$ and $\dd{t}$
  (as in Lemma~\ref{lem:star-dir}).
  We have that
  \[\dd{s}(\dd{t}(U)) = \dd{s \append t}(U) \le \bigvee_{s : \ListTy{I}}\dd{s}(U).\]

  For Scott continuity, let $\FamEnum{j}{J}{U_j}$ be a directed family over
  $\opens{X}$. We then have:
  \begin{align*}
    \left(\bigvee_{i : I} k_i\right)\left(\bigvee_{j : J} U_j\right)
  &\quad\equiv\quad \bigvee_{s : \ListTy{I}} \dd{s}\left(\bigvee_{j : J} U_j\right) \\
  &\quad=\quad \bigvee_{s : \ListTy{I}} \bigvee_{j : J} \dd{s}(U_j) && [\text{Lemma~\ref{prop:star-sc}}]    \\
  &\quad=\quad \bigvee_{j : J} \bigvee_{s : \ListTy{I}} \dd{s}(U_j) && [\text{joins commute}] \\
  &\quad\equiv\quad \bigvee_{j : J} \left(\bigvee_{i : I} k_i\right)(U_j)
  \end{align*}
  as required.

  The fact that $\bigvee_i k_i$ is an upper bound of $\FamEnum{i}{I}{k_i}$ is easy
  to verify. To see that it is \emph{the least} upper bound, consider a
  \VSCNucleus{} $j$ that is an upper bound of $\FamEnum{i}{I}{k_i}$. Let $U :
  \opens{X}$. We need to show that $(\bigvee_i k_i)(U) \le j(U)$. We know by
  Lemma~\ref{prop:star-ub} that $j$ is an upper bound of the family of
  finite compositions, since it is
  an upper bound of $\FamEnum{i}{I}{k_i}$, which is to say
  $k_{s}(U) \le j(U)$ for every $s : \ListTy{I}$ i.e.\ $j(U)$ is an upper bound
  of the family $\FamEnum{s}{\ListTy{I}}{k_s(U)}$.
  Since $\left(\bigvee_i k_i\right)(U)$ is the least upper bound of this family by
  definition, it follows that it is below $j(U)$.
\end{proof}

We use Lemma~\ref{lem:delta-gamma} to prove the distributivity law.

\begin{lemma}[Distributivity]\label{prop:distributivity}
  For any \VSCNucleus{} $j$ and any family $\FamEnum{i}{I}{k_i}$ of
  \VSCNuclei{}, we have the equality:
  \begin{equation*}
    j \wedge \left(\bigvee_{i : I} k_i\right) = \bigvee_{i : I} j \wedge k_i.
  \end{equation*}
\end{lemma}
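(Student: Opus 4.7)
The plan is to prove the two inequalities $j \wedge \bigvee_i k_i \ge \bigvee_i (j \wedge k_i)$ and $j \wedge \bigvee_i k_i \le \bigvee_i (j \wedge k_i)$ separately, writing $N \is \bigvee_i k_i$ and $N' \is \bigvee_i (j \wedge k_i)$ for brevity. For the $(\ge)$ direction, since $j \wedge k_i \le j$ and $j \wedge k_i \le k_i \le N$ for each $i$, we have $j \wedge k_i \le j \wedge N$, and so the join $N'$ also satisfies $N' \le j \wedge N$; equivalently, this can be extracted pointwise from Lemma~\ref{lem:delta-gamma}, which supplies $(j \wedge k)^*_s(U) \le j(U) \wedge k_s(U)$, after which we take the join over $s : \ListTy{I}$ and apply frame distributivity in $\opens{X}$ to the right-hand side.

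For the $(\le)$ direction, I would pass to the fixed point $V \is N'(U)$ of the Scott continuous nucleus $N'$. Since $N'$ is idempotent, $N'(V) = V$; and since $N' \ge j \wedge k_i$ for every $i$, we have $(j \wedge k_i)(V) \le V$, which combined with the inflationarity of $j \wedge k_i$ gives $j(V) \wedge k_i(V) = V$. I would then prove by induction on $s : \ListTy{I}$ that $j(V) \wedge k_s(V) = V$. The base case $s = \emptyl$ reduces to $j(V) \wedge V = V$, which holds by the inflationarity of $j$. For the inductive step $s = i \cons s'$, with $k_s(V) = k_{s'}(k_i(V))$, I would use that the prenucleus $k_{s'}$ (from Lemma~\ref{lem:star-prenucleus}) is inflationary, so that $j(V) = j(V) \wedge k_{s'}(j(V))$, and binary-meet-preserving, to compute
\begin{align*}
  j(V) \wedge k_{s'}(k_i(V))
  &\;=\; j(V) \wedge k_{s'}(j(V)) \wedge k_{s'}(k_i(V)) \\
  &\;=\; j(V) \wedge k_{s'}(j(V) \wedge k_i(V)) \\
  &\;=\; j(V) \wedge k_{s'}(V),
\end{align*}
which by the induction hypothesis equals $V$; the equality $j(V) \wedge k_i(V) = V$ is used at the penultimate step.

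Once the sublemma is established, applying frame distributivity in $\opens{X}$ yields $(j \wedge N)(V) = j(V) \wedge \bigvee_s k_s(V) = \bigvee_s (j(V) \wedge k_s(V)) = V$. Since $U \le V$ by the inflationarity of $N'$ and $j \wedge N$ is monotone (as a pointwise meet of monotone maps), we conclude $(j \wedge N)(U) \le (j \wedge N)(V) = V = N'(U)$, which closes the $(\le)$ direction. The main obstacle is this second direction: Lemma~\ref{lem:delta-gamma} by itself gives $(j \wedge k)^*_s \le j \wedge k_s$, which is the wrong direction for what is needed here, so one cannot argue by directly comparing finite compositions pointwise. Passing to the fixed point $N'(U)$ is what makes the word-length induction go through, because the fixed-point equation $j(V) \wedge k_i(V) = V$ collapses each step via a single use of meet-preservation of the prenucleus $k_{s'}$.
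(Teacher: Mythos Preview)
Your argument is correct. The paper does not spell out a proof of this lemma; it only remarks that Lemma~\ref{lem:delta-gamma} is the tool and then states the distributivity law without further detail. Your treatment of the $(\ge)$ direction is in line with that hint: from $(j\wedge k)^*_s \le j$ and $(j\wedge k)^*_s \le k_s$ one gets $N'(U)=\bigvee_s (j\wedge k)^*_s(U)\le j(U)\wedge\bigvee_s k_s(U)=(j\wedge N)(U)$, just as you indicate (and, as you also note, this inequality is already forced by pure lattice reasoning in the poset of Scott continuous nuclei).

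For the $(\le)$ direction your fixed-point argument is sound: setting $V\is N'(U)$, idempotency of $N'$ together with $j\wedge k_i\le N'$ gives $j(V)\wedge k_i(V)=V$ for each $i$, and your word-length induction---using inflationarity and meet-preservation of the prenucleus $k_{s'}$ from Lemma~\ref{lem:star-prenucleus}---correctly extends this to $j(V)\wedge k_s(V)=V$ for every $s:\ListTy{I}$; frame distributivity in $\opens{X}$ (the index type $\ListTy{I}$ being inhabited by $\emptyl$) and monotonicity of $j\wedge N$ then yield $(j\wedge N)(U)\le(j\wedge N)(V)=V=N'(U)$. You are right that Lemma~\ref{lem:delta-gamma} by itself points the wrong way for this inequality, so an independent argument like yours is needed. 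Since the paper suppresses its proof, a line-by-line comparison is not possible, but your argument is complete and self-contained.
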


It follows that the \VScottContinuous{} \VNuclei{} form a frame.

\begin{definition}[Patch locale of a spectral locale]\label{defn:patch}
  The \emph{patch locale} of a spectral locale $X$%
  , written $\Patch(X)$,
  is given by the frame of \VSCNuclei{} on $X$.
\end{definition}

Notice that the truth value of the relation $j \le k$ between Scott continuous nuclei lives by default in the universe $\USucc{\UU}$. However, it has an equivalent copy in the universe $\UU$:

\begin{lemma}\label{lem:nuclei-basic-order} \label{prop:basic-ordering-iff}
  For any spectral locale $X$ and any two nuclei $j , k : \Patch(X)$, we have that
  \begin{equation*}
    \text{$j \le k$ if and only if $j(K) \le k(K)$ for all $K:\CompactOpens{X}$},
  \end{equation*}
  and hence $\Patch(X)$ is locally small.
\end{lemma}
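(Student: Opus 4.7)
The plan is to prove the biconditional first and then deduce local smallness from it via Lemma~\ref{prop:local-smallness-equiv}. The forward direction ($\Rightarrow$) is immediate: by definition, $j \le k$ in the pointwise order of $\Patch(X)$ asserts $j(U) \le k(U)$ for all $U : \opens{X}$, so in particular the inequality holds when $U$ is a compact open.

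For the backward direction ($\Leftarrow$), I would use that $X$ is spectral. By Theorem~\ref{thm:answer}\,(\ref{it:V}), every open $U : \opens{X}$ can be written as the join of the directed family $\FamEnum{K}{\CompactOpens{X} \cap \mathord{\downarrow}U}{K}$ of all compact opens below $U$. This family is directed because the top element is compact by \ref{item:sl-compact}, and the join of two compact opens is compact (being a finite join of compacts) and remains below $U$. Assuming $j(K) \le k(K)$ for every compact $K$, I would then compute, using Scott continuity of both $j$ and $k$,
\begin{equation*}
j(U) \;=\; j\!\left(\bigvee_{K \le U,\, K \text{ compact}} K\right) \;=\; \bigvee_{K \le U,\, K \text{ compact}} j(K) \;\le\; \bigvee_{K \le U,\, K \text{ compact}} k(K) \;=\; k(U),
\end{equation*}
which gives $j \le k$ as desired. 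No step here is obstructed: spectrality supplies the directed decomposition, and Scott continuity is precisely what lets us transport the pointwise inequality from compact opens to all opens.

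For local smallness of $\Patch(X)$, the biconditional just established exhibits the relation $j \le k$ as equivalent to $\PiTy{K}{\CompactOpens{X}}{j(K) \le k(K)}$. The type $\CompactOpens{X}$ is small by \ref{item:sl-smallness}, and each instance $j(K) \le k(K)$ is small because $\opens{X}$ is locally $\UU$-small (the locale $X$ being $(\USucc\UU, \UU, \UU)$). Hence the $\PiTySym$-type is small, so $j \le k$ is $\UU$-small. Applying Lemma~\ref{prop:local-smallness-equiv} to the large and locally small frame $\Patch(X)$ (whose carrier is a set by the usual antisymmetry argument, and whose joins of small families were constructed in Theorem~\ref{defn:sc-join}), we conclude that its carrier is locally $\UU$-small.

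The only subtlety worth flagging is the directedness of the covering family of compact opens below $U$; I would cite the proof of Lemma~\ref{lem:spec-gives-basis} where this construction is used, rather than re-establishing the directedness from scratch. Everything else is a straightforward chain of rewrites using properties already proved.
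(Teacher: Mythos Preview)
Your proof is correct and follows essentially the same route as the paper: decompose an arbitrary open as a directed join of compact opens (the paper invokes \ref{item:sl-covering} directly rather than Theorem~\ref{thm:answer}), apply Scott continuity of $j$ and $k$, and then use smallness of $\CompactOpens{X}$ for the local-smallness conclusion. One minor slip: inhabitation of the family of compact opens below $U$ comes from $\Zero{X}$ being compact, not from \ref{item:sl-compact}; but since you ultimately defer to Lemma~\ref{lem:spec-gives-basis}, where directedness of this covering family is already established, this does not affect the argument.
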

\begin{proof}
  The usual pointwise ordering obviously implies the basic ordering so we
  address the other direction. Let $j$ and $k$ be two \VSCNuclei{} on
  a \VSpectralLocale{} $X$ and assume $j(K) \le k(K)$, for all $K : \CompactOpens{X}$.
  We need to show that $j(U)
  \le k(U)$ for every open $U$ so let $U : \opens{X}$.
  By \ref{item:sl-covering}, $U$ can be decomposed as
  $U = \bigvee_{i : I} K_i$
  for some directed covering family $\FamEnum{i}{I}{K_i}$ consisting of
  \VCompactOpen{}s.
  We then have
  $j(\bigvee_{i : I} K_i) = \bigvee_{i : I} j(K_i)$ by Scott continuity and
  \[\bigvee_{i : I} j(K_i) \le \bigvee_{i : I} k(K_i)\]
  since $j(K_i) \le k(K_i)$ for every $i : I$.
  Finally, because the quantification is over the small type $\CompactOpens{X}$, the proposition
  ``$\text{$j(K) \le k(K)$ for all $K:\CompactOpens{X}$}$''
  is small.
\end{proof}

\section{The coreflection property of \PatchTEXorPDF{}}
\label{sec:coreflection}

We prove in this section that our construction of $\Patch$ has the desired
universal property: it exhibits $\Stone$ as a coreflective subcategory of
$\Spec$.
The notions of \emph{closed} and \emph{open} \VNuclei{} are crucial for proving
the universal property. We first give the definitions of these. Let $U$ be an
open of a locale $X$;
\begin{enumerate}
  \item The \define{closed nucleus} induced by $U$ is the map $V \mapsto U \vee V$;
  \item The \define{open nucleus} induced by $U$ is the map $V \mapsto U \Rightarrow V$.
\end{enumerate}

We denote the closed nucleus induced by open $U$ by $\closednucl{U}$ and the
open nucleus induced by $U$ by $\opennucl{U}$.
\begin{lemma}\label{lem:characterisation}
  For any spectral locale $X$ and any monotone map $h : \opens{X} \to \opens{X}$,
  if for every $U : \opens{X}$ and compact $K : \opens{X}$ with $K \le h(U)$,
  there is some compact $K' \le U$ such that $K \le h(K')$, then $h$ is
  \VScottContinuous{}
\end{lemma}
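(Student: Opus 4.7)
The plan is to verify Scott continuity directly by checking that $h$ preserves the supremum of an arbitrary small directed family $\FamEnum{i}{I}{U_i}$, i.e.\ that $h(\bigvee_{i:I} U_i) = \bigvee_{i:I} h(U_i)$. The $(\ge)$ inequality is immediate from monotonicity of $h$, so the real content is the $(\le)$ direction.

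For the $(\le)$ direction, I would apply Lemma~\ref{prop:spectral-yoneda}: since $X$ is spectral, it suffices to show that whenever $K$ is a compact open with $K \le h(\bigvee_{i:I} U_i)$, one has $K \le \bigvee_{i:I} h(U_i)$. So I fix such a $K$ and invoke the hypothesis on $h$ to obtain a compact open $K' \le \bigvee_{i:I} U_i$ with $K \le h(K')$. Since $K'$ is compact and the family $\FamEnum{i}{I}{U_i}$ is directed, the definition of the way-below relation applied to $K' \WayBelow K'$ yields some $i : I$ with $K' \le U_i$. Monotonicity of $h$ then gives $K \le h(K') \le h(U_i) \le \bigvee_{i:I} h(U_i)$, as required.

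There is essentially no obstacle here: the hypothesis on $h$ was tailored so that a compact approximant to $h(\bigvee_i U_i)$ can be pulled back to a compact approximant below $\bigvee_i U_i$, at which point spectrality (via Lemma~\ref{prop:spectral-yoneda}) and compactness of $K'$ (via directedness of the $U_i$) close the argument. The only thing to be mildly careful about is that the covering families in Definition~\ref{defn:spectral-locale}\ref{item:sl-covering} are directed, which is exactly what allows the application of compactness of $K'$ against the hypothesised directed family $\FamEnum{i}{I}{U_i}$; but this is by assumption on the family, not on the spectral structure, so no additional directification step is needed.
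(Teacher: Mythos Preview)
Your proof is correct and follows essentially the same route as the paper: both reduce the inequality $h(\bigvee_{i:I} U_i) \le \bigvee_{i:I} h(U_i)$ to checking it on compact opens below the left-hand side (the paper does this by explicitly writing $h(\bigvee_i U_i)$ as a join of compact opens via \ref{item:sl-covering}, you by invoking Lemma~\ref{prop:spectral-yoneda}), then apply the hypothesis to pull back to a compact $K' \le \bigvee_i U_i$ and use compactness of $K'$ against the directed family. The arguments are interchangeable.
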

\begin{proof}
It suffices to show the relation $h( \bigvee_{i : I} U_i ) \leq \bigvee_{ i : I} h(U_i)$ holds. Since $X$ is spectral, let $(K_j)_{j : J}$ be a small family of compact opens such that $\bigvee_{j : J} K_j = h( \bigvee_{i : I} U_i )$. For any $j : J$, we have $K_j \leq h( \bigvee_{i : I} U_i )$, so by assumption there is a compact open $K\leq  \bigvee_{i : I} U_i$ such that $K_j \leq h(K)$. By compactness of $K$ there is an $i : I$ such that $K\leq U_i$, and so $K_j\leq h(K)\leq h(U_i)\leq \bigvee_{i:I} h(U_i)$ and so we can take $K' \is K_j$.
\end{proof}

\begin{lemma}
  For any open of any locale, the closed nucleus $\closednucl{U}$ is \VScottContinuous{},
  whereas the open nucleus $\opennucl{U}$ is \VScottContinuous{} if
  the open $U$ is \VCompact{}.
\end{lemma}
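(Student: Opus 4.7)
The plan is to prove the two claims separately. The closed nucleus case is direct, while the open nucleus case (for compact $U$) will be derived from Lemma~\ref{lem:characterisation}, which I read in the ambient spectral context of this section.

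For the closed nucleus $\closednucl{U}(V) \is U \vee V$, I would verify the stronger statement that $\closednucl{U}$ preserves all inhabited joins, and in particular directed ones. Given a directed (hence inhabited) family $\FamEnum{i}{I}{V_i}$, the inequality $\bigvee_{i : I}(U \vee V_i) \geq U \vee \bigvee_{j : I} V_j$ follows from two observations: since $\bigvee_{i : I}(U \vee V_i) \geq V_j$ for every $j : I$, it dominates $\bigvee_{j : I} V_j$; and since $I$ is inhabited, $\bigvee_{i : I}(U \vee V_i) \geq U$. The reverse inequality is immediate from monotonicity and the defining property of the join.

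For the open nucleus $\opennucl{U} = U \Rightarrow (\blank)$ with compact $U$, the plan is to apply Lemma~\ref{lem:characterisation} to $h \is \opennucl{U}$. Monotonicity is automatic since $\opennucl{U}$ is the right adjoint of $(\blank) \wedge U$. To verify the hypothesis, let $V : \opens{X}$ and let $K$ be a compact open with $K \leq U \Rightarrow V$, which by the Heyting adjunction is equivalent to $K \wedge U \leq V$. I would then take $K' \is K \wedge U$: this is compact by condition~\ref{item:sl-coherent}, since meets of compact opens in a spectral locale are compact. Clearly $K' \leq V$, and $K \wedge U = K' \leq K'$ gives $K \leq U \Rightarrow K' = \opennucl{U}(K')$, so the hypothesis is satisfied and Scott continuity follows.

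The subtle point — and the essential reason for requiring compactness of $U$ — is the step asserting that $K \wedge U$ is compact: it uses compactness of both $U$ and $K$, together with closure of compact opens under binary meets, which is available in the spectral ambient locale.
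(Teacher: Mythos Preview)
Your proposal is correct and follows essentially the same approach as the paper: the closed-nucleus case is dismissed as easy (you spell out the inhabited-join argument, the paper just says ``easy to see''), and for the open nucleus you apply Lemma~\ref{lem:characterisation} with the same witness $K' \is K \wedge U$, invoking \ref{item:sl-coherent} for its compactness. Your explicit remark that the open-nucleus argument relies on the ambient spectral context (both for Heyting implication and for Lemma~\ref{lem:characterisation}) is a useful clarification that the paper leaves implicit in the phrase ``any locale''.
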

\begin{proof}
  The Scott continuity of the closed nucleus is easy to see.
  For the open nucleus, let $D$ be a \VCompact{} open of a locale. By
  Lemma~\ref{lem:characterisation}, it is sufficient to show that for any open $V$
  and any \VCompact{} open $C_1$ with $C_1 \le D \Rightarrow V$, there exists some compact
  $C_2 \le D$ such that $C_1 \le D \Rightarrow C_2$. Let $V$ and $C_1$ be two opens with $C_1$
  compact and satisfying $C_1 \le D \Rightarrow V$. Pick $C_2 \is D \wedge C_1$. We know that this
  is compact by \ref{item:sl-coherent}. It remains to check (1) $C_2 \le V$ and (2)
  $C_1 \le D \Rightarrow C_2$, both of which are direct.
\end{proof}

In Lemma~\ref{lem:eps-perfect}, we prove that the map whose inverse image sends
an open $U$ to the closed nucleus $\closednucl{U}$ is perfect. Before
Lemma~\ref{lem:eps-perfect}, we record an auxiliary result:

\begin{lemma}\label{lem:eps-ra-bot}
  Let $X$ be a spectral locale. The right adjoint $\varepsilon_* : \opens{\Patch(X)} \to
  \opens{X}$ to the closed-nucleus formation operation $\closednucl{\blank}$ is
  given by $\varepsilon_*(j) = j(\ZeroSym)$ for every \VSCNucleus{} $j$ on~$X$.
\end{lemma}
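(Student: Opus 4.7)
My plan is to verify the characterizing adjunction property $\closednucl{U} \le j \leftrightarrow U \le j(\ZeroSym)$ directly, and then invoke uniqueness of right adjoints to conclude that $\varepsilon_*(j) = j(\ZeroSym)$.

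For the forward direction, I would assume $\closednucl{U} \le j$ in the pointwise order on $\opens{\Patch(X)}$, so that $\closednucl{U}(V) \le j(V)$ for every $V : \opens{X}$. Instantiating at $V = \ZeroSym$ gives $U = U \vee \ZeroSym = \closednucl{U}(\ZeroSym) \le j(\ZeroSym)$, which is exactly what we want.

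For the backward direction, I would assume $U \le j(\ZeroSym)$ and then, for an arbitrary $V : \opens{X}$, show that $\closednucl{U}(V) = U \vee V \le j(V)$. Here the two summands are handled separately: we have $U \le j(\ZeroSym) \le j(V)$ by the assumption and monotonicity of $j$ (applied to $\ZeroSym \le V$), while $V \le j(V)$ follows from the inflationarity of the nucleus $j$. Taking the join gives $U \vee V \le j(V)$.

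The two directions together show that the assignment $j \mapsto j(\ZeroSym)$ is right adjoint to $\closednucl{\blank}$. Since right adjoints are unique whenever they exist, and Theorem~\ref{thm:aft} (together with the fact that $\Patch(X)$ has a small basis coming from the spectral structure, which is needed to guarantee that $\varepsilon_*$ exists in the first place) ensures that $\varepsilon_*$ exists, we conclude $\varepsilon_*(j) = j(\ZeroSym)$. I expect no serious obstacle here: the only subtlety is that the right adjoint is guaranteed to exist by the posetal Adjoint Functor Theorem together with the availability of a small basis on $\Patch(X)$, but both of these are already in place in the preceding sections, so the argument reduces to the two short adjunction computations above.
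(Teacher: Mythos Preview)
Your argument is correct: the two short computations establish the adjunction $\closednucl{U} \le j \leftrightarrow U \le j(\ZeroSym)$, and that is all that is needed. The paper itself omits the proof entirely, referring to \cite{mhe-patch-short} and noting that it carries over unchanged to the predicative setting; the standard proof there is essentially the direct verification you give.

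One small simplification: you do not need to invoke Theorem~\ref{thm:aft} or the small basis on $\Patch(X)$ to guarantee that $\varepsilon_*$ exists. Your verification of the adjunction property already exhibits $j \mapsto j(\ZeroSym)$ as a right adjoint to $\closednucl{\blank}$; uniqueness of right adjoints then forces $\varepsilon_* = (j \mapsto j(\ZeroSym))$ without any separate existence argument. The appeal to the Adjoint Functor Theorem is harmless but superfluous here.
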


%\begin{lemma}\label{lem:directed-computed-%pointwise}
 % Given a directed family $\FamEnum{i}{I}%{k_i}$ of \VSCNuclei{}, their join
%  is computed pointwise, that is, $\left(\bigvee_{i : I} k_i\right)(U) = \bigvee_{i : I}
%  k_i(U)$.
%\end{lemma}

The proof of Lemma~\ref{lem:eps-ra-bot} can be found in
\cite{mhe-patch-short}. It is omitted here as it is  unchanged in our
predicative setting.

\begin{lemma}\label{lem:eps-perfect}
  The function $\closednucl{\blank}$ is a perfect frame homomorphism $\opens{X}
  \to \opens{\Patch(X)}$.
\end{lemma}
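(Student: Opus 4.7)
The plan is to split the proof into two parts: showing that $\closednucl{\blank}$ is a frame homomorphism, and showing that its right adjoint (which exists by Theorem~\ref{thm:aft} once join preservation is in hand, since $X$ is spectral and hence has a small basis of compact opens) is Scott continuous. For the frame-homomorphism part, I would verify the three preservation axioms. Preservation of the top is immediate from $\closednucl{\One{X}}(V) = \One{X} \vee V = \One{X}$, which is the top nucleus of $\opens{\Patch(X)}$. Preservation of binary meets reduces to frame distributivity in $\opens{X}$: the pointwise identity $(U \wedge V) \vee W = (U \vee W) \wedge (V \vee W)$ gives $\closednucl{U \wedge V} = \closednucl{U} \wedge \closednucl{V}$.

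The technical core of the argument is preservation of small joins. For a family $\FamEnum{i}{I}{U_i}$, I would expand $\bigvee_{i : I} \closednucl{U_i}$ using the explicit word-indexed formula from Theorem~\ref{defn:sc-join}. An easy induction on $s : \ListTy{I}$ shows that, for closed nuclei, the iterated composition is $\dd{s}(V) = V \vee U_{i_0} \vee \cdots \vee U_{i_{n-1}}$ when $s = (i_0, \ldots, i_{n-1})$. Taking the join over all words then collapses to
\[
\paren{\bigvee_{i : I} \closednucl{U_i}}(V) \;=\; \bigvee_{s : \ListTy{I}} \dd{s}(V) \;=\; V \vee \bigvee_{i : I} U_i \;=\; \closednucl{\bigvee_{i : I} U_i}(V),
\]
because the finite joins of members of the family $\FamEnum{i}{I}{U_i}$ are cofinal in the family indexed by $\ListTy{I}$.

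For perfectness, I would invoke Lemma~\ref{lem:eps-ra-bot} to identify the right adjoint as $\varepsilon_*(j) = j(\ZeroSym)$. Given a directed small family $\FamEnum{i}{I}{j_i}$ of \VSCNuclei{}, Lemma~\ref{lem:directed-computed-pointwise} tells us that its join in $\opens{\Patch(X)}$ is computed pointwise, so
\[
\varepsilon_*\paren{\bigvee_{i : I} j_i} \;=\; \paren{\bigvee_{i : I} j_i}(\ZeroSym) \;=\; \bigvee_{i : I} j_i(\ZeroSym) \;=\; \bigvee_{i : I} \varepsilon_*(j_i),
\]
as required. The main obstacle is the small-join preservation step, since it is the only place that actually exploits the nontrivial combinatorics of the list-indexed join formula; perfectness itself is essentially a one-line consequence of the two cited lemmas.
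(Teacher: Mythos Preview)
Your proposal is correct and, for the perfectness part, matches the paper's proof exactly: the paper also invokes Lemma~\ref{lem:eps-ra-bot} to identify $\varepsilon_*(j) = j(\ZeroSym)$ and then appeals to Lemma~\ref{lem:directed-computed-pointwise} to compute the directed join pointwise at $\ZeroSym$. The only difference is that you spell out the verification that $\closednucl{\blank}$ is a frame homomorphism (top, binary meets, and small joins via the word-indexed formula), whereas the paper's proof of this lemma treats that part as already established and addresses only the Scott continuity of the right adjoint.
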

\begin{proof}
  We have to show that the right adjoint $\varepsilon_*$ of $\closednucl{\blank}$ is
  \VScottContinuous{}. Let $\FamEnum{i}{I}{k_i}$ be a directed family of
  \VSCNuclei{}. Thanks to Lemma~\ref{lem:eps-ra-bot}, it suffices to show
  \[\left(\bigvee_{i : I} k_i\right)(\ZeroSym) = \bigvee_{i : I} \varepsilon_*(k_i).\] By
  Lemma~\ref{lem:directed-computed-pointwise}, we have
  that $\left(\bigvee_{i : I} k_i\right)(\ZeroSym) = \bigvee_{i : I} k_i(\ZeroSym)$.
\end{proof}

This function defines a continuous map $\varepsilon : \Patch(X) \to X$, which we will
show to be the counit of the coreflection in consideration.

\subsection{\PatchTEXorPDF{} is Stone}

Before we proceed to showing that the patch locale has the desired universal
property, we first need to show that $\Patch(X)$ is Stone for any spectral
locale $X$. For this purpose, we need to (1) show that it is compact, and (2)
construct a basis for it consisting of clopens. We start with compactness.

\begin{lemma}\label{prop:patch-is-compact}
  $\Patch(X)$ is a \VCompactLocale{} for any spectral locale $X$.
\end{lemma}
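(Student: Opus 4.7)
The goal is to show that the top element of $\Patch(X)$, namely the constant nucleus $\top$ with value $\One{X}$, is way-below itself. So I would take an arbitrary small directed family $\FamEnum{i}{I}{k_i}$ of Scott continuous nuclei satisfying $\top \le \bigvee_{i : I} k_i$, and produce an index $i : I$ with $\top \le k_i$.

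The key observation is that the directed join is computed pointwise by Lemma~\ref{lem:directed-computed-pointwise}, so the hypothesis $\top \le \bigvee_{i : I} k_i$ instantiated at $\Zero{X}$ yields
\[
\One{X} \;\le\; \bigvee_{i : I} k_i(\Zero{X}).
\]
The family $\FamEnum{i}{I}{k_i(\Zero{X})}$ is directed in $\opens{X}$, because evaluation at a fixed open is monotone in the pointwise order of nuclei (directedness of $\FamEnum{i}{I}{k_i}$ transfers to its values). Now I use that $X$ is spectral, and in particular compact: $\One{X}$ is way-below itself in $\opens{X}$. Therefore there exists some $i : I$ with $\One{X} \le k_i(\Zero{X})$.

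Finally, monotonicity of $k_i$ gives $k_i(\Zero{X}) \le k_i(U)$ for every open $U$, so $k_i(U) = \One{X}$ for every $U$, which is precisely $\top \le k_i$.

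There is no real obstacle here; the only thing to double-check is that directedness of the family of nuclei does transfer to directedness of the family $\FamEnum{i}{I}{k_i(\Zero{X})}$ in $\opens{X}$, but this is immediate from the pointwise definition of the order on nuclei. The whole argument packages neatly because compactness of $\Patch(X)$ reduces, via pointwise computation of directed joins, to compactness of $X$ itself.
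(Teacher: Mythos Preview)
Your argument is correct. It is, however, a more hands-on route than the one the paper takes. The paper observes that $\One{\Patch(X)} = \varepsilon^*(\One{X})$ since $\varepsilon^* = \closednucl{\blank}$ is a frame homomorphism, and then invokes Lemma~\ref{prop:perfect-resp-way-below} (perfect maps preserve the way-below relation) together with Lemma~\ref{lem:eps-perfect} (the map $\varepsilon$ is perfect) to conclude $\varepsilon^*(\One{X}) \WayBelow \varepsilon^*(\One{X})$ from $\One{X} \WayBelow \One{X}$. Your proof is essentially what one obtains by unfolding that argument: the right adjoint $\varepsilon_*$ is evaluation at $\Zero{X}$ (Lemma~\ref{lem:eps-ra-bot}), and its Scott continuity is precisely the pointwise computation of directed joins at $\Zero{X}$ that you invoke via Lemma~\ref{lem:directed-computed-pointwise}. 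So the two arguments have the same content; the paper's version trades a short direct calculation for reuse of the perfect-map machinery already set up for $\varepsilon$, while yours is self-contained and would work even before that machinery is in place.
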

\begin{proof}
  Recall that the top element
  $\OneSym$ of $\Patch(X)$ is defined as $\One{\Patch} \is U \mapsto \One{X}$.
  Because $\varepsilon^*$ is a frame
  homomorphism, it must be the case that $\One{\Patch} = \varepsilon^*(\One{X})$
  meaning it suffices to show $\varepsilon^*(\One{X}) \WayBelow \varepsilon^*(\One{X})$. By
  Lemma~\ref{prop:perfect-resp-way-below}, it suffices to show
  $\One{X} \WayBelow \One{X}$
  which is immediate by \ref{item:sl-compact}.
\end{proof}

To construct a basis consisting of clopens, we will use the following fact,
which was already mentioned above:

\begin{lemma}\label{prop:complementation}
  The open nucleus $\opennucl{U}$ is the Boolean complement of the closed
  nucleus $\closednucl{U}$.
\end{lemma}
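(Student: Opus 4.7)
The plan is to verify the two defining properties of a Boolean complement in the frame $\Patch(X)$: that $\closednucl{U} \wedge \opennucl{U} = \Zero{\Patch}$ and $\closednucl{U} \vee \opennucl{U} = \One{\Patch}$. First I note that the bottom element of $\Patch(X)$ is the identity nucleus $\mathsf{id} : \opens{X} \to \opens{X}$: it is trivially a \VSCNucleus{}, and inflationarity of every nucleus $j$ gives $\mathsf{id}(V) = V \le j(V)$, so $\mathsf{id}$ lies below all nuclei.

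For the meet, since binary meets of (Scott continuous) nuclei are computed pointwise by Lemma~\ref{defn:nuclei-semilattice} and Lemma~\ref{prop:sc-nuclei-semilattice}, it suffices to show that $(U \vee V) \wedge (U \Rightarrow V) = V$ for every open $V$. The inequality $(\ge)$ is immediate from $V \le U \vee V$ and $V \le U \Rightarrow V$ (the latter by adjointness, since $V \wedge U \le V$). The inequality $(\le)$ follows from distributivity:
\begin{equation*}
  (U \vee V) \wedge (U \Rightarrow V) \quad=\quad (U \wedge (U \Rightarrow V)) \vee (V \wedge (U \Rightarrow V)),
\end{equation*}
where $U \wedge (U \Rightarrow V) \le V$ by adjointness and $V \wedge (U \Rightarrow V) \le V$ trivially. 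Hence the meet nucleus acts as the identity on every $V$, giving $\closednucl{U} \wedge \opennucl{U} = \Zero{\Patch}$.

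For the join, I invoke the explicit formula of Theorem~\ref{defn:sc-join}, which expresses $(\closednucl{U} \vee \opennucl{U})(V)$ as the supremum over all finite \emph{compositions} $k_s$ of $\closednucl{U}$ and $\opennucl{U}$ applied to $V$. It therefore suffices to exhibit one word $s$ whose composition sends $V$ to $\One{X}$. Taking $s$ to be $\closednucl{U}$ followed by $\opennucl{U}$, we compute
\begin{equation*}
  \opennucl{U}(\closednucl{U}(V)) \quad=\quad U \Rightarrow (U \vee V) \quad=\quad \One{X},
\end{equation*}
where the second equality is by adjointness of Heyting implication, using $U \le U \vee V$ to get $\One{X} \wedge U \le U \vee V$. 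Since $\One{X}$ appears in the family of finite compositions applied to $V$, the join is $\One{X}$ pointwise, which is precisely $\One{\Patch}$.

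There is no real obstacle here; the essential insight is the role of the non-pointwise join formula of Theorem~\ref{defn:sc-join}. The pointwise join $(U \vee V) \vee (U \Rightarrow V)$ is not in general $\One{X}$, so the complementation property genuinely relies on the fact that joins in $\Patch(X)$ are obtained by iterating compositions of the nuclei being joined---and it is precisely this iteration that allows $\opennucl{U}$ to ``undo'' the enlargement produced by $\closednucl{U}$.
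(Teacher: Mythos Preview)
The paper does not supply a proof of this lemma, treating it as a standard fact. Your argument is correct: the meet computation is routine Heyting-algebra reasoning, and for the join you rightly exploit the finite-composition formula of Theorem~\ref{defn:sc-join}, observing that the single composite $\opennucl{U}\circ\closednucl{U}$ already returns $\One{X}$.

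One point worth making explicit: Theorem~\ref{defn:sc-join} is stated for families of \emph{Scott continuous} nuclei, so your appeal to it requires $\opennucl{U}$ to lie in $\Patch(X)$, which by the lemma immediately preceding this one holds only when $U$ is compact. That is the only case the paper ever uses (in Theorem~\ref{thm:patch-is-stone} the open nuclei appearing are $\opennucl{K}$ for $K:\CompactOpens{X}$), so there is no genuine gap. If you want the statement for arbitrary $U$, the same composition idea works directly in the poset of all nuclei without assuming any frame structure: any nucleus $j$ with $j\ge\closednucl{U}$ and $j\ge\opennucl{U}$ satisfies $j(V)=j(j(V))\ge\opennucl{U}(\closednucl{U}(V))=\One{X}$, so the binary join of $\closednucl{U}$ and $\opennucl{U}$ exists and equals the top nucleus regardless of Scott continuity.
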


\begin{lemma}\label{lem:johnstones-lemma}\label{lem:last-step}
  Let $X$ be a spectral locale.
  Given any perfect nucleus $j : \Patch(X)$, we have that
  \begin{equation*}
    j = \bigvee_{K: \CompactOpens{X}} \closednucl{j(K)} \wedge \opennucl{K}  =
    \bigvee_{\stackrel{K_1,K_2: \CompactOpens{X}}{K_1 \le j(K_2)}} \closednucl{j(K_1)} \wedge \opennucl{K_2}.
    % =
    % \bigvee \left\{
    %    \closednucl{K_1} \wedge \opennucl{K_2} \mid K_1,K_2 : \CompactOpens{X},\ K_1 \le j(K_2)
    %  \right\}.
  \end{equation*}
\end{lemma}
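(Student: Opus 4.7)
The plan is to reduce both equalities to pointwise comparisons at compact opens using Lemma~\ref{prop:basic-ordering-iff}, then exploit two basic observations about the nuclei $\closednucl{j(K)} \wedge \opennucl{K}$: they always lie below $j$ in $\Patch(X)$, and when $K$ is compact they coincide with $j$ at the argument $K$ itself.

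For the first equality, I would first show $\closednucl{j(K)} \wedge \opennucl{K} \le j$ in $\Patch(X)$ for every compact $K$. Given $V : \opens{X}$, distributivity gives $(j(K) \vee V) \wedge (K \Rightarrow V) = (j(K) \wedge (K \Rightarrow V)) \vee (V \wedge (K \Rightarrow V))$. The right disjunct is $\le V \le j(V)$ by inflationarity, and for the left disjunct I would use that $j$ is inflationary and meet-preserving to write $j(K) \wedge (K \Rightarrow V) \le j(K) \wedge j(K \Rightarrow V) = j(K \wedge (K \Rightarrow V)) \le j(V)$, using the defining inequality $K \wedge (K \Rightarrow V) \le V$ of Heyting implication. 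Taking the join over $K : \CompactOpens{X}$ in $\Patch(X)$ yields $\bigvee_K \closednucl{j(K)} \wedge \opennucl{K} \le j$. For the reverse direction I would apply Lemma~\ref{prop:basic-ordering-iff}: for any compact $C$, the term indexed by $C$ is below the join, so evaluating at $C$ gives $(\closednucl{j(C)} \wedge \opennucl{C})(C) \le \bigl(\bigvee_K \closednucl{j(K)} \wedge \opennucl{K}\bigr)(C)$; the left-hand side simplifies to $(j(C) \vee C) \wedge (C \Rightarrow C) = j(C) \wedge \One{X} = j(C)$, using $C \le j(C)$. This establishes the first equality.

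For the second equality, I would observe that the first family is a subfamily of the second: setting $K_1 = K_2 = K$ is admissible because $K \le j(K)$ by inflationarity, so the first join is bounded above by the second. Combined with the first equality, this yields $j \le \bigvee_{K_1 \le j(K_2)} \closednucl{j(K_1)} \wedge \opennucl{K_2}$. For the opposite inequality I would show that each individual term of the second family lies below $j$. Given compact $K_1, K_2$ with $K_1 \le j(K_2)$, monotonicity and idempotency of $j$ give $j(K_1) \le j(j(K_2)) = j(K_2)$. Then the same distributivity argument as before, applied at any $V : \opens{X}$, reduces to bounding $j(K_1) \wedge (K_2 \Rightarrow V) \le j(K_2) \wedge (K_2 \Rightarrow V) \le j(K_2) \wedge j(K_2 \Rightarrow V) = j(K_2 \wedge (K_2 \Rightarrow V)) \le j(V)$. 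Hence each term of the second family is $\le j$, and so is the join.

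The only delicate point is that joins in $\Patch(X)$ are not computed pointwise in general, so one must not evaluate $\bigvee_K \closednucl{j(K)} \wedge \opennucl{K}$ naively at a compact open. Everything is handled cleanly by invoking Lemma~\ref{prop:basic-ordering-iff} once at the start and by using only the defining property of the join (as an upper bound and least upper bound) together with the nucleus laws for $j$ and the Heyting identity $K \wedge (K \Rightarrow V) \le V$.
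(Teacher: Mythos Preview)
Your proof is correct and shares its core with the paper's: both establish the reverse inequality $j \le \bigvee_K \closednucl{j(K)} \wedge \opennucl{K}$ by evaluating the term indexed by a compact $C$ at $C$ to get $j(C)$, then invoking Lemma~\ref{prop:basic-ordering-iff}. The difference is in the forward inequality. The paper appeals to Johnstone's identity $j = \bigvee_{U:\opens{X}} \closednucl{j(U)} \wedge \opennucl{U}$ (Lemma~II.2.7 of \cite{ptj-ss}), from which each $\closednucl{j(K)} \wedge \opennucl{K} \le j$ is immediate; you instead give the direct pointwise verification via distributivity, meet preservation, and $K \wedge (K \Rightarrow V) \le V$. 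Your route is more self-contained and in fact sidesteps a subtlety: Johnstone's identity is a join indexed by the large type $\opens{X}$, which is not \emph{a priori} available in $\Patch(X)$ in the paper's own predicative setting, so your elementary argument is arguably better aligned with the paper's methodology. The paper also dismisses the second equality as ``clear'', whereas you spell it out; a slightly shorter alternative for bounding each term by $j$ is to observe that $K_1 \le j(K_2)$ gives $j(K_1) \le j(K_2)$, hence $\closednucl{j(K_1)} \wedge \opennucl{K_2} \le \closednucl{j(K_2)} \wedge \opennucl{K_2} \le j$ by the first equality.
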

\begin{proof}
   The second equality in the statement is clear, so let us show the first one.
   We use the fact, proved in \cite[Lemma II.2.7]{ptj-ss}, that for any nucleus $j$ on any locale,
 \begin{equation*}
    j = \bigvee_{U: \opens{X}} \closednucl{j(U)} \wedge \opennucl{U}.
  \end{equation*}
  Suppose now additionally that $X$ is spectral and the nucleus $j$ is  Scott continuous. The inequality $\bigvee_{K: \CompactOpens{X}} \closednucl{j(K)} \wedge \opennucl{K} \leq \bigvee_{U: \opens{X}} \closednucl{j(U)} \wedge \opennucl{U} = j$ is trivial, so let us show the reverse one.
Let $K : \CompactOpens{X}$ and notice that $ \left(\closednucl{j(K)} \wedge \opennucl{K}\right)(K)= (j(K)\vee K)\wedge (K\Rightarrow K)= j(K)$. Therefore
$$j(K)=  \left(\closednucl{j(K)} \wedge \opennucl{K}\right)(K) \leq \left( \bigvee_{K': \CompactOpens{X}} \closednucl{j(K')} \wedge \opennucl{K'} \right) (K),$$
so the required relation follows from Lemma~\ref{lem:nuclei-basic-order}.
\end{proof}

%\begin{lemma}\label{lem:last-step}
%  Let $X$ be a spectral locale.
%  Given any perfect nucleus $j : \Patch(X)$, % we have that
%  \begin{equation*}
%    \bigvee_{K : \CompactOpens{X}} % \closednucl{j(K)} \wedge \opennucl{K}
%    =
 %   \bigvee \left\{
  %      \closednucl{K_1} \wedge \opennucl{K_2} \mid K_1,K_2 : \CompactOpens{X},\ K_1 \le j(K_2)
 %     \right\}.
 % \end{equation*}
% \end{lemma}

\begin{theorem}\label{thm:patch-is-stone}
  $\Patch(X)$ is a Stone locale, for every spectral locale $X$.
\end{theorem}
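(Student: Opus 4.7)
The plan is to verify condition~\ref{item:stone-II} of Theorem~\ref{thm:answer-stone}: that is, to exhibit $\Patch(X)$ as compact and possessing an intensional small basis of clopens. Compactness is immediate from Lemma~\ref{prop:patch-is-compact}, so the real work is to construct the basis.

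The natural candidate is the family $B : \CompactOpens{X}\times \CompactOpens{X}\to \opens{\Patch(X)}$ defined by $B(K_1,K_2) := \closednucl{K_1}\wedge \opennucl{K_2}$. Its index type is small since $\CompactOpens{X}$ is small by~\ref{item:sl-smallness}, and each value is a clopen of $\Patch(X)$: when $K$ is compact, $\opennucl{K}$ is \VScottContinuous{} (hence an element of $\opens{\Patch(X)}$) and, by Lemma~\ref{prop:complementation}, it is the Boolean complement of $\closednucl{K}$; so both $\closednucl{K_1}$ and $\opennucl{K_2}$ are clopens of $\Patch(X)$, and their meet is again a clopen because clopens form a Boolean sublattice.

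To show the family covers an arbitrary $j:\opens{\Patch(X)}$, I would start from the first equality of Lemma~\ref{lem:last-step}, namely $j=\bigvee_{K:\CompactOpens{X}} \closednucl{j(K)}\wedge \opennucl{K}$, and further refine $\closednucl{j(K)}$ as a join of closed nuclei at compact opens. Since $\closednucl{\blank}:\opens{X}\to\opens{\Patch(X)}$ is a frame homomorphism (Lemma~\ref{lem:eps-perfect}) and, by spectrality of $X$, $j(K)=\bigvee_{K'\in \CompactOpens{X},\,K'\le j(K)} K'$, we obtain $\closednucl{j(K)}=\bigvee_{K'\le j(K)} \closednucl{K'}$; applying distributivity and re-indexing then gives $j = \bigvee_{(K_1,K_2):K_1\le j(K_2)} \closednucl{K_1}\wedge \opennucl{K_2}$. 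The indexing subtype is a small subset of $\CompactOpens{X}^2$, the smallness of the defining relation following from local smallness of $\opens{X}$. To meet the directedness requirement of Definition~\ref{defn:int-basis}, I would directify via Remark~\ref{rmk:directification}; clopens are closed under finite joins, so the directified family remains a basis of clopens. Theorem~\ref{thm:answer-stone} then yields the result.

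The main obstacle is identifying the right clopen basis, not proving the covering: the literal statement of Lemma~\ref{lem:last-step} features $\closednucl{j(K_1)}$, which is not \emph{a priori} clopen in $\Patch(X)$ because $j(K_1)$ need not be compact. The key idea is to decompose $\closednucl{j(K_1)}$ further via spectrality, after which each summand $\closednucl{K_1}\wedge \opennucl{K_2}$ is manifestly a meet of two clopens of $\Patch(X)$.
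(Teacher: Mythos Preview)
Your proposal is correct and follows essentially the same route as the paper: the same candidate basis $B(K_1,K_2)=\closednucl{K_1}\wedge\opennucl{K_2}$ indexed by $\CompactOpens{X}^2$, the same covering family $\{(K_1,K_2)\mid K_1\le j(K_2)\}$ derived from Lemma~\ref{lem:last-step}, and the same directification step. The only cosmetic difference is that you invoke Theorem~\ref{thm:answer-stone} to conclude, whereas the paper verifies \ref{item:stl-compact}--\ref{item:stl-smallness} directly; and you are in fact more explicit than the paper about why one must pass from $\closednucl{j(K)}$ to $\bigvee_{K_1\le j(K)}\closednucl{K_1}$ via spectrality and preservation of joins by $\closednucl{\blank}$, a step the paper leaves implicit in its appeal to Lemma~\ref{lem:last-step}.
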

\begin{proof}
  Compactness \ref{item:stl-compact} was given in Lemma~\ref{prop:patch-is-compact}.
  For \ref{item:stl-covering}, let $j : \Patch(X)$. We take the intensional basis
  \begin{align*}
    B \quad&\is\quad \left( \closednucl{K_1} \wedge \opennucl{K_2} \right)_{(K_1, K_2) : (\CompactOpens{X} \times \CompactOpens{X})}\\
    I_j \quad&\is\quad \left(\left(\SigmaType{K_1, K_2}{\CompactOpens{X}}{K_1 \le j(K_2)}\right), \gamma \right)\ \text{where},\\
    \gamma(K_1, K_2)\quad&\is\quad \closednucl{K_1} \wedge \opennucl{K_2}
  \end{align*}
  Even though this family is not a priori directed, we know that it can be
  directified as explained in Remark~\ref{rmk:directification}. We therefore
  obtain a specified, small, intensional basis of clopens by
  Lemma~\ref{lem:last-step}. We denote the
  directed form of the basis by $B^{\uparrow}$ which is small by
  Remark~\ref{rmk:directification} since $\CompactOpens{X}$ is small by
  \ref{item:sl-smallness}. Since in a compact locale, clopens are compact by
  Lemma~\ref{prop:well-inside-implies-way-below}, and so the clopens fall in the
  basis by Lemma~\ref{lem:cmp-bsc}. Therefore, we get the equivalence of types
  \[\Clopens{X} \EquivSym \image{B^{\uparrow}}{_},\]
  which concludes \ref{item:stl-smallness} by Lemma~\ref{prop:img-small}.
\end{proof}

\subsection{The universal property of \PatchTEXorPDF{}}
\label{subsec:universal}

We now show that $\Patch$ is the right adjoint to the inclusion $\Stone \hookrightarrow \Spec$.

\begin{lemma}\label{lem:adjoint-to-induced-map}
    Given any spectral map $f : X \to A$ from a Stone locale into a
  \VSpectralLocale{},
   define a map $\overline{f}^* : \opens{\Patch(A)} \to \opens{X}$ by
  \begin{equation*}
    \bar{f}^*(j) \quad\is\quad \bigvee_{K : \CompactOpens{A}} f^*(j(K)) \wedge \neg f^*(K).
  \end{equation*}
  Then, the map $\bar{f}_* : \opens{X} \to \opens{\Patch(A)}$ defined by
  \begin{equation*}
    \bar{f}_* (V) \quad\is\quad f_* \circ \closednucl{V} \circ f^*
  \end{equation*}
  is the right adjoint of $\bar{f}^*$.
\end{lemma}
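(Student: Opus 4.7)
The plan is to establish the adjunction $\bar{f}^* \dashv \bar{f}_*$ by showing that, for every $j : \opens{\Patch(A)}$ and every $V : \opens{X}$, we have $\bar{f}^*(j) \le V$ if and only if $j \le \bar{f}_*(V)$. The whole argument pivots on the observation that, since $f$ is spectral and $X$ is Stone, for every compact open $K$ of $A$ the open $f^*(K)$ is compact in $X$ and hence clopen by Corollary~\ref{cor:compact-equiv-clopen}; in particular, $\neg f^*(K)$ is a genuine Boolean complement, so for any open $a$ of $X$ we have the equivalence
\[
a \wedge \neg f^*(K) \le V \quad\Longleftrightarrow\quad a \le V \vee f^*(K),
\]
which is the standard fact that in a Boolean algebra, meeting with the complement trades off with joining with the element.

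Before addressing the adjunction itself, I would briefly check that $\bar{f}_*(V) = f_* \circ \closednucl{V} \circ f^*$ really lands in $\opens{\Patch(A)}$: inflationarity follows from $U \le f_*(f^*(U)) \le f_*(V \vee f^*(U))$; binary meet preservation uses that $f^*$ and $f_*$ preserve meets, together with the distributive law for $V \vee (\blank)$; idempotency uses $f^* f_* \le \mathrm{id}$ together with $V \vee V = V$; and Scott continuity follows since $f$ is perfect (Lemma~\ref{lem:perfect-iff-spectral}, using that $A$ has a small basis), $\closednucl{V}$ is Scott continuous, and $f^*$ is a frame homomorphism.

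For the forward direction, suppose $\bar{f}^*(j) \le V$. By Lemma~\ref{lem:nuclei-basic-order} applied in $\Patch(A)$, it suffices to show $j(K) \le \bar{f}_*(V)(K) = f_*(V \vee f^*(K))$ for every compact open $K$ of $A$, which by the adjunction $f^* \dashv f_*$ is equivalent to $f^*(j(K)) \le V \vee f^*(K)$. From the hypothesis, $f^*(j(K)) \wedge \neg f^*(K)$ is one of the summands of the join defining $\bar{f}^*(j)$, so it is bounded above by $V$, and the clopenness equivalence above (with $a \is f^*(j(K))$) yields the desired inequality.

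For the backward direction, suppose $j \le \bar{f}_*(V)$. To show $\bar{f}^*(j) \le V$, it suffices to bound each summand: for each compact open $K$ of $A$, the hypothesis gives $j(K) \le f_*(V \vee f^*(K))$, hence $f^*(j(K)) \le V \vee f^*(K)$ by adjointness, and then the clopenness equivalence in the opposite direction yields $f^*(j(K)) \wedge \neg f^*(K) \le V$. Taking the join over all compact opens $K$ gives $\bar{f}^*(j) \le V$. The main obstacle here is essentially bookkeeping: the argument is symmetric once one observes that both directions reduce, via the adjunction and the basic ordering criterion, to the clopenness trade-off, and I would therefore keep the write-up focused on highlighting that single pivot.
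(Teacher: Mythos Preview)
Your proposal is correct and follows essentially the same route as the paper: both directions of the biconditional are reduced, via the adjunction $f^* \dashv f_*$ and Lemma~\ref{lem:nuclei-basic-order}, to the Boolean trade-off $a \wedge \neg f^*(K) \le V \Leftrightarrow a \le V \vee f^*(K)$, which is available precisely because $f^*(K)$ is clopen in the Stone locale $X$. Your more detailed check that $\bar{f}_*(V)$ is a \VSCNucleus{} is something the paper only asserts, so your write-up is in fact slightly more complete on that point.
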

\begin{proof}
 First, note that $f_* \circ \closednucl{V} \circ f^*$ is indeed a \VSCNucleus{},
  and both $\bar{f}^*$ and $\bar{f}_*$ are clearly monotone. Let us first
  prove the forward implication.
  Assume that for $j : \opens{\Patch(A)}$ and $V : \opens{X}$
  the relation $\bar f^*(j)\leq V$ holds.
  By Lemma~\ref{lem:nuclei-basic-order}, in order to show that $j \leq
  \bar{f}_*(V)$, it suffices to show that for any compact open $K :
  \CompactOpens{A}$, the inequality $j(K) \leq \bar{f}_*(V)(K)$ holds. Hence, let $K
  : \CompactOpens{A}$, and note that
  \begin{equation*}
    f^*(j(K))\wedge \neg f^*(K) \leq \bar f^*(j)\leq V
  \end{equation*}
  Notice that $f^*(K)$ is clopen, by the fact that $f^*$ is a spectral map and
  Lemma~\ref{prop:way-below-implies-well-inside}.
  It is therefore complemented in the lattice $\opens{X}$, and so
  we have $f^*(j(K))\leq V\vee f^*(K) = \closednucl{V}(f^*(K))$, which by
  adjunction yields $j(K)\leq f_*(\closednucl{V}(f^*(K)))= \bar f_{*}(K)$, as
  required.

  Let us now show the reverse implication.
  Let $ j : \opens{\Patch(A)}$ and $V : \opens{X}$
  and assume that $j\leq \bar f_*(V)$.
  Once again, by the definition of the ordering on $\opens{\Patch(A)}$, for all
  $K : \CompactOpens{A}$ we have $j(K) \leq f_*(V \vee f^*(K))$, which by adjunction
  equivalently yields $f^*(j(K))\leq V \vee f^*(K)$. Since $f^*(K)$ is clopen, and
  hence complemented
  in the lattice $\opens{X}$ it follows that $f^*(j(K)) \wedge \neg f^*(K) \leq V$. Hence,
  $\bar f^*(j)\leq V$.
\end{proof}

\begin{theorem}\label{thm:main}
  Given any spectral map $f : X \to A$ from a Stone locale into a
  \VSpectralLocale{}, there exists a unique spectral map $\bar{f} : X \to \Patch(A)$
  satisfying $\varepsilon \circ \bar{f} = f$, as illustrated in the following diagram in
  $\Spec$:
  \begin{center}
    \begin{tikzcd}
      X \arrow[d, swap, "f"] \arrow[dr, dashed, "\bar{f}"] & \\
      A & \Patch(A) \arrow[l, "\varepsilon"]
    \end{tikzcd}
  \end{center}
\end{theorem}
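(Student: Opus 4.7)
The plan is to use the candidate inverse image $\bar{f}^*$ produced by Lemma~\ref{lem:adjoint-to-induced-map}, verify it is actually a frame homomorphism, check that the resulting continuous map $\bar{f}$ is spectral and fits into the triangle, and finally derive uniqueness from the decomposition of perfect nuclei in Lemma~\ref{lem:last-step}.

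First, I would establish that $\bar{f}^* : \opens{\Patch(A)} \to \opens{X}$ is a frame homomorphism. Preservation of small joins is immediate from the existence of the right adjoint $\bar{f}_*$ shown in Lemma~\ref{lem:adjoint-to-induced-map}. Preservation of the top is obtained by instantiating the defining join at $K = \Zero{A}$, which contributes the summand $\neg f^*(\Zero{A}) = \One{X}$. For binary meets, the inequality $\bar{f}^*(j \wedge k) \leq \bar{f}^*(j) \wedge \bar{f}^*(k)$ is immediate by monotonicity; the reverse inequality is the main obstacle. Given compact opens $K_1, K_2 : \CompactOpens{A}$, I would take $K \is K_1 \vee K_2$, which is compact by the standard directed-join argument on the way-below relation. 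Since $f$ is spectral, $f^*(K_1), f^*(K_2), f^*(K)$ are compact opens of the Stone locale $X$, hence clopens by Corollary~\ref{cor:compact-equiv-clopen}, so the de~Morgan identity $\neg f^*(K_1 \vee K_2) = \neg f^*(K_1) \wedge \neg f^*(K_2)$ applies. Together with monotonicity of $j$ and $k$, this lets each $(K_1, K_2)$-summand of $\bar{f}^*(j) \wedge \bar{f}^*(k)$ be dominated by the $K$-summand of $\bar{f}^*(j \wedge k)$.

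For spectrality of $\bar{f}$, since $\Patch(A)$ is Stone by Theorem~\ref{thm:patch-is-stone}, its compact opens coincide with its clopens by Corollary~\ref{cor:compact-equiv-clopen}; the frame homomorphism $\bar{f}^*$ preserves Boolean complements and therefore clopens; and in the compact locale $X$, clopens are compact by Lemma~\ref{prop:well-inside-implies-way-below}. Commutativity of the triangle reduces to the frame equation $\bar{f}^*(\closednucl{U}) = f^*(U)$ for every $U : \opens{A}$, and the computation $\bar{f}^*(\closednucl{U}) = \bigvee_K (f^*(U) \vee f^*(K)) \wedge \neg f^*(K) = f^*(U) \wedge \bigvee_K \neg f^*(K) = f^*(U)$ again uses the $K = \Zero{A}$ summand.

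Finally, for uniqueness, suppose $g : X \to \Patch(A)$ is spectral with $\varepsilon \circ g = f$, so that $g^*(\closednucl{U}) = f^*(U)$ for every $U : \opens{A}$. For any compact $K : \CompactOpens{A}$, the open $f^*(K)$ is compact, and hence clopen in $X$; since $\opennucl{K}$ is the Boolean complement of $\closednucl{K}$ in $\Patch(A)$ by Lemma~\ref{prop:complementation} and frame homomorphisms preserve complements, $g^*(\opennucl{K}) = \neg f^*(K)$ is forced. Applying $g^*$ to the decomposition $j = \bigvee_K \closednucl{j(K)} \wedge \opennucl{K}$ of Lemma~\ref{lem:last-step} then yields $g^*(j) = \bigvee_K f^*(j(K)) \wedge \neg f^*(K) \equiv \bar{f}^*(j)$, so $g = \bar{f}$.
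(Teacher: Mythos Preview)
Your proposal is correct and follows essentially the same route as the paper: the same explicit formula for $\bar{f}^*$, the same right-adjoint argument for join preservation via Lemma~\ref{lem:adjoint-to-induced-map}, the same $K=\Zero{A}$ trick for the top element, the same $K_1\vee K_2$ and de~Morgan computation for binary meets, the same commutativity calculation, and the same use of Lemma~\ref{lem:last-step} for uniqueness. The only differences are cosmetic: the paper establishes uniqueness first and existence second, and it omits the explicit spectrality verification you give (relying implicitly on the earlier remark that continuous maps between Stone locales automatically preserve clopens, hence compacts).
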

\begin{proof}
Assume that a locale map $\bar f : X \to \Patch(A)$ satisfies the condition in the theorem. Then, for any $j : \opens{\Patch(A)}$ one has
 \begin{align*}
   \bar f^*(j)
  &\quad=\quad \bar f^* \left( \bigvee_{K : \CompactOpens{A}}\closednucl{j(K)} \wedge \opennucl{K}\right) && [\text{Lemma~\ref{lem:johnstones-lemma}}]\\
  &\quad=\quad  \bigvee_{K : \CompactOpens{A}} \bar f^*\left( \closednucl{j(K)} \wedge \opennucl{K}\right) && [\text{$\bar f^*$ preserves small joins}]    \\
  &\quad=\quad  \bigvee_{K : \CompactOpens{A}} \bar f^*\left( \closednucl{j(K)}\right) \wedge \neg \bar f^*\left(\closednucl{K}\right) && [\text{$\bar f^*$ preserves binary meets and complements}] \\
  & \quad=\quad  \bigvee_{K : \CompactOpens{A}}  f^*\left( j(K)\right) \wedge \neg  f^*\left( K\right) && [\text{commutativity of the diagram}]
  \end{align*}
and hence $\bar f$ is uniquely determined. If we now   define a monotone map $\overline{f}^* : \opens{\Patch(A)} \to \opens{X}$ by
$$
    \bar{f}^*(j) \quad\is\quad \bigvee_{K : \CompactOpens{A}} f^*(j(K)) \wedge \neg f^*(K),
$$ it is easy to show it preserves the top element (namely the top nucleus with constant value $\One{A}$) because $\ZeroSym_A$ is compact. It also preserves binary (pointwise) meets as
 \begin{align*}
   & \bar f^*(j_1)\wedge \bar f^*(j_2)\\
   &=  \bigvee_{K_1,K_2 : \CompactOpens{A}}  f^*\left( j_1(K_1)\right) \wedge \neg  f^*\left( K_1\right) \wedge f^*\left( j_2(K_2)\right) \wedge \neg  f^*\left( K_2\right) && [\text{distributivity}]   \\
  &=   \bigvee_{K_1,K_2 : \CompactOpens{A}}  f^*\left( j_1(K_1) \wedge j_2(K_2)\right) \wedge \neg  f^*\left( K_1\right)  \wedge \neg  f^*\left( K_2\right) && [\text{$f^*$ preserves binary meets}]    \\
  &\leq   \bigvee_{K_1,K_2 : \CompactOpens{A}}  f^*\left( j_1(K_1\vee K_2) \wedge j_2(K_1\vee K_2)\right) \wedge \neg  f^*\left( K_1\right)  \wedge \neg  f^*\left( K_2\right)  && [\text{monotonicity}] \\
  &= \bigvee_{K_1,K_2 : \CompactOpens{A}}  f^*\left( (j_1\wedge j_2)(K_1\vee K_2)\right) \wedge \neg  f^*\left( K_1\right)  \wedge \neg  f^*\left( K_2\right) && ~ \\
  &= \bigvee_{K_1,K_2 : \CompactOpens{A}}  f^*\left( (j_1\wedge j_2)(K_1\vee K_2)\right) \wedge \neg  f^*\left( K_1 \vee K_2\right)  && [\text{De Morgan law}]  \\
  &= \bar f^*(j_1\wedge j_2). && [\text{$\CompactOpens{X}$ closed under $(\blank) \wedge (\blank)$}]
  \end{align*}

 Moreover, Lemma~\ref{lem:adjoint-to-induced-map} and Theorem~\ref{thm:aft} ensure that $\bar f^*$ preserves small joins and so it is a frame homomorphism.

Let us finally show that $\bar f$ makes the diagram commute. Since compact opens form a small basis of $A$, it suffices to show that $\bar f^*(\closednucl{K})=f^*(K)$ for any $K : \CompactOpens{A}$. Let $K : \CompactOpens{A}$ and note that
 \begin{align*}
   \bar f^*(\closednucl{K})
  &\quad=\quad  \bigvee_{K' : \CompactOpens{A}}  f^*\left( K\vee K'\right) \wedge \neg  f^*\left( K'\right)    \\
  &\quad=\quad   \bigvee_{K' : \CompactOpens{A}}  f^*\left( K\right) \wedge \neg  f^*\left( K'\right)  && [\text{$f^*$ preserves binary joins}]    \\
  &\quad= \quad    f^*(K)\wedge  \bigvee_{K' : \CompactOpens{A}}  \neg  f^*\left( K'\right)   && [\text{distributivity}] \\
  & \quad=\quad    f^*(K)\wedge  \One{X}&& [\Zero{X}\ \text{is compact}] \\
   & \quad=\quad  f^*(K),
  \end{align*}
  as required.
\end{proof}
\section{Summary and discussion}

We have constructed the patch locale of a spectral locale in the predicative and
constructive setting of univalent type theory. Furthermore, we have shown that
the patch functor \[\Patch : \Spec \rightarrow \Stone\] is the right adjoint to the
inclusion $\Stone \hookrightarrow \Spec$ which is to say that $\Patch$ exhibits the category
$\Stone$ as a coreflective subcategory of $\Spec$.
As we have elaborated on in Section~\ref{sec:spec-and-stone}, answering this
question in a predicative setting involves several new ingredients, compared to~\cite{mhe-patch-short,mhe-patch-full}:
\begin{enumerate}
  \item We have reformulated the notions of spectrality and Stone-ness in our
    predicative type-theoretic setting and have shown that crucial topological
    facts about these notions remain valid under these reformulations.
  \item We have developed several notions that capture the locale-theoretic
    notion of a basis and have shown their equivalences in
    Theorem~\ref{thm:answer} and in Theorem~\ref{thm:answer-stone}.
  \item In \MHELastName{}'s~\citeyearpar{mhe-patch-short} construction of the
    patch locale, the proof of the universal property relies on the existence of the
    frame of \emph{all} nuclei. As it is not clear that the poset of all nuclei
    can be shown to form a frame predicatively, we developed a new proof of the universal
    property using Lemma~\ref{lem:adjoint-to-induced-map}, which is completely
    independent of the existence of the frame of all nuclei.
\end{enumerate}

We have formalized all of our development, most importantly
Theorem~\ref{thm:patch-is-stone} and Theorem~\ref{thm:main}. The formalization
has been carried out by the third-named
author~\citep{type-topology-locale-theory} as part of the \VTypeTopology{}
library~\citep{type-topology}.

In previous work~\citep{mhe-patch-short, mhe-patch-full}, which forms the basis
of the present work, the patch construction was used to
\begin{enumerate}
  \item\label{item:coref-1}%
    exhibit $\Stone$ as a coreflective subcategory of $\Spec$, and
  \item\label{item:coref-2}%
    exhibit the category of compact regular locales and continuous maps as a
    coreflective subcategory of of stably compact locales and perfect maps.
\end{enumerate}
In our work, we have focused on item (\ref{item:coref-1}). The question of
taking a predicative approach to item (\ref{item:coref-2}) was previously
tackled by \cite{coq-zhang} using formal topology. We conjecture that it is
possible to instead use the approach we have presented here, namely, working
with locales with small bases and constructing the patch as the frame of
\VSCNuclei{}. It should also be possible to show predicatively that the category
of small distributive lattices is dually equivalent to the category of spectral
locales as defined in this paper.

\section*{Acknowledgement}

The first-named author acknowledges support from the Basque Government (grant
IT1483-22 and a postdoctoral fellowship of the Basque Government, grant
POS-2022-1-0015).

\bibliographystyle{msclike}
\bibliography{references}

\end{document}